\algrenewcommand{\algorithmiccomment}[1]{\hfill//\hskip.1em\emph{#1}}
\newtheorem{definition}{Definition}[section]
\newtheorem{lemma}[definition]{Lemma}
\newtheorem{proposition}[definition]{Proposition}
\newtheorem{theorem}[definition]{Theorem}
\newtheorem{remark}[definition]{Remark}
\newtheorem{question}[definition]{Question}
\newcommand{\ord}{{\operatorname{ord}}}
\newcommand{\val}{{\operatorname{val}}}
\def\Q{{\mathbb Q}}
\def\Z{{\mathbb Z}}
\def\Qbar{{\overline{\mathbb Q}}}
\def\Id{{\operatorname{Id}}}
\begin{document}

\title[Minimization of differential equations and algebraic values of $E$-functions]{Minimization of differential equations \\ and algebraic values of $E$-functions}

\author[Alin Bostan]{Alin Bostan}
\address{Alin Bostan, Inria, Universit\'e Paris-Saclay, 1 rue Honor\'e d'Estienne d'Orves, 91120 Palaiseau, France.}
\email{alin.bostan@inria.fr}

\author[Tanguy Rivoal]{Tanguy Rivoal}
\address{Tanguy Rivoal, Institut Fourier, CNRS et Universit\'e Grenoble Alpes, CS 40700, 38058 Grenoble cedex~9.}
\email{tanguy.rivoal@univ-grenoble-alpes.fr}

\author[Bruno Salvy]{Bruno Salvy}
\address{Bruno Salvy, Univ Lyon, EnsL, UCBL, CNRS, Inria,  LIP, F-69342, LYON Cedex 07, France.}
\email{bruno.salvy@inria.fr}

\date{\today}

\begin{abstract}
	A power series being given as the solution of a linear differential 
	equation with appropriate initial conditions, minimization consists in 
	finding a non-trivial linear differential equation of minimal order 
	having this power series as a solution. This problem exists in both 
	homogeneous and inhomogeneous variants; it is distinct from, but related to, 
	the classical problem of factorization of differential operators. 
	Recently, minimization has found applications in Transcendental Number 
	Theory, more specifically in the computation of non-zero algebraic points where 
	Siegel's $E$-functions take algebraic values. We present
	algorithms and implementations for these questions, and discuss examples 
	and experiments.
\end{abstract}

\subjclass[2010]{68W30, 11J81, 16S32, 34M15, 33F10}

\keywords{Linear differential operators, Minimization, Factorization,
Desingularization, $E$-functions, Siegel-Shidlovskii theorem, Beukers' algorithm,
Adamczewski-Rivoal algorithm}

\maketitle

\section{Introduction}
\subsection{Minimization} 
A linear differential equation (LDE)
\begin{equation}\label{eq:defL}
{\mathcal L}(y(z)) \coloneq	a_r(z)y^{(r)}(z)+\dots+a_0(z)y(z)=0
\end{equation}
with polynomial coefficients $a_i(z)$ in $\Q[z]$ is given, together
with initial conditions specifying uniquely a formal power series solution~$S\in\Q[[z]]$, i.e. ${\mathcal L}(S(z))=0$.
In its homogeneous variant, the problem of \emph{minimization} is to find a homogeneous linear differential equation 
\begin{equation}\label{eq:defM}
{\mathcal M}(y(z)) \coloneq b_m(z)y^{(m)}(z)+\dots+b_0(z)y(z)=0
\end{equation}
of minimal possible order~$m$, with polynomial coefficients $b_j(z)$ in~$\Q[z]$ and also having~$S$ as a solution. In the inhomogeneous version, the input is the same, but in the output, a non-zero polynomial right-hand side in~$\Q[z]$ is also possible, which may allow for the existence of an equation of even smaller order. 

Both these problems exist for other fields of coefficients and a large part of our discussion extends to such situations; we focus here on the case of the field~$\Q$  to keep the discussion simple and reflect more closely the capabilities of our implementation.

When the origin is an \emph{ordinary point} of the input equation ${\mathcal L}(y(z))=0$, i.e. $a_r(0) \neq 0$,
initial conditions are given as the vector of values $(S(0),\dots,S^{(r-1)}(0))$ in $\Q^r$.
Otherwise, the origin is called a \emph{singularity} of the equation~\eqref{eq:defL}. 
Equation~\eqref{eq:defL} may still have power series solutions even in the singular case.
The definition of initial conditions in this case is more delicate; it is discussed in \cref{ssec:indicial}.
Interestingly, when the origin is singular, it is even possible that \eqref{eq:defL} possesses a basis of solutions in $\Q[[z]]$; in this case the origin is called an \emph{apparent singularity}. 

\subsection{Algebraic values of \texorpdfstring{$E$}{E}-functions} 
Minimization has a nice application to transcendence theory,
more precisely to the determination of 
algebraic values taken by $E$-functions at algebraic points. 
$E$-functions are entire functions with special arithmetic properties; 
in particular, they satisfy linear differential equations over
$\Qbar(z)$ and their Taylor series at~$0$ belong to $\Qbar[[z]]$.
They have
been introduced  by Siegel in 1929 as a generalization of the exponential
function, and have  been studied in depth by Siegel, 
Shidlovskii, Nesterenko, Andr\'e, Beukers and others. 
We refer to \cref{sec:E-functions} for their
precise definition, statements of
results and bibliographic references. 

Adamczewski and Rivoal
have given an algorithm~\cite{AdamczewskiRivoal2018} that determines the \emph{finite} list of
algebraic numbers $\alpha$ such that $f(\alpha)$ is also algebraic, given as input an
$E$-function $f(z)$ (represented by a linear differential equation and sufficiently many initial terms).
The first two steps of this algorithm rely on the computation of a
minimal homogeneous linear differential equation and of a minimal
linear differential inhomogeneous equation for~$f$.

Another question in number theory concerns the algebraicity
of special values of analytic functions whose Taylor series at~0
belongs to~$\Qbar[[z]]$ and 
satisfies a Mahler equation over
$\Qbar(z)$, i.e., an equation
$\sum_{j=0}^d p_j(z)f(z^{r^j})=0$ where $p_j(z)\in \Qbar[z]$. There
exists an approach based on a similar but different type of
 minimization~%
\cite{AdFaPLMS,AdFa18}, that is not considered here.

\subsection{Relation of minimization to factorization of linear
differential operators}
The problems of factorization and minimization are closely related
since they are both concerned with finding (right) factors of linear
differential operators. But they are different problems. For instance,
an equation can be minimal even when the operator factors. A simple
example is given by the equation $(1-z)y''-y'=0$ and its power series
solution $S=\ln(1-z)$. The corresponding operator clearly has
$\partial_z \coloneq \frac{d}{dz}$ as a right factor of order~1, but
no homogeneous equation of order~1 can have a solution with a
logarithmic singularity. Also, in general, a linear differential
operator may have infinitely many factorizations and the problem of
minimization is to find a minimal (not necessarily irreducible) right
factor that vanishes at the solution~$S$. Thus one cannot simply use
an existing implementation of a factorization algorithm in order to
solve the minimization problem. Even in the case of finitely many
factorizations, minimization can be much simpler than factorization;
this is illustrated by an example in \cref{sec:ex-factor}.

Still, factorization and minimization share many algorithmic tools. Indeed, the algorithm we present in~\cref{sec:minimization} is obtained by combining sub-algorithms of van~Hoeij's description of his factorization algorithms~\cite{Hoeij1997a,Hoeij1997b}, exploiting the fact that 
the situation of minimization is made easier by the extra information provided by the input power series~$S$. 
Furthermore, in applications, it is sometimes possible to take advantage of further structure that the minimal operator is known to possess, such as being Fuchsian (i.e., having only regular singularities, see \S\ref{ssec:singularities}), or having at most one irregular singularity, e.g. at infinity as happens for $E$-functions (see \cref{sec:E-functions}). 

\subsection{Notes on the history of factorization algorithms}
The tools used in factorization and minimization algorithms have a
convoluted history.
Fabry's doctoral thesis~\cite{Fabry1885} is well-known for having completed the classification of the general form that can be taken by formal solutions of linear differential equations at their singularities.
Quite surprisingly, it seems mostly forgotten that the last part of
his thesis (from~Section~32 page~86, to the end,
page~105), Fabry uses the classification result to design a
factorization algorithm for LDEs with rational function coefficients. The principle is to perform a local analysis at each singularity and try all possible choices of local behaviors for the factor under construction. For instance, at any regular singular point, the roots of the indicial equations of the factor have to be roots of the indicial equations of the equations (see \cref{lemma:rightfact}). At irregular singular points, the exponentials must also be chosen from those of the formal solutions. Then, what remains is to find the apparent singularities. In the case of a Fuchsian factor (Sections~32--34), Fabry uses Fuchs' relation (see \S\ref{ssec:Fuchs}) to determine possible roots of the indicial equations at the apparent singularities and then a local  expansion of the logarithmic derivative of the Wronskian of local solutions at a well-chosen singular point reconstructs the required information. This is not completely general as a well-chosen point has to exist; otherwise Fabry introduces undetermined coefficients that will have to be determined later as solutions of a polynomial system. In the non-Fuchsian case (Sections 35--36), he does not have a complete solution in his thesis, but a reduction result showing that  finding a right factor amounts to bounding the number of apparent singularities of that factor.
A few years later~\cite{Fabry1888a,Fabry1888}, he addresses the non-Fuchsian case by observing that the relevant information can be found in the expansion of the logarithmic derivative of the Wronskian at infinity. This anticipates by one century the generalized Fuchs relation of Bertrand and Beukers~\cite[Th. 3]{BertrandBeukers1985}, and its more precise form by Bertrand and Laumon~\cite[A.2]{Bertrand88} described in~\cref{subsec:degree-bounds}.
Again, to make the algorithm complete, it may be the case that undetermined coefficients are needed.
To summarize, Fabry's works contain the first proof that factoring LDEs is algorithmically decidable.

Another approach leading to a general and complete algorithm
 was started by
Markov \cite{Markoff1891b}\footnote{Markov~\cite{Markoff1891b} is
aware of Fabry's note~\cite{Fabry1888}; he writes ``It seems to me that this question can be solved by simpler considerations than those of Mr. Fabry.''} and developed
by Bendixson~\cite{Bendixson1892} and Beke~\cite{Beke1894}\footnote{Most modern references call this approach ``Beke's
algorithm''. Beke~\cite{Beke1894} credits Bendixson~\cite{Bendixson1892} in the
footnote at the end of his article. Both Bendixson and Beke seem
to be unaware of Fabry's and Markov's works.}.
The idea is to construct a $k$th exterior power of the linear differential operator to be factored. This operator must cancel the logarithmic derivative of the Wronskian of the solutions of a factor of order~$k$ (other linear differential equations are constructed for the next coefficients).
One then looks for ``hyperexponential solutions'' (that is, for solutions~$y$ such that~$y'/y$ is rational; these correspond to right factors of order~1). This leads to looking for a rational solution of the associated Riccati equation. As in Fabry's method\footnote{This approach to finding hyperexponential solutions was proposed by Markov~\cite{Markoff1891a,Markoff1891c} and detailed by Bendixson~\cite{Bendixson1892} and Beke~\cite{Beke1894}; a different and more general method is claimed by Painlevé~\cite{Painleve1891}.}%
, this is solved by finding finitely many possibilities at each singularity, gluing them in all possible ways and finding the missing part as a rational solution of  a linear differential equation~\cite[Prop.~4.9]{PuSi03}. 

General bounds on the degree of the coefficients of the factors can be
deduced from both methods; this has been done by Grigoriev for the
Markov--Bendixson--Beke algorithm~\cite[Theorem 1.2]{Grigoriev90} (note
however that the bound is only asymptotic) and by Singer for the
Fabry-type approach~\cite[Lemma 3.7]{Singer93} via the generalized
Fuchs relation in~\cite{BertrandBeukers1985,Bertrand88} (a more
precise version can be found in~\cite[Theorem 1]{BoRiSa21}). The only
algorithmic complexity result we are aware of in the area of
factorization algorithms is due to Grigoriev~\cite[Theorem 1.1]
{Grigoriev90}; it is deduced by studying in depth (improvements of) the Markov--Bendixson--Beke approach.

On the algorithmic front, van Hoeij~\cite{Hoeij1997a} showed that in
many cases, using bounds instead of trying all possible combinations
leads to an improvement to Fabry's approach (which he rediscovered).
This does not give a complete algorithm and the missing cases are
handled by Beke's algorithm. A precise exposition of the computation
of these bounds is the topic of our \Cref{subsec:degree-bounds}. As
for the Markov--Bendixson--Beke approach, various practical
improvements have been proposed in the literature~%
\cite{Schwarz89,Bronstein92,Bronstein94,Tsarev94,Hoeij1997a,ClHo04,JoKaMe13}. Even with these improvements, the Markov--Bendixson--Beke algorithm is not competitive for operators of order larger than~5. In practice, it is outperformed by Fabry-type algorithms such as van Hoeij's~\cite{Hoeij1997a}. ({Another
practical, but incomplete, algorithm is the \emph{eigenring method}, a
Berlekamp-style algorithm introduced by Singer~\cite{Singer96} and
improved by van Hoeij~\cite{Hoeij96}.)

For additional historical information on early contributions on
factoring (by Painlev{\'e}, Fabry, Markov, Bendixson, Beke, etc.) the
reader is invited to consult the following additional references:
\cite[Chap.~III and Chap. IV, \S176, \S177]{Schlesinger1897} for
details of (a variant of) the Markov--Bendixson--Beke algorithm;
\cite[Chap. II, \S10]{Hilb1915} for historical aspects; 
\cite[Letters IX--XII]{Ogigova67} for letters
sent by Hermite to Markov and comments about them; 
% Letter X mentions Painlevé's claim for priority upon Markoff's algorithm for hyperexponential solutions~\cite{Painleve1891}, while letter XI mentions Markoff's ``simpler solution'' to ``Fabry's problem''.
\cite[p.~61--123]{Schlesinger1909} for a remarkable bibliography
covering the ``golden age''(1865--1907) of the
theory of linear differential equations.

\subsection{Previous work on minimization}
In contrast to factorization,
it is much more difficult to locate a similar algorithm for
minimization in the literature.

In~\cite[\S6]{Hoeij1997a},
van Hoeij describes an algorithm 
(called ``Construct~$R$'') that solves the following problem:
given a linear differential operator ${\mathcal S}$ with coefficients in $\Q((z))$,
find a non-trivial
right factor of an operator ${\mathcal L}$ with coefficients
in~$\mathbb Q(z)$ known to be a left multiple
of ${\mathcal S}$.
Our minimization algorithm is very close in spirit to that algorithm.
From this perspective, we could say that the first minimization algorithm we are aware of lies ``between the lines'' of~\cite[\S6]{Hoeij1997a}. 

A different, symbolic-numeric, approach to factoring LDEs was proposed
by van der Hoeven~\cite{Hoeven07}. Although minimization is never
explicitly considered in this article, one could consider that it is
implicitly solvable by concatenating several statements from its \S3.3
and \S3.4.
In the same spirit,  the more recent work by Chyzak, Goyer and
Mezzarobba~\cite[\S4]{ChGoMe22} makes this much more explicit,
for Fuchsian input operators~${\mathcal L}$. For instance, a numerical
criterion of
minimality is provided in the Fuchsian case by a numerical computation
of the
monodromy matrices, see lines 1 and~2 of Algorithm~1 
in~\cite{ChGoMe22}.

The proof of \cite[Proposition 8.4]{BoBoKaMe16} 
contains a minimization proof on an explicit and challenging example;
this proof
was the original inspiration of our paper.

Adamczewski and Rivoal proposed in \cite[\S3]{AdamczewskiRivoal2018} a method for minimization based on two ingredients: (i) a priori degree bounds for right factors of ${\mathcal L}$ and (ii) bounds on the order at $z=0$ of linear combinations with coefficients in $\Q[z]$ of $S$ and its derivatives. For (ii), they use multiplicity estimates due to Bertrand and Beukers~\cite{BertrandBeukers1985}, and further refined and made completely explicit by Bertrand, Chirskii and Yebbou~\cite{BeChYe04}. 
For~(i), they use Grigoriev's estimates from \cite[Theorem 1.2]{Grigoriev90}. However, Grigoriev's result is only asymptotic. We gave an explicit and effective bound in~\cite{BoRiSa21}. It is fair to say that the combination of \cite[\S3]{AdamczewskiRivoal2018} and~\cite[Theorem~1]{BoRiSa21} provides the first complete proof that minimization is algorithmically decidable.
However, the corresponding algorithm is highly inefficient in theory, to the point of being completely impractical (see the example in \S\ref{example-Ap31}). 
Our paper can thus be seen as the first one providing an efficient
general algorithm for minimization.

\subsection{Relation to ``order-degree curves''}
To each left
multiple\footnote{Left multiples 
are obtained by multiplication on the left with linear differential
operators
whose 
coefficients belong to~$\mathbb Q(z)$.} with
polynomial coefficients of a linear differential operator~$\mathcal
L$, one associates
the point~$(r',d')$ where $r'$~is the order and~$d'$ the degree. The
(discrete) \emph{order-degree curve} $\mathcal C_\mathcal L$ is
obtained by
keeping those points that lie on the lower part of the convex hull of
this set of points.
In several cases, this
curve has been shown to be well approximated by a hyperbola
for $r$ sufficiently large~%
\cite{BoChLeSaSc07,BoCjCjLi10,ChKa12a,ChKa12b,ChJaKaSi13,Kauers2014}.
This curve has also been studied in
relation to the \emph{desingularisation} ${\mathcal D}$ of ${\mathcal
L}$, which is the lowest-order left multiple of ${\mathcal L}$ with no
apparent singularities~\cite{ChSa98,Tsai00,HuOs00,AbBaHo06,ChKaSi16}. 

Both problems have in common the
remarkable role played by apparent singularities, however
computing ${\mathcal D}$ and computing ${\mathcal M}$ are unrelated
problems. 
The first one is about computing a left \emph{multiple} of ${\mathcal
L}$ with a special property, the second one is about computing a right
\emph{factor}~$\mathcal M$ of ${\mathcal L}$ with a minimality
property.
Although the curve $\mathcal C_\mathcal M$ lies below the
curve~$\mathcal
C_\mathcal L$, bounds on the degree of~$\mathcal M$
are not accessible via the tools developed up to now for~$\mathcal
C_\mathcal L$. Moreover, the degree of~$\mathcal M$ can be
exponentially large with respect to the size of~$\mathcal L$. This is
illustrated by all classical families~$(P_n(z))$ of orthogonal
polynomials, as
they satisfy a small linear differential equation~$\mathcal L(y)=0$ of
order~2 with
coefficients that are polynomial in~$n$ (and thus of bit-size logarithmic
in~$n$), but their
minimal LDE~$P_n(z) y'(z)-P_n'(z)y(z)=0$ has degree~$n$, which is exponential
in the
bit-size of~$\mathcal L$.

\subsection{Contributions}
We give a new minimization algorithm that is efficient in practice.
The necessary tools are presented in detail, with degree bounds
obtained as solutions of explicit linear programming problems, that
we did not find in the literature.
The algorithm for the inhomogeneous case is also new; it reduces the
problem to that of finding rational solutions of the ajoint operator
to the minimal homogeneous one.

Both these minimization algorithms make the
necessary computations for $E$-functions proposed by
Adamczewski and
Rivoal~\cite{AdamczewskiRivoal2018} accessible in practice when the
coefficients are rational.
The final step in their method is based on a process of
desingularization of differential systems,
due to Beukers, that is of independent interest. We make it
more
explicit, with a detailed proof. 
A new canonical decomposition of $E$-functions is presented, together
with corresponding algorithms. 
Explicit families of interesting ``trivial'' evaluations of
hypergeometric ${}_1F_1$ functions at
algebraic points are deduced. 

All these algorithms are practical and an efficient implementation in the computer algebra system
\textsf{Maple} vindicates
them\footnote{Minimization is available as part of the \textsf{gfun} package at \url{https://perso.ens-lyon.fr/bruno.salvy/software/the-gfun-package/}; the code for exceptional algebraic values of $E$-functions can also be downloaded from that page, together with a worksheet of examples.}.

\subsection{Structure of the article} \Cref{sec:minimization} 
describes our
minimization algorithm, both in its homogeneous form (\cref{ssec:hm}) and in its
inhomogeneous variant (\cref{ssec:nhm}). Both crucially rely on computations of degree
bounds, described in \cref{subsec:degree-bounds}.  
\Cref{sec:E-functions} 
discusses the application of the minimization algorithms to
$E$-functions over~$\Q$. The new
algorithm is a practical variant of the Adamczewski-Rivoal algorithm recalled in
\cref{ssec:AR}, itself based on Beukers’ desingularization procedure described and
enhanced in \cref{ssec:beukersalgo}. The algorithm from \cref{ssec:AR} is then applied
in \cref{ssec:decompositionfpqg} to an effective decomposition of 
$E$-functions over~$\Q$. Extensions to $E$-functions with coefficients in a number
field and to $E$-functions in Siegel's original sense are discussed in
\cref{ssec:extensions1,ssec:extensions2}. 
In \cref{sec:examples} we present two infinite families of
${}_1F_1$ hypergeometric functions that take algebraic values at
non-trivial algebraic points (and a similar family of ${}_2F_1$
evaluations, even though they are not $E$-functions).
Finally, \cref{sec:implem} describes our 
implementation of the algorithms and illustrates it with a few timings. 

\section{Minimization algorithm}\label{sec:minimization}

\subsection{Power series solutions}\label{ssec:indicial}
We recall properties of linear differential equations that can be found in the classical treatises of Ince~\cite[Chap.~XVI, XVII]{Ince1956} or Poole~\cite[Chap.~V]{Poole1960}. Moreover, the presentation is specialized to the case of coefficients~$a_i$ of \cref{eq:defL} that are polynomials rather than formal power series.

Given an operator $\mathcal L$ as in \eqref{eq:defL}, 
the image by~$\mathcal L$ of a monomial~$z^s$ with $s\in\mathbb N$ is a polynomial
\begin{equation}\label{eq:inds_L}
f(s,z)=z^{s+g_{\mathcal L}}(p_0(s)+p_1(s)z+\dots+p_t(s)z^t),\qquad -r\le
g_\mathcal L,\quad 0\le t,
\end{equation}
with polynomials~$p_i(s)$ of degree at most~$r$ whose coefficients depend on those of the~$a_i$ and~$p_0\neq0$.
The polynomial~$p_0$ is called  the \emph{indicial polynomial}
of~$\mathcal L$ at~0; we also denote it 
by~$\operatorname{ind}_{\mathcal L}$ and write the identity above as
\begin{equation} \label{indL}
\mathcal L(z^s)\sim \operatorname{ind}_{\mathcal L}(s)z^
{s+g_{\mathcal L}},\quad z\rightarrow 0.
\end{equation}
By linear combination, the image by~$\mathcal L$ of a formal power series $S(z)=\sum_{i\ge 0}c_iz^{i}$ is the formal power series
\[\mathcal L(S)=\sum_{i\ge0}c_if(i,z).\]
The coefficients of~$z^k$ for $k=g,g+1,\dots$ in $\mathcal L(S)=0$ give the equations
\begin{equation}\label{eq:rec1}
c_0p_0(0)=0,\  c_0p_1(0)+c_1p_0(1)=0,\dots,\ 
c_0p_{t-1}(0)+\dots+c_{t-1}p_0(t-1)=0,\end{equation}
{and the linear recurrence of order~$t$}
\begin{equation}\label{eq:rec2}
c_ip_t(i)+\dots+c_{t+i}p_0(t+i)=0,\quad i\ge0.
\end{equation}
These equations imply that the valuation of~$S$ (the index of its first non-zero coefficient) is a zero of the indicial polynomial~$p_0$. Let 
\[\mathcal Z_{\mathcal L}=\{k\in\mathbb N\mid p_0(k)=0\}\]
be the set of nonnegative integer roots of the indicial polynomial of~$\mathcal L$ at~0.
For all $i\not\in\mathcal Z_\mathcal L$, the coefficient~$c_i$ is determined from the previous ones by the $(i+1)$th equation of the infinite system \eqref{eq:rec1}--\eqref{eq:rec2}.
For this reason, the \emph{initial conditions} of the differential equation~\eqref{eq:defL} are the values of~$y^{(i)}(0)$ for~$i\in\mathcal Z_\mathcal L$, as all the other ones are determined by the system \eqref{eq:rec1}--\eqref{eq:rec2}. In the non-singular case when~$a_r(0)\neq0$, the indicial polynomial is $p_0(s)=s(s-1)\dotsm(s-r+1)$ and then $\mathcal Z_\mathcal L=\{0,1,\dots,r-1\}$, recovering the usual definition.
This discussion leads to the following result that will be used to find right factors of~$\mathcal L$.
 (See Prop.~4.3 and Section~4.3 in~\cite{BoLaSa17} for similar considerations.)
\begin{lemma}\label{lemma:rightfact} With the notation above, let $S$ be a power series solution of $\mathcal L$ and 
$\mathcal M$ be a right factor of $\mathcal L$. If there exists a
polynomial $T$ such that $T^{(i)}(0)=S^{(i)}(0)$ for all $i\in
\mathcal Z_\mathcal L$
and~$\mathcal M(T)=O(z^{\max\mathcal Z_\mathcal L+g_\mathcal L+1})$,
then $\mathcal M(S)=0$.
\end{lemma}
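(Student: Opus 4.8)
The plan is to show that $\mathcal M(S)=0$ by combining the hypotheses on the polynomial $T$ with the fact that $\mathcal M$ is a right factor of $\mathcal L$, then invoking the rigidity of power series solutions encoded in the recurrences \eqref{eq:rec1}--\eqref{eq:rec2}. First I would set $U \coloneq S - T$ and observe that, because $T^{(i)}(0) = S^{(i)}(0)$ for all $i \in \mathcal Z_{\mathcal L}$, the series $U$ has the property that its coefficients $c_i(U)$ for $i \in \mathcal Z_{\mathcal L}$ vanish; in other words, $U$ is the unique power series solution of $\mathcal L(U) = \mathcal L(S) - \mathcal L(T) = -\mathcal L(T)$ with trivial initial conditions, up to the fact that we do not yet know $U$ is a genuine power series solution of anything homogeneous. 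The key point is that $\mathcal L(T)$ is itself a polynomial, and what we control instead is $\mathcal M(T)$.

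Next I would write $\mathcal L = \mathcal N \mathcal M$ for some operator $\mathcal N$ with coefficients in $\mathbb Q(z)$ (clearing denominators if needed), and apply $\mathcal M$ to the identity $S = T + U$. Since $\mathcal M(S)$ is a power series we wish to prove is zero, and $\mathcal M(T) = O(z^{\max\mathcal Z_{\mathcal L} + g_{\mathcal L} + 1})$ by hypothesis, it suffices to control $\mathcal M(U)$, or rather to argue directly about $V \coloneq \mathcal M(S)$. The better route is: $V = \mathcal M(S)$ satisfies $\mathcal N(V) = \mathcal L(S) = 0$, so $V$ is a power series solution of $\mathcal N$. I then want to show $V$ has valuation larger than every element of $\mathcal Z_{\mathcal N}$, which would force $V = 0$ because, by the discussion preceding the lemma, a nonzero power series solution of $\mathcal N$ must have valuation in $\mathcal Z_{\mathcal N}$, and once the relevant leading coefficients vanish the recurrences \eqref{eq:rec1}--\eqref{eq:rec2} for $\mathcal N$ propagate zero forever. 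To get the valuation bound on $V$, I would compare $\mathcal M(S)$ with $\mathcal M(T)$: since $\mathcal M(S-T)$ has the same low-order behavior governed by $U = S-T$ having no nonzero coefficients at indices in $\mathcal Z_{\mathcal L}$, and since $\mathcal M(T) = O(z^{\max\mathcal Z_{\mathcal L}+g_{\mathcal L}+1})$, one shows $\mathcal M(S) = O(z^{\max\mathcal Z_{\mathcal L}+g_{\mathcal L}+1})$ as well.

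The final step is a bookkeeping argument on indices: one must check that $\max\mathcal Z_{\mathcal L} + g_{\mathcal L} + 1$ exceeds $\max\mathcal Z_{\mathcal N} + g_{\mathcal N}$, using the factorization $\mathcal L = \mathcal N\mathcal M$ and the resulting relations $\operatorname{ind}_{\mathcal L} = \operatorname{ind}_{\mathcal N}(\,\cdot + g_{\mathcal M})\cdot \operatorname{ind}_{\mathcal M}$ (up to the shift by $g_{\mathcal M}$) and $g_{\mathcal L} = g_{\mathcal N} + g_{\mathcal M}$ coming from \eqref{indL}. This gives $\mathcal Z_{\mathcal N} \subseteq \{\, k - g_{\mathcal M} : k \in \mathcal Z_{\mathcal L}\,\}$ in the relevant range, hence the required inequality, so that $V = \mathcal M(S)$ vanishes to high enough order to conclude $V \equiv 0$.

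The main obstacle I expect is the precise index arithmetic: carefully tracking how $g_{\mathcal L}$, $g_{\mathcal M}$, $g_{\mathcal N}$ and the three indicial polynomials interact when composing operators, and making sure the threshold $\max\mathcal Z_{\mathcal L}+g_{\mathcal L}+1$ is exactly the right one (not off by one) to guarantee that the power series solution $V$ of $\mathcal N$ is forced to be zero rather than merely small. Everything else — the reduction $V = \mathcal M(S)$, the fact that $\mathcal N(V) = 0$, and the propagation of zero through the recurrence — is routine once the index comparison is in place.
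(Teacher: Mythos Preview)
Your overall strategy---set $V=\mathcal M(S)$, observe $\mathcal N(V)=0$ where $\mathcal L=\mathcal N\mathcal M$, and show that $V$ has all its $\mathcal Z_{\mathcal N}$-initial conditions equal to zero---is a legitimate alternative to the paper's route. But the step where you claim $\mathcal M(S)=O(z^{\max\mathcal Z_{\mathcal L}+g_{\mathcal L}+1})$ does not go through as written, and this is not merely index bookkeeping. You write $\mathcal M(S)=\mathcal M(T)+\mathcal M(U)$ with $U=S-T$, and the hypothesis controls $\mathcal M(T)$; for $\mathcal M(U)$ you invoke that $U$ has vanishing coefficients at indices in $\mathcal Z_{\mathcal L}$. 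But $U$ may well have nonzero coefficients at indices \emph{not} in $\mathcal Z_{\mathcal L}$ that are smaller than $\max\mathcal Z_{\mathcal L}$ (the lemma only asks $T^{(i)}(0)=S^{(i)}(0)$ for $i\in\mathcal Z_{\mathcal L}$, nothing about other~$i$). For instance, if $\mathcal Z_{\mathcal L}=\{0,5\}$ then $U$ can have a nonzero coefficient of $z$, so $\mathcal M(U)$ has valuation $1+g_{\mathcal M}$, far below the threshold. Thus $\mathcal M(S)$ need not have high valuation, and your argument for $V=0$ collapses.

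The paper avoids this entirely by working on the $\mathcal M$ side rather than the $\mathcal N$ side: it first shows $\mathcal Z_{\mathcal M}\subset\mathcal Z_{\mathcal L}$ via the multiplicativity of indicial polynomials, then uses the hypothesis $\mathcal M(T)=O(z^{\max\mathcal Z_{\mathcal L}+g_{\mathcal L}+1})$ to conclude that $T$ satisfies enough of the $\mathcal M$-recurrence to extend uniquely to a power series solution $\tilde S$ of $\mathcal M$. Since $\tilde S$ is then also a solution of $\mathcal L$ with the same $\mathcal Z_{\mathcal L}$-initial conditions as $S$, uniqueness forces $\tilde S=S$, hence $\mathcal M(S)=0$. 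The key difference is that the paper never needs $\mathcal M(S)$ to have high valuation; it only needs $\mathcal M(T)$ to, which is exactly the hypothesis.
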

\begin{proof} 
We first show that~$\mathcal Z_{\mathcal M}\subset\mathcal Z_
{\mathcal L}$. Indeed, if $\mathcal L=\mathcal A\mathcal M$, then
\[\mathcal L(z^s)=\mathcal A(\mathcal M(z^s))\sim \mathcal A
(\operatorname{ind}_{\mathcal M}(s)z^{s+g_{\mathcal M}})
\sim \operatorname{ind}_{\mathcal M}(s)\operatorname{ind}_{\mathcal
A}(s+g_{\mathcal M})z^{s+g_{\mathcal M}+g_{\mathcal A}},\quad z\rightarrow 0,\]
which, in combination with~\eqref{indL}, implies that
\[
\operatorname{ind}_{\mathcal L}(s) = \operatorname{ind}_{\mathcal M}(s) \operatorname{ind}_{\mathcal A}(s+g_{\mathcal M}) \quad \text{and} \quad 
g_{\mathcal L} = g_{\mathcal M}+g_{\mathcal A}.
\]
In particular, the indicial polynomial of~$\mathcal M$ divides that
of~$\mathcal L$, and hence $\mathcal Z_{\mathcal M}\subset\mathcal Z_
{\mathcal L}$.

Applying the discussion above to~$\mathcal M$ shows that
the coefficients of~$T$ satisfy the first $\max\mathcal Z_\mathcal
L+1$ equations of the system \eqref{eq:rec1}--\eqref{eq:rec2}. 
It follows that
$T$ can be extended to a unique power series solution of~$\mathcal M$. As~$\mathcal M$ is a right factor of~$\mathcal L$, this power series is also a solution of~$\mathcal L$. Since it has the same initial conditions as~$S$, they coincide.
\end{proof}

\subsection{Homogeneous minimization} \label{ssec:hm}

Since initial conditions are given for the power series~$S$ solution of the linear differential equation, it is possible to compute arbitrarily many coefficients of~$S$. The algorithm relies on the computation of upper bounds on the degree of the coefficients of right factors  of the linear differential operator of a given order. Given such bounds and sufficiently many coefficients of~$S$, it is easy to set up a (structured) linear system whose solutions are the possible coefficients of a right factor, or only~0 if no such factor exists. When a non-zero solution is found, one takes its greatest common right divisor with the original linear differential operator and checks it using~\cref{lemma:rightfact}. This approach is described in \cref{algo:min-hom}.

\begin{algorithm}
\caption[]{Minimal right factor}\label{algo:min-hom}
\begin{algorithmic}[1]
\Require{$\mathcal L=a_r(z)\partial_z^r+\dots+a_0(z)$ in $\Q[z]\langle\partial_z\rangle$;

ini: $S_0$ a truncated power series at precision $\ge \max\mathcal Z_\mathcal L$

\qquad specifying a unique solution~$S\in\Q[[z]]$ of $\mathcal L(S)=0$.}
\Ensure{a right factor of~$\mathcal L$ in~$\Q[z]\langle\partial_z\rangle$ of minimal order that vanishes at~$S$}
\State $\mathcal M \coloneq \mathcal L$;\,
 $T\coloneq S_0$;\,
 $m\coloneq r$;\, $p\coloneq \max\mathcal Z_\mathcal L+r;$\label{algo-line1}
    \While{$m>1$}
		\State      $m\coloneq m-1$
    	\If{$N\coloneq $\Call{BoundDegreeCoeffs}{$\mathcal L$, $m$}$\neq${FAIL}}
			\While{true}
				\State $T\coloneq $\Call{SeriesSolution}{$\mathcal
				L,T,p+m$};\,$k\coloneq \lfloor p/(m+1)\rfloor;$
					\label{algo-line7}
				\State $\mathcal H\coloneq \Call{ApproximantBasis}
					{T,T',\dots,T^{(m)};k,\dots,k;p};$
					\label{algo-line8}
        	   	\If{$\mathcal H=\emptyset$ and $p\ge (m+1)(N+1)$}{\label{algo-line9}
		           break}\Comment{No right factor of order~$m$}\EndIf
				\If{$\mathcal H\neq\emptyset$}\Comment{$\mathcal H$
					contains at least a candidate factor $h$}
					\State $\mathcal G\coloneq \Call
					{GreatestCommonRightDivisor}{\mathcal L,h}$;
	           		\If{$\mathcal G(T)=O(z^{\max\mathcal Z_\mathcal
           					L+g_\mathcal L+1})$\label{algo-line12}}
           					{ $\mathcal
           					M\coloneq \mathcal G;\,m\coloneq \ord\,\mathcal M$
						\label{algo1-line14};\,break}
				\EndIf\EndIf
           		\State $p\coloneq 2p$\label{algo-line15}
			\EndWhile
        \EndIf
    \EndWhile
\State\Return{$\mathcal M$}
\end{algorithmic}
\end{algorithm}

It relies on several other algorithms that we now review.

\subsubsection{Sub-algorithms}

\paragraph{\sc SeriesSolution} Takes as input a linear differential operator, a truncated power series 
solution of it, and a target precision~$p$. It returns the power series solution of the operator up to~$O(z^p)$, obtained either by truncating the power series given as input, or by extending it using the linear recurrence deduced from the differential equation.

\smallskip
\paragraph{\sc ApproximantBasis} Takes as input ${k}$ power series~$(S_1,\dots,S_k)$ that are the truncations at precision~$p$ of the successive derivatives of~$S$; $k$ nonnegative integers~$(s_1,\dots,s_k)$ and the precision~$p$. It first computes a basis $B(z)\in\mathbb Q[z]^{k\times k}$ of the $\mathbb Q[z]$-module
\begin{equation}\label{eq:PHmodule}
\mathcal A_p\coloneq \{(p_1,\dots,p_k)\mid p_1S_1+\dots+p_kS_k=O(z^p)\}
\end{equation}
in \emph{shifted Popov form}~\cite{Popov1972,BarelBultheel1992,BeckermannLabahnVillard1999,JeannerodNeigerVillard2020} with shift vector~$(-s_1,\dots,-s_k)$. This
implies that 
any element~$P$ of~$\mathcal A_p$ with degrees bounded by~$(s_1,\dots,s_k)$ is a linear combination of the rows of~$B$ whose index~$i$ satisfies $\deg B_{ii}\le s_i$. Those are the rows returned by~{\sc ApproximantBasis}. As the~$S_i$ are successive
derivatives, these rows can be interpreted as linear differential
operators~$p_1+p_2\partial+\dots+p_k\partial^{k-1}$.
Efficient algorithms to compute such bases are 
known~\cite{JeannerodNeigerVillard2020}. 

\smallskip
\paragraph{\sc GreatestCommonRightDivisor} Computes the monic greatest
common right divisor (gcrd) of two linear differential operators with
coefficients in~$\mathbb Q(z)$. This is classically
achieved by a non-commutative version of Euclid's algorithm~\cite{Ore1933} and
more efficient methods are known~\cite{Grigoriev90,Hoeven16}. 

\smallskip
\paragraph{\sc BoundDegreeCoeffs} This is the heart of the algorithmic work, described in \cref{subsec:degree-bounds}. It takes as input an operator of order~$r$ and a positive integer~$m<r$. It returns either~FAIL when it has proved that no right factor of order~$m$ with polynomial coefficients exist; or an upper bound on the degree of each of the coefficients such a factor would have.

\begin{theorem} Given a linear differential operator~$\mathcal L\in\mathbb Q[z]\langle\partial_z\rangle$ and a truncated power series specifying a unique solution~$S\in\mathbb Q[[z]]$ of~$\mathcal L(S)=0$, 
\cref{algo:min-hom} computes a non-zero right factor~$\mathcal M$ of~$\mathcal L$ of minimal order such that~$\mathcal M(S)=0$. \end{theorem}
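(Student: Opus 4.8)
The plan is to establish two things: termination of \cref{algo:min-hom}, and correctness of its output. For correctness, the central invariant I would track is that throughout the execution, $\mathcal M$ is always a right factor of $\mathcal L$ with $\mathcal M(S)=0$, and $T$ is a truncation of $S$ at increasing precision. The first assertion is trivially true at initialization since $\mathcal M\coloneq\mathcal L$. Each time $\mathcal M$ is updated on line~\ref{algo1-line14}, it becomes $\mathcal G\coloneq\operatorname{gcrd}(\mathcal L,h)$, which is by construction a right factor of $\mathcal L$; and the guard on line~\ref{algo-line12} checks exactly the hypothesis of \cref{lemma:rightfact} (noting that the polynomial $T$ agrees with $S$ on the relevant initial conditions because $T$ is extended from $S_0$ via {\sc SeriesSolution}), so $\mathcal G(S)=0$. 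Hence the returned $\mathcal M$ is always a right factor vanishing at $S$.

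Next I would argue minimality. The outer {\sc while} loop decreases $m$ one unit at a time, and for each candidate order $m$ it either proves (via {\sc BoundDegreeCoeffs} returning FAIL, or via the break on line~\ref{algo-line9}) that no right factor of order $m$ with polynomial coefficients and $S$ as a solution exists, or it finds one and updates $\mathcal M$, after which $m$ is reset to $\operatorname{ord}\mathcal M$. The key point is that the {\sc ApproximantBasis} call on line~\ref{algo-line8}, together with the degree bound $N$ from {\sc BoundDegreeCoeffs} and the precision condition $p\ge(m+1)(N+1)$ on line~\ref{algo-line9}, is guaranteed to detect any order-$m$ right factor: such a factor corresponds to an operator $p_1+p_2\partial+\dots+p_{m+1}\partial^m$ with $\deg p_i\le N$ annihilating $S$, hence lying in $\mathcal A_p$ with degrees bounded by $(k,\dots,k)$ for $k=\lfloor p/(m+1)\rfloor\ge N$, hence appearing among the rows returned by {\sc ApproximantBasis}. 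So if $\mathcal H=\emptyset$ with $p$ large enough, there genuinely is no such factor, and the loop correctly moves on. Conversely, if a factor of order $\le m$ exists, some candidate $h$ will be found for the appropriate value of $m$, its gcrd with $\mathcal L$ will pass the check, and $\mathcal M$ will be updated to an operator of that order or smaller. Since we scan $m$ downward from $r$, the final $\mathcal M$ has the least possible order among right factors of $\mathcal L$ vanishing at $S$.

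For termination, there are two nested unbounded loops to control. The inner {\sc while true} loop doubles $p$ on line~\ref{algo-line15} whenever no decision is reached; since {\sc BoundDegreeCoeffs} has returned a finite bound $N$, after finitely many doublings we have $p\ge(m+1)(N+1)$, at which point line~\ref{algo-line9} forces a break if $\mathcal H$ is still empty, and if $\mathcal H$ is nonempty the gcrd check either succeeds (break) or, failing, triggers another doubling — but a spurious $h\in\mathcal H$ can only survive finitely long because once $p$ exceeds the precision needed to certify it via \cref{lemma:rightfact} against the true series $S$, a genuine factor passes and a non-factor's gcrd with $\mathcal L$ will eventually fail the order-$z^{\max\mathcal Z_\mathcal L+g_\mathcal L+1}$ test or just not arise. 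The outer loop terminates because $m$ strictly decreases at the top of each iteration and is bounded below by $1$.

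The main obstacle I expect is making the inner-loop termination argument fully rigorous: one must rule out the pathological possibility that the approximant basis keeps producing candidate operators $h$ whose gcrd with $\mathcal L$ never both has the right order and passes the check on line~\ref{algo-line12}, forcing $p$ to double forever. The resolution is to observe that the space of order-$m$ factors with coefficient degrees $\le N$ is finite-dimensional, and that for $p$ large each element of {\sc ApproximantBasis}'s output with bounded degrees is an \emph{exact} relation among $S$ and its derivatives (not merely an approximate one up to $O(z^p)$), because any nonzero polynomial combination $\sum p_i S^{(i-1)}$ of bounded degree that is not identically zero has bounded valuation; so once $p$ exceeds that valuation plus the threshold, {\sc ApproximantBasis} returns genuine operators only, the true minimal factor among them passes \cref{lemma:rightfact}, and the loop exits. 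I would spell out this valuation bound — it is essentially the content of the degree-bound analysis of \cref{subsec:degree-bounds} combined with the multiplicity estimates — and then termination and correctness follow.
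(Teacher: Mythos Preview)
Your approach is essentially the same as the paper's: correctness via the invariant and \cref{lemma:rightfact}, and termination via a stabilization argument for the finite-dimensional spaces of bounded-degree approximants. The paper's version of the termination argument is exactly the clean one you reach in your final paragraph (the $\Q$-vector spaces $V_p$ of degree-bounded approximants form a descending chain of finite-dimensional spaces, hence stabilize at some $p_0$, after which every $h\in\mathcal H$ annihilates $S$ exactly and so does its gcrd with $\mathcal L$); your appeal to multiplicity estimates is unnecessary, since finite-dimensionality alone suffices.
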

\begin{proof}
\quad 1. (Correctness assuming termination.)
Since $T$ is expanded at precision~$p+m$ in \cref{algo-line7} and
$p>\max \mathcal Z_\mathcal L$ from \cref{algo-line1,algo-line15}, it satisfies $T^{(i)}(0)=S^{(i)}(0)$ for $i\in\mathcal Z_\mathcal L$. 
In \cref{algo-line8}, all series $T,T',\dots,T^{(m)}$ are known at
precision~$p$. It follows that if the basis returned by {\sc
ApproximantBasis} is empty with the given bounds on the degrees of the
coefficients in \cref{algo-line9}, there is no right-factor of~$\mathcal L$ of
order~$m$. Otherwise, taking~$\mathcal G$ a gcrd of~$\mathcal L$ and
an element~$h$ of~$\mathcal H$ gives a right factor of~$\mathcal L$ to
which \cref{lemma:rightfact} applies, showing that $\mathcal M(S)=0$
if the condition on \cref{algo-line12} holds. The loop on~$m$ makes
the algorithm
terminate on a right factor of minimal order.

\qquad 2. (Termination.)
The only possible source on non-termination in the algorithm is the loop where $p$ is doubled every time $\mathcal G$ fails to cancel~$T$ to sufficient precision. Let $V_p$ be the $\mathbb Q$-vector space generated by the approximants of the modules~$\mathcal A_{p'}$ from \cref{eq:PHmodule} for all~$p'\ge p$. Since the approximants have degrees bounded by~$(N,\dots,N)$, these are finite-dimensional vector spaces and~$V_{p+1}\subset V_p$. Thus there exists~$p_0$ such that~$V_{p_0}$ is the intersection of all~$V_p$ for~$p\ge p_0$. Any approximant~$h=(h_0,\dots,h_{m})$ in~$\mathcal H$ in \cref{algo-line8} for~$p\ge p_0$ has the property that $h_0S+\dots+h_mS^{(m)}=O(z^k)$ for all~$k\ge p$ and thus annihilates~$S$, and therefore so does its gcrd with~$\mathcal L$, making the algorithm terminate.
\end{proof}

\subsubsection{Comparison with van Hoeij's algorithm}
Van Hoeij's Algorithm ``Construct $R$''~\cite[p.~552]{Hoeij1997a} follows a similar pattern. Our termination proof is essentially his. The difference is that instead of looking for an arbitrary right factor of~$\mathcal L$, we need to make sure that the factor returned by the algorithm cancels the power series~$S$. This is ensured by the test in \cref{algo-line12}.

\subsubsection{Example}\label{example-Ap31}
Consider the sequence
\[u_n=\sum_{k=0}^n{\frac{n!(n+k)!}{k!^4(n-k)!^3}}.\]
Zeilberger's creative telescoping algorithm~\cite{Zeilberger91} shows that~$u_n$ satisfies a linear recurrence of order~4 with coefficients that are polynomials in~$n$ of degree at most~10:
\begin{footnotesize}
\begin{multline*}
\left(29412 n^{4}+224352 n^{3}+632931 n^{2}+781692 n +356309\right) \left(n +3\right)^{2} \left(n +4\right)^{4} u_{n +4}\\
% -2 \left(235296 n^{7}+4265424 n^{6}+32556486 n^{5}+135468579 n^{4}+331571245 n^{3}+477037793 n^{2}+373353925 n +122563428\right) \left(n +3\right)^{2} u_{n +3}\\
% -\left(3294144 n^{8}+64657152 n^{7}+549637656 n^{6}+2641270728 n^{5}+7841655446 n^{4}+14715627332 n^{3}+17029441376 n^{2}+11099435178 n +3116435176\right) u_{n +2}\\
% -\left(8353008 n^{7}+134716536 n^{6}+914453220 n^{5}+3384151662 n^{4}+7370502898 n^{3}+9445876552 n^{2}+6597664136 n +1938794528\right) u_{n +1}\\
+\dots+4 \left(29412 n^{4}+342000 n^{3}+1482459 n^{2}+2838258 n +2024696\right) \left(n+1\right)^{2} u_n=0
.\end{multline*}
\end{footnotesize}
This recurrence translates into a linear differential operator of order~$10$ annihilating the generating function~$S(z)=\sum_{n\ge0}{u_nz^n}$, with coefficients of degree at most~8:
\begin{footnotesize}
\begin{multline}\label{eq:deq-Apery31}
\mathcal L=29412 z^{8} \partial_z^{10}
-684 z^{7} \left(688 z -1489\right)\partial_z^{9}
-21 z^{6} \left(156864 z^{2}+742368 z -588707\right) \partial_z^{8}+\dotsb\\
% -6 z^{5} \left(1392168 z^{3}+17364480 z^{2}+30011458 z -10866355\right) \left(\frac{d^{7}}{d z^{7}}y \! \left(z \right)\right)\\
% +2 z^{4} \left(58824 z^{4}-125829324 z^{3}-570295884 z^{2}-448677111 z +75260980\right) \left(\frac{d^{6}}{d z^{6}}y \! \left(z \right)\right)\\
% +4 z^{3} \left(842004 z^{4}-648667623 z^{3}-1333434468 z^{2}-484840994 z +33714713\right) \left(\frac{d^{5}}{d z^{5}}y \! \left(z \right)\right)\\
% +2 z^{2} \left(16231782 z^{4}-5617729521 z^{3}-5329510375 z^{2}-802475289 z +16763266\right) \left(\frac{d^{4}}{d z^{4}}y \! \left(z \right)\right)\\
% +2 z \left(63976164 z^{4}-10116825425 z^{3}-3980869832 z^{2}-180606473 z +338716\right) \left(\frac{d^{3}}{d z^{3}}y \! \left(z \right)\right)\\
% -\left(-199657712 z^{4}+12964744750 z^{3}+1538809630 z^{2}+6481588 z -3400\right)\\ \left(\frac{d^{2}}{d z^{2}}y \! \left(z \right)\right)\\
% -\left(-99370416 z^{3}+1926228512 z^{2}+19342508 z -8500\right)\partial_z
+\left(99370416 z^{3}-1926228512 z^{2}-19342508 z +8500\right)\partial_z
+4 \left(2024696 z^{2}-3141504 z -32725\right).
\end{multline}
\end{footnotesize}
The only integer roots of the indicial polynomial of~$\mathcal L$ at~0 are~in~$\mathcal Z_\mathcal L=\{0,1\}$ so that the initial conditions specifying~$S$ uniquely are~$S(0)=u_0=1, S'(0)=u_1=3$. The differential operator~$\mathcal L$ is not minimal for~$S$.
There are two stages in the execution of the algorithm: first, a right
factor is sought; next, its minimality is proved. 

In the first stage, tight bounds on the degrees of coefficients of
right factors are not needed. One can compute more and more
coefficients of the series expansion of the solution and try to
reconstruct a factor by computing an approximant basis. When a
non-trivial factor exists, it will be discovered.

In the second stage, or if no non-trivial factor exists, i.e., if
$\mathcal L$ is minimal, then one has to certify this minimality. This
is where tight bounds are useful. In this example, the bound on the degree of the coefficients that
follows from the work of Bertrand, Chirskii and 
Yebbou~\cite[Lemma~3.1]{BeChYe04}
is larger than
\[10^{10^{10^{33}}}.\]
This makes it a purely theoretical result that cannot be used in a
computation. Indeed, with current implementations and hardware,
already bounds
on degrees of order~$10^7$ become too large for practical
computations.

\medskip
\paragraph{\em Computation of a right factor.}
Not knowing in advance that $\mathcal L$ is not minimal, our algorithm
first computes bounds on the degrees of the coefficients of right
factors. During this computation of bounds, it discovers that
$\mathcal L$ does not have any right factor of order~9, 8, or~7.
For order~$6$, a bound~30 for the degrees of the coefficients is
found. With this bound, a candidate linear differential operator of
order~6 and degree~8 is found:
\begin{footnotesize}
\begin{multline*}
\mathcal M=z^{4} \left(1882368 z^{4}-2206584 z^{3}+1703460 z^{2}+67815 z +272\right)\partial_z^6+\dotsb+\\
% -4 z^{3} \left(7529472 z^{5}-13061664 z^{4}+12330300 z^{3}-4413255 z^{2}-202357 z -884\right) \left(\frac{d^{5}}{d z^{5}}y \! \left(z \right)\right)\\
% -2 z^{2} \left(105412608 z^{6}-10626624 z^{5}-69362688 z^{4}+157911036 z^{3}-18907937 z^{2}-1240221 z -6256\right) \left(\frac{d^{4}}{d z^{4}}y \! \left(z \right)\right)\\
% -4 z \left(133648128 z^{7}+159570360 z^{6}-233393700 z^{5}+248634015 z^{4}+161848234 z^{3}+1833892 z^{2}-451712 z -3128\right) \left(\frac{d^{3}}{d z^{3}}y \! \left(z \right)\right)\\
% +\left(7529472 z^{8}-2414492640 z^{7}+2098630848 z^{6}-658222092 z^{5}-2601270010 z^{4}-350298046 z^{3}-22001542 z^{2}+24956 z +2176\right) \left(\frac{d^{2}}{d z^{2}}y \! \left(z \right)\right)\\
% +\left(22588416 z^{7}-1771198848 z^{6}+3372156432 z^{5}-3161430264 z^{4}-847969794 z^{3}-56798636 z^{2}+88774 z +5440\right) \left(\frac{d}{d z}y \! \left(z \right)\right)\\
+2 \left(3764736 z^{6}-41001696 z^{5}+157022376 z^{4}-184937064 z^{3}-6917519 z^{2}-3408891 z -41888\right).
\end{multline*}
\end{footnotesize}
The computation of the greatest common right factor stops at its first step, discovering that this differential operator is a right factor of~$\mathcal L$.

\medskip
\paragraph{\em Proof of minimality.}
This is the stage where good degree bounds are useful. At order~5, the
bound given by our previous
work~\cite[Thm.~1]{BoRiSa21}, using the
generalized exponents and the slopes of the Newton polygons
of~$\mathcal L$, is only~87, to be compared
with the purely theoretical bound above.
This means that proving
minimality reduces to the computation of~$88\times 6+1=529$
coefficients of the power series followed by a computation of an
approximant basis. This is a quantity that is manageable, but
motivates the quest for tight degree bounds.

At this same order~5, our algorithm computes the better bound~$N=15$ 
(instead of~87). Thus, with $p+1$ coefficients of~$S$, where
$p=96=6\times 16$, the computation of {\sc ApproximantBasis} shows
that there is no non-zero operator~$h$ of order~5 with coefficients of degree at most~$N$ such that~$h(S(z))=O(z^p)$ and therefore no right factor of~$\mathcal L$ of order~5 annihilating~$S$. 

Next,  the bound on the degrees of the coefficients of a right factor of order~$4$ is smaller than~$15$, so that if a right factor of that order existed, it would have been obtained for order~5. Finally, the computations of bounds for orders~$3,2,1$ show that no factor of these orders exist. This concludes the proof of minimality of the operator~$\mathcal M$ for~$S$.

\subsection{Degree bounds}\label{subsec:degree-bounds}
The computation of degree bounds for a factor of a given order is a
key step in van Hoeij's factorization 
algorithm~\cite[\S9]{Hoeij1997a}. We recall the ingredients here. Compared to our earlier
work~\cite{BoRiSa21} where we have obtained universal bounds, the
bounds computed here are tailored to the equation under study, rather
than depending only on its order, degree and height. This allows for
smaller bounds and more efficient computations. As shown in the
example above, having good bounds is important when certifying the
minimality of a right factor. In this work, this is achieved by
setting up explicit integer linear programming problems that do not
appear in the earlier literature.

\subsubsection{Singularities of the factors}\label{ssec:singularities}

Dividing~$\mathcal L$ by its leading coefficient~$a_r$ gives a monic operator with rational function coefficients. In this form, the singularities of~$\mathcal L$ are the poles of its coefficients. A singularity~$\alpha$ of~$\mathcal L$ is called \emph{regular} 
if the indicial polynomial of~$\mathcal L$ at~$\alpha$ has degree equal to the order~$r$ of~$\mathcal L$, and it is called \emph{irregular} otherwise. 
The right factors will be searched in the same monic form. Recall that the valuation~$\val_\alpha(r)$ of a rational function~$r$ at~$\alpha$ is the exponent of the leading term of the Laurent expansion of~$r$ at~$\alpha$ (and $\val_\alpha(0)=\infty)$. At a regular singularity the valuation of each coefficient~$a_i$ of~$\mathcal L$ in monic form is at least~$i-r$. 
Bounds on the degrees of the coefficients of factors are obtained by bounding the valuations of their coefficients in monic form at each singularity and at infinity, and by bounding the number of \emph{apparent} singularities. Apparent singularities are poles of the coefficients where the operator has a basis of $r$~formal power series solutions; they are regular. All these notions are classical and can be found for instance in Ince's book~\cite{Ince1956}.

\subsubsection{Newton polygons and valuations of the coefficients of the factors}

The Newton polygon of the operator~$\mathcal L$ from \cref{eq:defL} at~0 is the convex hull of the union of the quadrants~$(i,\val_0(a_i)-i)+(\mathbb{R}_{\le0}\times\mathbb{R}_{\ge0})$. The knowledge of the Newton polygon of~$\mathcal L$ at~0 gives lower bounds on the valuations of its coefficients.
The main property of relevance here is that the Newton polygon of a product of operators is the (Minkowski) sum of their Newton polygons 
(\cite[Lemme~1.4.1]{Malgrange1979}). For instance, when~0 is an ordinary point or a regular singularity of~$\mathcal L$, the only slope of the lower part of its Newton polygon is~0 and this is therefore a property of the Newton polygons of the monic factors of~$\mathcal L$, which reflects the fact that they are regular at~0 in that case.

More generally, let $(x_0,y_0)=(0,y_0),\dots,(x_k,y_k)=(r,y_k)$ be the points on the lower part of the Newton polygon of~$\mathcal L$ and let~$((n_1,d_1),\dots,(n_k,d_k))$ with $(n_i,d_i)=(x_i-x_{i-1},y_i-y_{i-1})$ be the tuple
of segments of the Newton polygon of~$\mathcal L$ sorted by increasing slope. Then the lowest possible Newton polygon for a monic factor of order~$m$ is obtained 
from the solution of the ``0-1 knapsack problem''
\[\min\sum_{i=1}^k{c_in_i}\quad\text{subject to }\sum_{i=1}^k{c_id_i}=m\text{ and } c_i\in\{0,1\},\quad i=1,\dots,k,\]
where~$c_i$ is either 1 or 0 depending on whether or not the slope~$(n_i,d_i)$ is used. This solution allows one to obtain lower bounds on the valuations at~0 of the coefficients of monic factors of~$\mathcal L$ of order~$m$. The 0-1 knapsack problem is NP-hard but lower bounds can be found efficiently if needed~\cite[Ch.~8]{Vazirani2001}. In practice, this has never been a costly step in our computations and an optimal value may lead to a better degree bound which saves computation time in other steps of the minimization algorithm.

The same process can be performed at every irregular singularity~$\alpha$ of~$\mathcal L$ by considering the Newton polygon formed from~$\val_\alpha$ instead of~$\val_0$. Thus lower bounds on the valuations of the coefficients of a factor are found at each singularity, from the Newton polygon of~$\mathcal L$ and the order of the factor. Applying the same process at~$\infty$ (for instance by changing $z$ into~$1/z$ and working at~0) gives bounds on the valuation at infinity of these coefficients.

\subsubsection{Fuchs' relation and apparent singularities of the factors}
\label{ssec:Fuchs}
Let $\mathcal M$ denote a monic right-factor of order~$m$ of the operator~$\mathcal L$ to be minimized.
The study of the Newton polygons of $\mathcal L$ provides lower bounds on the valuations of the coefficients of~$\mathcal M$ at the singularities of~$\mathcal L$.
We now show how to obtain an upper bound on the number of apparent singularities of~$\mathcal M$; together with the lower bounds on valuations, this will provide upper bounds on the degrees of (the polynomial version of)~$\mathcal M$.

We first recall the principle of the method in the case where 
$\mathcal M$ is Fuchsian, that is, if all its singularities (including $\infty$) are regular.
Fuchs' relation~\cite[p.~138]{PuSi03} states that 
\begin{equation}\label{eq:Fuchs}
\sum_{\rho\in\operatorname{Sing}(\mathcal M)}S_\rho(\mathcal M)=-m(m-1),
\end{equation}
where $\operatorname{Sing}(\mathcal M)$ is the set of singularities of~$\mathcal M$, including the apparent ones and infinity, and where
\begin{equation}\label{eq:srho}
S_\rho(\mathcal M)\coloneq \sum_{j=1}^m{e_j(\rho)}-\frac{m(m-1)}2,
\end{equation}
the numbers~$e_j(\rho)$ being the local exponents of~$\mathcal M$ at the point~$\rho$ (they are the roots of the indicial polynomial at~$\rho$). At an apparent singularity~$\rho$, the quantity~$S_\rho(\mathcal M)$ is a positive integer, so that the number of apparent singularities is upper bounded by
\begin{equation}\label{eq:bound-app-Fuchs}
\#\!\operatorname{App}(\mathcal M)\le -m(m-1)-\sum_{\rho\in\sigma(\mathcal M)}S_\rho(\mathcal M),
\end{equation}
with~$\sigma(\mathcal M)$ the subset of~$\operatorname{Sing}(\mathcal M)$ formed by the singularities of $\mathcal M$ that are not apparent.

Since~$\mathcal M$ is a right factor of~$\mathcal L$, the set $\sigma
(\mathcal M)$ is a subset of~$\operatorname{\sigma(\mathcal L)}$. The
set~$\operatorname{\sigma(\mathcal L)}$ is known, since it corresponds
to the roots of the leading coefficient~$a_r(z)$ that are not apparent
singularities of~$\mathcal L$, plus possibly $\infty$. Let $\mu_
{1},\dots,\mu_{s}$ be the irreducible factors of~$a_r(z)$
corresponding to non-apparent singularities of~$\mathcal L$ and by
convention let~$\mu_0=z$. At a finite $\rho\in\sigma(\mathcal L)$,
given by its minimal polynomial~$\mu_i$, the indicial polynomial $
\operatorname{ind}_\rho^{\mathcal L}(\theta)\in\Q(\rho)[\theta]$ is
easily computed. Then the unknown indicial polynomial~$
\operatorname{ind}_\rho^{\mathcal M}(\theta)\in\Q(\rho)[\theta]$ has
to be a factor of~$\operatorname{ind}_\rho^{\mathcal L}(\theta)$ of
degree exactly~$m$. Let $I_{i,j}(\theta)$, $j=1,\dots,k_i\le r$ be the
irreducible factors of $\operatorname{ind}_\rho^{\mathcal L}(\theta)$
in $\Q(\rho)[\theta]$, repeated with their multiplicity (and
similarly~$I_{0,j}(\theta)$ denote the factors of the indicial
polynomial of~$\mathcal L$ at infinity). The sum of the roots of $I_
{i,j}$ lies in~$\Q(\rho)$ and its sum over all roots of~$\mu_i$ is a
rational number~$e_{i,j}$. A bound on the number of apparent
singularities is therefore obtained by solving the following integer
linear programming problem~\cite[Part~VI]{Schrijver1986}
\begin{gather*}
\max A\quad\text{subject to }\ A=-m(m-1)-\sum_{i=0}^s\deg\mu_i\left(\sum_{j=1}^{k_i}{c_{i,j}e_{i,j}}-\frac{m(m-1)}2\right)\in\mathbb{N}\\
\text{and for all $i\in\{0,\dots,s\}$,}\quad\sum_{j=1}^{k_i}{c_{i,j}\deg I_{i,j}}=m,\quad c_{i,j}\in\{0,1\}.
\end{gather*}
The constraints express the fact that there should be  $m$ exponents at each root of~$a_r$, be them singular or ordinary for~$\mathcal M$.

This process can be used whenever the right factor~$\mathcal M$ to be found is known to be Fuchsian, thus in particular when~$\mathcal L$ itself is Fuchsian.

Note that if there is no choice of~$c_{ij}$ for which~$A\in\mathbb{N}$, then there is no right factor of order~$m$. 
As for the previous optimization problem, this is potentially a
computationally expensive step. Simple upper bounds can be obtained by
solving the relaxed linear programming problem where the constraints
$0\le
c_{i,j}\le 1$ replace the binary variables.

\subsubsection{Generalized Fuchs relation}
To an irregular singular point $\rho$ of~$\mathcal L$ is associated a set of \emph{exponential parts}. 
If $\rho$ is finite, these are polynomials~$w(z)$ in some rational power~$1/r$ of~$z$ ($r\in\mathbb{N}_{>0}$) such that $\mathcal L$ admits a formal solution of the form 
\[\exp\!\left(\int{\frac{w(1/(z-\rho))}{z-\rho}\,dz}\right)S(z),\qquad S\in\Q\!\left[\left[(z-\rho)^{1/r}\right]\right][\log(z-\rho)],\quad\operatorname{val}_{z=\rho}S=0.\]
The case when $\rho=\infty$ is obtained by changing $z$ into~$1/z$ in
the equation. That a full basis of formal solutions can be obtained in
this way goes back to Fabry's classification~\cite{Fabry1885}.
Algorithms for the computation of the list of exponential parts at a
point~$\rho$ have been introduced early in computer algebra~\cite%
[Thm.~4.2.1 and Cor.~4.3.1]{Malgrange1979}, \cite{DeDiTo92,Hoeij1997b}.
When $\rho$ is a regular singular point, the exponential parts are constants that coincide with the roots of the indicial polynomial. In general, the \emph{generalized local exponents} are the constant coefficients of the exponential parts. If each of them is counted with multiplicity~$r$, their number is exactly the order of the operator.

When $\mathcal M$ is a right factor of~$\mathcal L$, its exponential parts at~$\rho$ form a subset of those of~$\mathcal L$. 
If the order of~$\mathcal M$ is~$m$, the Fuchs relation~\eqref{eq:Fuchs} generalizes as 
\begin{equation}\label{eq:generalized_Fuchs}
\sum_{\rho\in\operatorname{Sing}(\mathcal M)}\!\left(S_\rho(\mathcal M)-
\frac12 I_\rho(\mathcal M)\right)
=-m(m-1),
\end{equation}
where~$S_\rho$ is as in \cref{eq:srho}, with the generalized local exponents taking the place of the local exponents and 
\[
I_\rho(\mathcal M)
\coloneq  2\sum_{1\le i<j\le m}\deg(w_i-w_j),
\]
where~the $w_i$ are the exponential parts at~$\rho$,
see~\cite[Thm.~2 and \S5]{Bertrand1999}, \cite[p.~84]{Bertrand88}. As
the $w_i$
are polynomials
in a
fractional power of~$z$, their degree here is a rational number. (The quantity $I_\rho(\mathcal M)$ is related to Malgrange's irregularity of $\mathcal{M}$ at $\rho$; see \cite[\S2.2]{BoRiSa21} for details which are not essential here.)

Thus, the analogue of \cref{eq:bound-app-Fuchs} is
\begin{equation}\label{eq:nb-app-irr}
\#\!\operatorname{App}(\mathcal M)\le -m(m-1)-\sum_{\rho\in\sigma(\mathcal M)}\left(S_\rho(\mathcal M)-
\frac12 I_{\rho}(\mathcal M)\right).
\end{equation}
The corresponding optimization problem is slightly more involved. As in the Fuchsian situation, $\sigma(\mathcal M)\subset\sigma(\mathcal L)$ and we denote by~$\mu_1,\dots,\mu_s$ the irreducible factors of the leading coefficient~$a_r(z)$ that correspond to non-apparent singularities of~$\mathcal L$ and $\mu_0=z$. At a finite~$\rho\in\sigma(\mathcal L)$, given by its minimal polynomial~$\mu_i$, the exponential parts are given as
\[w_{i1}((z-\rho)^{1/r_{i1}}),\dots,w_{ik_i}((z-\rho)^{1/r_{ik_i}})\]
with minimal~$r_{ij}$. Each contributes $r_{ij}$ times its constant coefficient to the set of generalized local exponents of~$\mathcal L$ at~$\rho$, so that $\sum{r_{ij}}=\operatorname{\ord}(\mathcal L).$
The exponential parts of~$\mathcal M$ at~$\rho$ form a subset of those
of~$\mathcal L$. This property, combined with the generalized Fuchs
relation~\eqref{eq:generalized_Fuchs}, leads to the following
integer linear programming problem
\begin{gather*}
\max A\quad{\text{subject to}}\\
\begin{split}
A=-m(m-1)
-\sum_{i=0}^s\sum_{\mu_i(\rho)=0}\left(\sum_{j=1}^
{k_i}{{c_{i,j}(r_{i,j}w_{i,j}(0)-\frac{r_{ij}(r_{ij}-1)}2\deg w_{ij})-\frac{m(m-1)}2}}\right)\\
+\sum_{i=0}^s\deg\mu_i\sum_{j=1}^{k_i}\sum_{1\le k\neq j\le k_i}d_{i,\{j,k\}}\deg(w_{ij}-w_{ik})
\in\mathbb N
\end{split}\\
\text{and for all $i\in\{0,\dots,s\}$,}\quad\sum_{j=1}^{k_i}{c_{i,j}r_{i,j}}=m,\quad c_{i,j}\in\{0,1\},\\
\text{and for all $(i,j)$,}\quad \sum_{k\neq j}d_{i,\{j,k\}}=c_{i,j}(m-1),\quad d_{i,\{j,k\}}\in\{0,1\}.
\end{gather*}
The last set of constraints consists in adding one variable for each pair of~$(w_{ij},w_{ik})$ and forcing the sum of these variables for fixed~$i$ to be the number of pairs, namely~$m-1$, an idea taken from~\cite{DehghanAssariShah2015}.

\subsubsection{Example}
Consider the equation
\[zy''+(1-6z)y'+(z-3)y=0,\]
with initial condition $y(0)=1$, which specifies a unique power series solution $S(z)=1+3z+13z^2/2+\dotsb$. It has two singular points, at~0 and~$\infty$. The point 0~is regular, with exponents~$0,0$. The point $\infty$~is irregular, with exponential parts $w_\pm=\alpha_\pm z+1/2$, where $\alpha_\pm=-3\pm2\sqrt2$, corresponding to formal solutions $\exp(-\alpha_\pm z)/\sqrt{z}$ at infinity and both generalized exponents are equal to~$1/2$. In the notation above, we have
\begin{gather*}
s=1,\quad \mu_0=\mu_1=z, \quad k_0=k_1=2,\quad  r_{0,1}=r_{0,2}=r_{1,1}=r_{1,2}=1,\\
w_{0,1}=w_+,w_{0,2}=w_-,w_{1,1}=w_{1,2}=0.
\end{gather*}
Looking for a right factor of order~$m$ leads to maximizing~$A\in\mathbb N$ such that
\[A=-m(m-1)-\left(c_{0,1}/2+c_{0,2}/2-m(m-1)/2\right)
+d_{0,\{1,2\}},
\]
with the constraints
\begin{gather*}
c_{0,1}+c_{0,2}=m,\quad 
d_{0,\{1,2\}}=c_{0,1}(m-1)=c_{0,2}(m-1),
\\
c_{0,1},c_{0,2},d_{0,\{1,2\}}\text{ in }\{0,1\}.
\end{gather*}
For the order~$m=1$ of a right factor, the last constraints force $d_{0,\{1,2\}}=0$, $c_{0,1}+c_{0,2}=1$, which makes $A<0$, showing that there is no solution and thus no factorization with a right factor of order~1; the equation is minimal.

\subsubsection{Example} 
We show in more detail the computation for the order~10 differential
equation~\eqref{eq:deq-Apery31} of \cref{example-Ap31}.

There are two singularities: 0 and infinity. The point~0 is regular, with exponents
\[0,0,0,0,1,1,\alpha_1,\alpha_2,\alpha_3,\alpha_4,\]
with $\alpha_i$ the four roots of the irreducible polynomial
\[P_\alpha=29412 x^4  - 246240 x^3  + 764259 x^2  - 1042332 x + 527381.\]
The point~$\infty$ is \emph{irregular}. Its exponential parts are
\[1,1,\beta_i\ (i=1,\dots,4),\gamma_ix+3/2\ (i=1,\dots,4),\]
with $\beta_i$ and $\gamma_i$ roots of the irreducible polynomials
\[P_\beta=29412 x^4  - 342000 x^3  + 1482459 x^2  - 2838258 x + 2024696,\quad
P_\gamma=x^4  + 16 x^3  - 112x^2  + 284 x + 4.
\]
Thus, for this equation, 
\[S_0(\mathcal L)=2+\sum_i{\alpha_i}-45=-\frac{1489}{43},\quad
S_\infty(\mathcal L)=2+\sum_i{\beta_i}+4\frac{3}{2}-45=-\frac{1901}{43},\quad
I_\infty(\mathcal L)=60
\]
and the generalized Fuchs equation reduces to
\[-\frac{1489}{43}-\frac{1901}{43}-30=-90.\]
For lower orders~$m$, the optimization problem to be solved is
\begin{gather*}
\max A\quad{\text{subject to}}\\
\begin{split}
A=-m(m-1)
-\left(c_{0,1}+c_{0,2}+c_{0,\{3,4,5,6\}}\frac{500}{43}
+c_{0,\{7,8,9,10\}}4\frac{3}{2}-\frac{m(m-1)}2\right)\\
-\left(c_{1,5}+c_{1,6}+c_{1,\{7,8,9,10\}}\frac{360}{43}-
\frac{m(m-1)}2\right)\\
+4d_{0,\{1,\{7,8,9,10\}\}}+4d_{0,\{2,\{7,8,9,10\}\}}
+16d_{0,\{\{3,4,5,6\},\{7,8,9,10\}\}}
+6d_{0,\{\{7,8,9,10\},\{7,8,9,10\}\}}
\in\mathbb N
\end{split}\\
\intertext{with constraints}
c_{0,1}+c_{0,2}+4c_{0,\{3,4,5,6\}}+4c_{0,\{7,8,9,10\}}
=c_{1,1}+c_{1,2}+c_{1,3}+c_{1,4}+c_{1,5}+c_{1,6}
+4c_{1,\{7,8,9,10\}}=m\\
d_{0,\{1,2\}}+4d_{0,\{1,\{3,4,5,6\}\}}
+4d_{0,\{1,\{7,8,9,10\}\}}=c_{0,1}(m-1),\\
d_{0,\{1,2\}}+4d_{0,\{2,\{3,4,5,6\}\}}+
4d_{0,\{2,\{7,8,9,10\}\}}=c_{0,2}(m-1),\\
d_{0,\{1,\{3,4,5,6\}\}}+
d_{0,\{2,\{3,4,5,6\}\}}+
3d_{0,\{3,4,5,6\},\{3,4,5,6\}}+
4d_{0,\{\{3,4,5,6\},\{7,8,9,10\}\}}
=(m-1)c_{0,\{3,4,5,6\}},\\
d_{0,\{1,\{7,8,9,10\}\}}+
d_{0,\{2,\{7,8,9,10\}\}}+
4d_{0,\{\{3,4,5,6\},\{7,8,9,10\}\}}+
3d_{0,\{7,8,9,10\},\{7,8,9,10\}}
=(m-1)c_{0,\{7,8,9,10\}},\\
\text{and for all $(i,j,k)$,}\qquad
c_{i,j}\in\{0,1\},\quad d_{i,\{j,k\}}\in\{0,1\}.
\end{gather*}
Integrality of~$A$ forces $c_{0,\{3,4,5,6\}}=c_{1,\{7,8,9,10\}}$.
If they are both equal to~1, 
the first two lines of the constraint on~$A$ give a quantity that is
at most $-20$. Making $A\ge0$ then requires $d_{0,\{3,4,5,6\},
\{7,8,9,10\}}=1$. The last constraint then makes $c_{0,
\{7,8,9,10\}}=1$, which turns the constraint on $A$ into
\[-20-6+16+4d_{0,\{1,\{7,8,9,10\}\}}+4d_{0,\{2,\{7,8,9,10\}\}}
+6d_{0,\{\{7,8,9,10\},\{7,8,9,10\}\}}
\ge0.\]
Therefore $d_{0,\{\{7,8,9,10\},\{7,8,9,10\}\}}=1$ and at least one of
$d_{0,\{1,\{7,8,9,10\}\}}$ and $d_{0,\{2,\{7,8,9,10\}\}}$ is~1 too.
The last constraint then shows that $m=9$ or~$m=10$ depending on
whether one or two of them are~1. We know that $m=10$ is
possible: it is the original equation. If $m=9$, then the first
constraint gives~$c_{0,1}+c_{0,2}=1$. Injecting into the constraint
for~$A$ makes $A<0$, a contradiction.

We have thus proved that for a
strict factor of~$\mathcal A$, $c_{0,\{3,4,5,6\}}=c_{1,
\{7,8,9,10\}}=0$.
This makes all variables~0 in the left-hand side of the penultimate
constraint. The last constraint becomes
\[d_{0,\{1,\{7,8,9,10\}\}}+
d_{0,\{2,\{7,8,9,10\}\}}+
3d_{0,\{7,8,9,10\},\{7,8,9,10\}}
=(m-1)c_{0,\{7,8,9,10\}}.\]
If $c_{0,\{7,8,9,10\}}$ was equal to~0, then the constraint on~$A$
would give $c_{0,1}=c_{0,2}=0$ too, which would give $m=0$ in the
second one,
a contradiction. Therefore $c_{0,\{7,8,9,10\}}=1$.
The remaining constraints are
\begin{gather*}
\begin{split}A=-\left(c_{0,1}+c_{0,2}
+6+c_{1,5}+c_{1,6}\right)\qquad\qquad\qquad\qquad\qquad\qquad\qquad\\
+4d_{0,\{1,\{7,8,9,10\}\}}
+4d_{0,\{2,\{7,8,9,10\}\}}
+6d_{0,\{\{7,8,9,10\},\{7,8,9,10\}\}}\ge0,
\end{split}\\
c_{0,1}+c_{0,2}+4
=c_{1,1}+c_{1,2}+c_{1,3}+c_{1,4}+c_{1,5}+c_{1,6}
=m,\\
d_{0,\{1,2\}}
+4d_{0,\{1,\{7,8,9,10\}\}}=c_{0,1}(m-1),\\
d_{0,\{1,2\}}+
4d_{0,\{2,\{7,8,9,10\}\}}=c_{0,2}(m-1),\\
d_{0,\{1,\{7,8,9,10\}\}}+
d_{0,\{2,\{7,8,9,10\}\}}+
3d_{0,\{7,8,9,10\},\{7,8,9,10\}}
=m-1.
\end{gather*}
The second
one then implies that \emph{the order
of a
strict right factor of~$\mathcal{L}$ can only be one of~$\{4,5,6\}$.}

With $m=6$, there is only one solution (meaning that no optimization
is needed), with all the remaining
variables equal to~1 and the bound $A$ on the number of apparent
singularities equal to~4. There are therefore at most~5 regular
singularities (these four and~0, which is a regular singularity
of~$\mathcal L$). Such a factor can be written
\[\mathcal M=\partial_z^6+\frac{a_5}{A}\partial_z^5+\dots+\frac{a_0}
{A^6},\]
with $A$ of degree at most~5. The Newton polygon of~$\mathcal L$ at
infinity has for vertices $(0,0),(6,0),(10,4)$. The largest
possibility for~$\mathcal M$ is therefore~$(0,0),(2,0),(6,4)$, leading
to the following bounds on the degree of the numerators~$a_i$: $
(\deg A-1,\deg A^2-2,\deg A^3-4,\deg A^4-4,\deg A^5-4,\deg A^6-4)$.
Reducing to the same denominator gives the bounds $
(30,29,28,26,26,26,26)$ on the degrees of the coefficients of~$
(\partial^6,\dots,1)$. This is the bound used in Example%
~\ref{example-Ap31},
leading to the discovery of the factor~$\mathcal M$.

With $m=5$, there are several solutions, which are as in the
case
when $m=6$, but with one of~$c_{0,1}$ and~$c_{0,2}$ equal to~0 and
consequently $d_{0,\{1,2\}}=0$ and one of $d_{0,\{1,\{7,8,9,10\}\}}$,
$d_{0,\{2,\{7,8,9,10\}\}}$ equals~0, leading to a bound~$A\le 6+4-
(6+1+1)=2$ on the number of apparent singularities. By the same
reasoning as above, this leads to the bounds $(15,14,13,13,13,13)$ on
the degrees of the coefficients of a factor of order~5. Using about
90~coefficients of the series shows that such a factor does not exist.

Finally, with~$m=4$ the only of the remaining~$d$ variables that
is not~0 is $d_{0,\{7,8,9,10\},\{7,8,9,10\}}$ and the bound on~$A$
becomes $6-(6)=0$. Again, a computation with degree bounds $
(5,4,3,3,3)$ proves that no factor of degree~4 exists.

\subsubsection*{Note on the relaxed problem.} For efficiency reasons,
one may prefer solving the relaxed optimization problem where the
variables $c_
{i,j}$
and $d_{i,\{j,k\}}$ are not restricted to the set~$\{0,1\}$, but
can be real numbers in the interval~$[0,1]$. Also, 
during the optimization, $A$ is
not restricted to be an integer. Then what happens in this example is
that the absence of a factor of order~9 is not detected. Instead, the
optimization finds a solution with $A=4.5$, which leads to a bound
of~4 on the number of apparent singularities and thus of~45 on the
degree of the coefficients. With this bound, sufficiently many
coefficients are computed so that the computation of an approximant
basis finds a
candidate operator. This turns out to be the factor~$\mathcal M$ of
order~6 above. The next stage is to prove its minimality. For order~5,
the optimization of the relaxed problem
gives a bound equal to~5 for~$A$ (to be compared with~2 above),
leading to a bound of~30 on the degree of the coefficients. Using
about~190 coefficients of the series (instead of 90~above) shows that
no such factor exists. For lower orders, the relaxed problem does not
have any solution, concluding the computation.

\subsection{Inhomogeneous minimization} \label{ssec:nhm}
Again, we consider an equation like \cref{eq:defL} and initial conditions for a unique formal power series~$S$ solution of it. Using the method of the previous section, we can assume that it has minimal order. The problem of inhomogeneous minimization is to find an equation
\[\mathcal M(y(z))=B(z),\quad\text{with}\quad \mathcal M=b_s(z)\partial_z^s+\dots+b_0(z),\]
with $s<r$ and rational function coefficients~$b_0,\dots,b_s,B$ ($b_s\neq0$), having~$S$ as a solution. When such an equation exists with~$B\neq0$, applying $B\partial_z-B'$ on both sides of the equation yields a homogeneous linear differential equation of order~$s+1$ satisfied by~$S$, so that 
minimality of~$\mathcal L$ implies $s=r-1$. Without loss of generality (up to replacing $\mathcal M$ by $\frac{1}{B} \mathcal M$) one can assume~$B(z)=1$ and then differentiation implies \[\partial\mathcal M=R(z)\mathcal L\]
for some non-zero rational function~$R$, which is therefore an \emph{integrating factor} of~$\mathcal L$. This implies that~$R$ is a rational function solution of the adjoint equation~\cite[Chap.~III.\S10]{Poole1960}
\begin{equation}\label{eq:adjoint}
\mathcal L^*(R)=0.
\end{equation}
Finding rational solutions of linear differential equations is a classical problem, whose solution can be found by an algorithm due to Abramov~\cite{Abramov89,AbKv91}, with roots in Liouville's work~\cite{Liouville1833}. This algorithm returns a basis of rational solutions of \cref{eq:adjoint}. This is a decision algorithm: if no non-zero rational solution is found, this proves that there is no inhomogeneous linear differential equation of order smaller than~$r$ satisfied by the power series~$S$. Otherwise, minimality implies that the basis consists of one solution~$R(z)$. The operator~$\mathcal M$ (known as the \emph{bilinear concomitant}~\cite{Poole1960}) can be reconstructed coefficient by coefficient (this is equivalent to~\cite[p.~703]{AdamczewskiRivoal2018}). Then by design, $\mathcal M(S)$ is a constant~$c$, which can be computed from the first coefficients of the power series~$S$, completing the computation of the minimal inhomogeneous equation~$\mathcal M(y)=c$ satisfied by~$S$. This computation is summarized in \cref{algo:min-inhom}.

\begin{algorithm}
\caption[]{Minimal inhomogeneous linear differential equation}\label{algo:min-inhom}
\begin{algorithmic}[1]
\Require{\ \quad$\mathcal L=a_r(z)\partial_z^r+\dots+a_0(z)$,\par
\qquad a linear operator of minimal order that vanishes at~$S$;\par
\quad ini: $S_0$ a truncated power series at precision $\ge \max\mathcal Z_\mathcal L$
}
\Ensure{\ \ $\mathcal M=b_{r-1}(z)\partial_z^{r-1}+\dots+b_0(z)$ and $B(z)\in\mathbb Q(z)$,
such that~$\mathcal M(S)=B$\par
\quad or {\sc Fail} if no such pair exists.}
\State $\mathcal L^\star\coloneq \operatorname{adjoint}(\mathcal L)$;
\State $\mathcal S=\Call{BasisRationalSolutions}{\mathcal L^\star}$
\If{$\mathcal S=\emptyset$}{\ \Return{FAIL}}\EndIf
\State Let $R$ be the unique element of $\mathcal S$
\State $b_{r-1}\coloneq R a_r$
\For{$j=r-2,\dots,0$}
       $b_j\coloneq R a_{j+1}-b_{j+1}'$
\EndFor
\State Compute $S$ up to precision $r-\min_j\operatorname{val}_0(b_j)$
\State Let $B$ be the constant term of $\mathcal M(S)$
\State\Return{$\mathcal M,B$}
\end{algorithmic}
\end{algorithm}

%%%%%%%%%  E-functions %%%%%%%%%%%%

\section{Efficient computation of the set of algebraic values taken by  \texorpdfstring{$E$}{E}-functions at algebraic points}\label{sec:E-functions}

\subsection{The Adamczewski-Rivoal algorithm}\label{ssec:AR}
In this section, we consider mainly $E$-functions with Taylor coefficients in~$\mathbb Q$. 
An \emph{$E$-function over~$\mathbb Q$} is a power series  
\[f(z)\coloneq \sum_{n=0}^{\infty} \frac{a_n}{n!} z^n\text{ in }\Q[[z]]\]
with $a_n\in\mathbb Q$ and such that
there exists $C>0$ 
with the following properties:
\begin{enumerate}
\item[$(i)$] $f$ satisfies a homogeneous linear differential equation with 
coefficients in~$\Q(z)$;
\item[$(ii)$] for any $n\ge 0$, $\vert a_n\vert \le C^{n+1}$;
\item[$(iii)$] for any $n\ge 0$, there exists $d_n \in \mathbb N\setminus \{ 0\} $ such that $d_n  \leq C^{n+1}$ and 
$d_na_m\in \Z$ for all~$0\le m\le n$.
\end{enumerate}
We shall sometimes simply write ``LDE'' for ``linear differential equation with 
coefficients in~$\Q(z)$ or in $\Qbar(z)$''; unless otherwise stated, an LDE will be assumed to be homogeneous.  
In the rest of this section, $E$-functions over $\mathbb Q$ are simply called $E$-functions. This is justified by the fact that most of the discussion applies to more general settings, in particular to $E$-functions with Taylor coefficients in $\Qbar$ and to $E$-functions in Siegel's more general sense, as discussed in \cref{ssec:extensions1,ssec:extensions2}.

$E$-functions are entire functions (by $(ii)$). 
Polynomials in $\Q[z]$ are trivial examples of $E$-functions; all non-polynomial $E$-functions are transcendental over $\Q(z)$.
The class of $E$-functions includes the exponential function $\exp(z)$, Bessel's function of the first kind
\[J_0(z) \coloneq  \sum_{m = 0}^\infty \frac{\left(-1\right)^{m}}{m !^{2}} \left(\frac{z}{2}\right)^{2 m}= \, _0 F_1 [\,\cdot \, ; 1 ; -z^2/4],\]
and more generally the {\em hypergeometric $E$-functions}, i.e. series of the form 
\[
{}_pF_q[a_1, \ldots, a_p;b_1, \ldots, b_q;\lambda z^{q-p+1}] \coloneq \sum_{n=0}^\infty \frac{(a_1)_n\cdots (a_p)_n}{(1)_n(b_1)_n\cdots (b_q)_{n}}\lambda^n z^{(q-p+1)n}
\] 
with rational parameters~$a_i, b_j$, $q\ge p\ge 0$, $\lambda \in \Qbar^*$ and where $(\alpha)_0\coloneq 1$,   $(\alpha)_n\coloneq \alpha(\alpha+1)\cdots (\alpha+n-1)$ for $n\ge 1$. $E$-functions form a sub-ring of the ring of formal power series in $\Q[[z]]$, stable by $d/dz$ and $\int_0^z$; these properties can be used to construct many examples of $E$-functions starting from hypergeometric series. Shidlovskii has proved in \cite[p.~184]{shidl} that any $E$-function solution of an LDE of order $1$ is of the form $p(z)e^{\lambda z}$ for some $p(z)\in \Qbar[z]$ and $\lambda \in \Qbar$. Gorelov has proved in \cite{gorelov2} that any $E$-function solution of an LDE of order $2$ is a $\Qbar(z)$-linear combination of hypergeometric $E$-functions with $p=q=1$ (he had obtained earlier in \cite{gorelov1} a similar but more precise result for $E$-functions solution of an inhomogeneous LDE of order $1$). 
However, Fres\'an and Jossen have recently showed in~\cite{FrJo21} that not all $E$-functions are $\Qbar(z)$-linear combinations of hypergeometric $E$-functions, nor even more generally polynomials in hypergeometric $E$-functions with algebraic coefficients.

As of today, 
no algorithm is known neither for deciding whether a linear differential equation 
$\mathcal L(y(z))=0$ admits solutions that are $E$-functions, nor for deciding whether 
a solution of $y(z)$ of~$\mathcal L$, uniquely determined by sufficiently many initial conditions, is  an $E$-function.
It is actually not clear whether these questions are decidable or not.
Consequently, the algorithm described below relies on the following assumption:
\begin{enumerate}
\item[(A)] An oracle guarantees that the input $f$ is an $E$-function.
\end{enumerate}
In practice, an $E$-function is given by an explicit  expression for its Taylor coefficients as a multiple hypergeometric sum and $\mathcal L$ 
can then be computed for instance by Zeilberger's creative telescoping algorithm~\cite{Zeilberger91}.

Siegel initiated a program to determine when an $E$-function takes a
transcendental value at an algebraic point~\cite{siegel}. This culminated with
the celebrated Siegel-Shidlovskii theorem: given a vector $Y$ of $E$-functions
$f_1, \ldots, f_n$ solution of a differential system $Y'=AY$ with a
matrix $A$ with elements in $\Qbar(z)$, the transcendence degree over
$\Qbar(z)$ of the field generated by $f_1(z), \ldots, f_n(z)$ over $\Qbar(z)$
is equal to the transcendence degree over $\Qbar$ of the field generated by
$f_1(\alpha), \ldots, f_n(\alpha)$ over $\Qbar$ for every non-zero algebraic
number $\alpha$ which is not a singularity of $A$ (i.e., a pole of some
element of $A$). In 2006, Beukers~\cite[Thm.~1.3]{Beukers06} refined this theorem by
proving that any homogeneous polynomial relation between the values
$f_1(\alpha), \ldots, f_n(\alpha)$ with coefficients in $\Qbar$ is a
specialization of a homogeneous polynomial relation between the functions
$f_1(z), \ldots, f_n(z)$ with coefficients in $\Qbar(z)$, again when $\alpha$
is not a singularity of $A$. A less precise version of this theorem (but for
$E$-functions in Siegel's more general sense; see \cref{ssec:extensions2} for
details) had been obtained in 1996 by Nesterenko and Shidlovskii \cite{NeS96},
where $\alpha$ is simply assumed not to lie in a certain finite set $S$,
depending on the $f_j$'s but not specified in their article. A fundamental
consequence of their result is that a transcendental $E$-function $f$ takes
only finitely many algebraic values when evaluated at algebraic points. To see
this, one considers a non-trivial minimal inhomogeneous differential equation
with polynomial coefficients
$p+\sum_{j=0}^\mu p_j f^{(j)}=0$ satisfied by $f$ over $\Qbar(z)$ and
applies the Nesterenko-Shidlovskii theorem to the functions $f_1\coloneq 1$, $f_2\coloneq f,
\ldots, f_{\mu}\coloneq f^{(\mu-1)}$. They are linearly independent over $\Qbar(z)$,
hence the numbers $1, f(\alpha), \ldots, f^{(\mu-1)}(\alpha)$ are
$\Qbar$-linearly independent over $\Qbar$ for all $\alpha \in \Qbar\setminus
S$; in particular $f(\alpha) \notin \Qbar$ for all such $\alpha$'s. For
$E$-functions in the strict sense, we now know thanks to Beukers~\cite{Beukers06} that if
$\alpha\in \Qbar^*$ is not a root of the leading coefficient $p_{\mu}$ above,
then $f(\alpha)\notin \Qbar$.

\begin{algorithm}[t]
\caption[]{Algebraic values of $E$-functions over~$\mathbb Q$}\label{algo:alg-values}
\begin{algorithmic}[0]
\Require{$\mathcal L=a_r(z)\partial_z^r+\dots+a_0(z)$;\\
\phantom{Input}ini: $f_0$ a truncated power series at precision $p_0\ge r$\\ \qquad\qquad specifying a unique solution~$f\in\mathbb Q[[z]]$ of $\mathcal L(f)=0$.\\
\phantom{Input}\emph{It is assumed that $f$ is an $E$-function.}}
\Ensure{Either ``$f$ is a polynomial'',\\ 
\qquad\qquad or the finite set of all identities $f(\alpha)=\beta$ with algebraic $\alpha$ and $\beta$.}
\State $\mathcal L_{\min}\coloneq $\Call{MinimalRightFactor}{$\mathcal L$,ini}\Comment{\Cref{algo:min-hom}}
\State $\mathcal L_{\text{inhom}},g\coloneq $\Call{MinimalInhomogeneousRightFactor}{$\mathcal L_{\min}$}\Comment{\Cref{algo:min-inhom}}
\If{$\ord\mathcal L_{\text{inhom}}=0$}{ \Return{$f$ is a polynomial}}\EndIf
\State Define the polynomials $v_0,\dots,v_{s+1}$ by $\mathcal L_{\text{inhom}}(f)-g=v_0f^{(s)}-v_1f^{(s-1)}-\dots-v_sf-v_{s+1}$
\State Form the companion matrix~$M$ s.t. $(0, f', f'',\dots,f^{(s)})^{\mathsf T}=M\cdot (1,f,f',\dots,f^{(s-1)})^{\mathsf T}$
\State $\mathcal R\coloneq \{f(0)=f_0(0)\}$
\For{$\mu\in\Q[z]$ irreducible factor of~$v_0$ with $\mu(0)\neq0$}
\State Write $\alpha$ for a root of~$\mu$
\State $B\coloneq $\Call{BeukersAlgo}{$M,\alpha$}
\State $(b_1,\dots,b_m)\coloneq $basis of the left kernel of $B(\alpha)$\Comment{Basis of algebraic relations at~$\alpha$}
\If{there exists $(\beta,-1,0,\dots,0)$ in $\operatorname{Span}(b_1,\dots,b_m)$}
\State $\mathcal R\coloneq \mathcal R\cup\{f(\alpha)=\beta\}$
\EndIf
\EndFor
\Return $\mathcal R$
\end{algorithmic}
\end{algorithm}

Thus, in order to completely determine when an $E$-function takes a transcendental value at a given non-zero algebraic point, one issue needs to be dealt with: what happens for the (finite number of) algebraic numbers that are roots of $p_{\mu}$ (in the same setting as above). This was done by Adamczewski and Rivoal~\cite{AdamczewskiRivoal2018} by pushing further Beukers' ideas from~\cite{Beukers06}. The end result is an algorithm that takes as input an $E$-function~$f$ and
either detects that $f$ is algebraic (in which case it is even a polynomial),
or computes the (finite) list of identities $f(\alpha)=\beta$ for algebraic values $\alpha$ and~$\beta$.

\Cref{algo:alg-values} gives a version of the algorithm by Adamczewski and Rivoal that benefits from the minimization algorithms of \cref{sec:minimization}. It is stated here for the $E$-functions over~$\mathbb Q$ considered in this section (see the comments in \cref{ssec:extensions1,ssec:extensions2} below for its extension to more general settings).
The algorithm relies on two results due to Beukers~\cite{Beukers06}: 
\begin{enumerate}
\item If $F=(f_1,\dots,f_n)^{\mathrm T}$ with $E$-functions~$f_i$ is a solution of~$Y'=AY$, the entries of $A$ belonging to~$\Qbar(z)$ and if~$f_1(z),\dots,f_n(z)$ are linearly independent over~$\Qbar(z)$, then for any 
non-zero~$\alpha$ that is not a pole of~$A$, the numbers~$f_1(\alpha),\dots,f_n(\alpha)$ are linearly independent over~$\Qbar$~\cite[Corollary 1.4]{Beukers06};
\item Under the same assumptions, there exists a matrix~$M$ with entries in~$\Qbar[z]$ such that~$F=ME$, and $E$ is a vector of $E$-functions solution to a system $Y'=BY$ where $B$ does not have any non-zero pole~\cite[Theorem 1.5]{Beukers06}.\label{item-2}
\end{enumerate} 
Starting from a linear differential operator $\mathcal L$ and initial conditions specifying an $E$-function~$f$, the algorithm first computes a minimal inhomogeneous equation of order~$s$ for~$f$. (This step also allows to detect and discard a polynomial~$f$.) By  minimality of this equation, $F=(1,f,f',\dots,f^{(s-1)})$ is a vector of $\Qbar(z)$-linearly independent $E$-functions solution to a matrix deduced from the equation. Given a matrix~$M$ as in the result~(\ref{item-2}) above, it follows that the points~$\alpha$ where $(1,f(\alpha),\dots,f^{(s-1)}(\alpha))$ are linearly dependent over~$\Qbar$ are non-zero poles of~$A$ where the left-kernel of~$M$ is not reduced to~0.
The specific case of $f(\alpha)$ being algebraic corresponds to the existence of a non-zero vector in that kernel whose first two coordinates only are not zero.
The remaining question is the computation of these matrices~$M$, which is described in~\cref{ssec:beukersalgo}.

Note that the first two steps of \Cref{algo:alg-values}, i.e. the calls to
\Call{MinimalRightFactor}{$\mathcal L$,ini} and
\Call{MinimalInhomogeneousRightFactor}{$\mathcal L_{\min}$}, are not specific
to $E$-functions, and both return an output even when $f$ is not an
$E$-function. In that case, \Call{BeukersAlgo}{$M,\alpha$} terminates (by
design) and it may even output $\alpha$'s such that $f(\alpha)\in \Qbar$; but
it can no longer be claimed that all such $\alpha$'s have been found.

\subsection{Beukers' algorithm and desingularization}\label{ssec:beukersalgo}

Algorithm \ref{algo:alg-values} concludes with a call to Algorithm \Call{BeukersAlgo}{$M,\alpha$} described below.
It is a clever desingularization process, 
which is different from the one developed by Barkatou and Maddah~\cite{BaMa15}, in that 
it does not rely on Moser's reduction~\cite{Moser60,Barkatou95}. The end result is the following (Theorem 1.5 in~\cite{Beukers06}).

\begin{theorem} \label{theo:beukersdesing}
        Let $Y = (f_1, \ldots, f_n)^{\mathrm T}$ be a vector of $\overline{\Q}(z)$-linearly independent $E$-functions satisfying $Y' = AY$, where $A$ is an $n \times n$ matrix with entries in~$\overline{\Q}(z)$. Then, there exists a vector of $E$-functions $Z=(e_1,\ldots, e_n)^{\mathrm T}$ solution of $Z'=BZ$ with $B$ having entries in  $\overline{\Q}[z,1/z]$, and there exists a polynomial matrix $M$ with entries in  $\overline{\Q}[z]$ and $\det(M)\neq 0$, such that $(f_1, \ldots, f_n)^{\mathrm T} = M \cdot (e_1, \ldots, e_n)^{\mathrm T}$.
\end{theorem}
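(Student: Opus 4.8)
The plan is to remove the non-zero finite singularities of $A$ one at a time by multiplying $Y$ on the left by a suitably chosen polynomial matrix, so that after finitely many steps the remaining system has no non-zero finite pole. Fix a non-zero $\rho \in \overline{\Q}$ that is a pole of $A$. The key observation, which is where the hypothesis that the $f_i$ are $E$-functions (hence entire, by property $(ii)$) is used, is that the components of $Y$ are analytic at $\rho$: thus, although $\rho$ is a singularity of the \emph{system}, it must be an \emph{apparent} singularity in the sense that there is a full basis of solutions holomorphic near $\rho$. The first step is therefore a local analysis at $z = \rho$: I would put the system in a local normal form and show that, because all solutions are holomorphic, the generalized exponents at $\rho$ are non-negative integers and the local monodromy is trivial, so there exists a $\overline{\Q}(z)$-gauge transformation, realized by a polynomial matrix $M_\rho$ with $\det M_\rho$ a power of $(z-\rho)$ up to a non-zero constant, conjugating $A$ to a matrix holomorphic at $\rho$ — i.e. removing the pole at $\rho$ without introducing any new finite pole.

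The second step is to make this explicit and uniform so that it can be iterated. Concretely, I would look for $M_\rho$ of the shape $\Id + (z-\rho)^{-1} N$ or, more robustly, a product of such elementary "shearing"-type transformations (this is precisely the place where Beukers' argument avoids Moser reduction): one reads off from the local solution matrix at $\rho$, which is holomorphic and invertible there after clearing denominators, a polynomial matrix $M_\rho$ such that $Y = M_\rho \widetilde Y$ with $\widetilde Y$ again a vector of functions holomorphic everywhere, solution of $\widetilde Y' = A_\rho \widetilde Y$ where $A_\rho = M_\rho^{-1} A M_\rho - M_\rho^{-1} M_\rho'$ has no pole at $\rho$ and no new non-zero finite pole. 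One must check two bookkeeping facts: (a) the transformation does not create poles at the other non-zero singularities of $A$ (it is regular and invertible there), and (b) the components of $\widetilde Y$ are still $E$-functions — or at least still entire, which together with the fact that they satisfy a system with entries in $\overline{\Q}(z)$ and only $0$ and $\infty$ as possible finite/infinite singularities at the end is what we ultimately need; in fact one argues $\widetilde Y$ is again a vector of $E$-functions because $E$-functions form a ring stable under $d/dz$ and under multiplication by $\overline{\Q}[z]$, and $M_\rho$ has polynomial entries.

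The third step is the induction: the number of non-zero finite poles of $A_\rho$ is strictly smaller than that of $A$ (counted appropriately, e.g. by the degree of the radical of the denominator away from $z=0$), so iterating over all non-zero finite poles $\rho_1, \ldots, \rho_k$ and setting $M \coloneq M_{\rho_1} M_{\rho_2} \cdots M_{\rho_k}$ yields $Y = M Z$ with $Z$ a vector of $E$-functions, $\det M \neq 0$ (it is a non-zero constant times a product of powers of the $(z-\rho_i)$), $M$ polynomial, and $Z' = B Z$ with $B = M^{-1} A M - M^{-1} M'$ having its only possible finite pole at $z = 0$, hence $B \in \overline{\Q}[z, 1/z]$. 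Linear independence of the components of $Z$ over $\overline{\Q}(z)$ is inherited from that of $Y$ since $\det M \neq 0$. I expect the main obstacle to be the local step: proving cleanly that an apparent singularity of the \emph{system} (not merely of a scalar operator) can be removed by a polynomial gauge transformation with controlled determinant, and exhibiting that transformation algorithmically from the local holomorphic solution matrix — this is exactly the content that the paper promises to "make more explicit, with a detailed proof," and it is where one must be careful not to invoke Moser's reduction and instead use only the holomorphy of all local solutions.
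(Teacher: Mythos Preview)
There are two genuine gaps in your argument, both at places where the paper invokes deep arithmetic properties of $E$-functions that your proposal tries to replace by softer analytic facts.

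First, your claim that any non-zero pole $\rho$ of $A$ is an \emph{apparent} singularity of the system (i.e., that a \emph{full fundamental matrix} of solutions is holomorphic at $\rho$) does not follow from the mere fact that the single solution vector $Y=(f_1,\ldots,f_n)^{\mathrm T}$ is holomorphic there. Having one holomorphic solution does not force all solutions to be holomorphic; without this, no rational gauge transformation can remove the pole, and your local step cannot even get started. The paper fills this gap via Property~{\bf (P1')}, a system version of Andr\'e's theorem, whose proof uses a non-trivial differential Galois argument: one shows that the $\overline{\Q}$-span of the Galois orbit $\{\sigma(Y):\sigma\in\mathcal G\}$ already has dimension~$n$, and then Andr\'e's theorem applied to the minimal operator of each $f_i$ forces every $\sigma(f_i)$, hence the whole fundamental matrix, to be holomorphic at~$\rho$.

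Second, your justification that the components of $\widetilde Y=M_\rho^{-1}Y$ are again $E$-functions is incorrect. You argue that this follows because $M_\rho$ has polynomial entries and $E$-functions are stable under multiplication by $\overline{\Q}[z]$ --- but you are applying $M_\rho^{-1}$, not $M_\rho$, and $M_\rho^{-1}$ has \emph{rational} entries with poles at~$\rho$. Each component of $\widetilde Y$ is therefore of the form $g/(z-\rho)^k$ with $g$ an $E$-function vanishing at $\rho$ to order at least~$k$; that such a quotient is again an $E$-function is exactly Beukers' Property~{\bf (P2)} (his Proposition~4.1), which is not a formal ring-stability statement and must be invoked explicitly. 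Being entire and satisfying an LDE over $\overline{\Q}(z)$ is not sufficient to be an $E$-function, so your fallback ``or at least still entire'' does not rescue the conclusion of the theorem.
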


The key properties used in the proof of \cref{theo:beukersdesing} are 
the statements {\bf (P1)}, {\bf (P1')} and {\bf (P2)} listed below.
Note that for an $E$-function $f\in\Q[[z]]$, we write $L^{\textrm{min}}_{f}$ for  the monic linear differential operator in $\Q(z)\langle\partial_z\rangle$ of minimal order that cancels~$f$.

\medskip 

{\bf (P1)} For any $E$-function $f\in \Q[[z]]$, the finite non-zero singularities of $L^{\textrm{min}}_{f}$ are apparent; % (Andr\'e 2000). 

\smallskip {\bf (P1')} 
If $A$ is an $n \times n$ matrix with entries in~$\overline{\Q}(z)$,
and if $F$ is a vector of $\overline{\Q}(z)$-linearly independent $E$-functions satisfying $F'=AF$,
then the finite non-zero singularities of the system $Y'=AY$ are apparent; 

\smallskip {\bf (P2)} If an $E$-function $f$ and $\alpha\in \Qbar$ are such that $f(\alpha)\in \Qbar$, then $({f(z)-f(\alpha)})/({z-\alpha})$ is an $E$-function. % (Beukers 2006). 

\smallskip

Property {\bf (P1)} is \emph{Andr\'e's theorem}~\cite[Cor.~4.4]{Andre00}
and {\bf (P2)} is an important property of $E$-functions proved by Beukers~\cite[Prop.~4.1]{Beukers06}.
Property {\bf (P1')} is a system version of Andr\'e's theorem, which is not, to our knowledge, explicitly stated in the literature, although it is implicitly contained in Beukers' proof of his Theorem 1.5 in~\cite[p.~378]{Beukers06}. 
For completeness, we detail the proof of {\bf (P1')}, which goes along the following lines.

\medskip
\emph{Proof of {\bf (P1')}}.
Let $\mathcal{G}$ be the differential Galois group of (the Picard-Vessiot field of) $Y'=AY$.
Let $V$ 
be the $\Qbar$-vector space
generated by the orbit $\{ \sigma ( F ) \mid \sigma \in \mathcal{G}\} $, 
where $F = (f_1, \ldots, f_n)^{\mathrm T}$ is a vector of $E$-functions satisfying $F'=AF$, linearly independent over~$\Qbar(z)$.
The conclusion of {\bf (P1')} clearly follows by combining the following two steps.

{\bf Step 1}. The dimension of $V$ over $\Qbar$ is equal to $n$, hence one can extract from $V$ a fundamental matrix of solutions $\mathcal{F}$, whose first column is $F$.

{\bf Step 2}. All the entries of $\mathcal{F}$ are holomorphic at all non-zero points $\alpha \in \mathbb{C}\setminus \{ 0\}$.

\medskip
\emph{Proof of {\bf Step 2}}: Let $L_i \coloneq  L^{\textrm{min}}_{f_i}$. Since elements of $\mathcal{G}$ commute with differentiation, all $\sigma(f_i)$ are solutions of $L_i$ for all $\sigma \in \mathcal{G}$.
By {Andr\'e's theorem} {\bf (P1)}, $\sigma(f_i)$ has no true singularity in $\mathbb{C}\setminus \{ 0\}$. Hence $\mathcal{F}$ is holomorphic at any $\alpha \in \mathbb{C}\setminus \{ 0\}$. 

\medskip
\emph{Proof of {\bf Step 1}}: 
If $A$ is a companion matrix, then the shape of the system $Y'=AY$ implies that $F$
is of the form $F=(f, f', \ldots, f^{(n-1)})^{\mathrm T}$, where $f$ is the $E$-function $f=f_1$.
The linear independence assumption implies that 
$L^{\textrm{min}}_{f}$ has order $n$.
By~\cite[Corollary 1.38]{PuSi03}
(see also~\cite[p.~190, Proposition~3]{BertrandBeukers1985}),
the dimension of the 
$\Qbar$-vector space~$\tilde V$ generated by the orbit $\{ \sigma (f) \mid \sigma \in \mathcal{G}\} $
is equal to~$n$.
On the one hand, this dimension is upper bounded by the dimension of $V$,
since any linear relation among the entries of $V$ yields a linear relation among the elements of the set $\{ \sigma ( f ) \mid \sigma \in \mathcal{G}\} $.
On the other hand, $V$ is included in the solution space of $Y'=AY$, hence it has dimension at most~$n$.
Therefore, $\dim_{\Qbar}(V)=n$, and the assertion is proved in the companion case.

Now, if $A$ is a general matrix, by the cyclic vector lemma (see
e.g.~\cite[Thm~3.11]{ChKo02},
or~\cite[Proposition 2.9]{PuSi03}) the system $Y'=AY$ is ``gauge equivalent'' to $Z'=CZ$, where~$C$ is a companion matrix with entries in~$\Qbar(z)$. 
This means that there exists an invertible matrix~$P$ with entries in~$\Qbar(z)$
such that $Z\coloneq P \cdot Y$ satisfies $Z'=P[A] \cdot Z$,
where $P[A]\coloneq (PA+P')P^{-1}$ is equal to a companion matrix~$C$.
Moreover, 
by~\cite[\S 6]{Cope36} (see also~\cite[Lemma 2.10]{PuSi03}),
the entries of the matrix $P$ can be chosen to be polynomials in $\Qbar[z]$, of degree at most $n-1$.
By construction, $G\coloneq P\cdot F$ satisfies $G'= C \cdot G$.
Hence, the vector $G$ is necessarily of the form
$G\coloneq (g,g',\ldots,g^{(n-1)})^{\mathrm T}$, where $g$ is a $\Qbar[z]$-linear combination of the $E$-functions~$f_i$.
In particular, $g$ is itself an $E$-function.
Moreover, $L^{\textrm{min}}_{g}$ has order $n$: indeed, any $\Qbar[z]$-linear combination $0 = \mathbf{v} \cdot G$ between $g,g',\ldots,g^{(n-1)}$ yields a $\Qbar[z]$-linear combination $0 = (\mathbf{v}P) \cdot F$ between the entries of $F$;
since these are assumed linearly independent over $\Qbar(z)$, and since $P$ is invertible, $\mathbf{v}$ is necessarily zero.
We are now in position to apply the companion case.
Since gauge equivalent systems have the same differential Galois group~\cite[p.~13]{Singer09}, the new companion system $Z'=CZ$ has differential Galois group~$\mathcal{G}$. 
By applying the companion case, we deduce that $\dim_{\Qbar}(V_C)=n$, where 
$V_C$ is the $\Qbar$-vector space generated by the orbit 
$\{ \sigma (G) \, | \, \sigma \in \mathcal{G}\} $.
It remains to show that $\dim_{\Qbar}(V) = \dim_{\Qbar}(V_C)$. 
Choose $\sigma_1, \ldots, \sigma_n$ in $\mathcal{G}$ such that $\sigma_1(G), \ldots, \sigma_n(G)$ 
are linearly independent over~$\Qbar$.
Then, $\sigma_1(F), \ldots, \sigma_n(F)$ 
are also linearly independent over~$\Qbar$, because of the relation $G=P \cdot F$ and the fact that all elements in $\mathcal{G}$ leave $P$ invariant.
It follows that $V$ has dimension at least~$n$; since $V$ is included in the solution space of $Y'=AY$, it also has dimension at most~$n$, therefore $\dim_{\Qbar}(V) =n$, which concludes the proof.
\hfill  $\square$

\medskip
We now prove Theorem \ref{theo:beukersdesing}.
Our proof is inspired by that of Beukers in~\cite{Beukers06}.
The main difference is that our proof does not depend on a specific desingularization procedure for linear differential systems.

\begin{proof}[Proof of Theorem \ref{theo:beukersdesing}] 

We make use of a \emph{desingularization lemma}~\cite[Theorem~2]{BaMa15}:
there exists a polynomial matrix $M$ with entries in  $\overline{\Q}[z]$ and with $\det(M) \neq 0$ such that the finite poles of $B = M[A]\coloneq M^{-1} (AM-M')$ are exactly the true (i.e, non-apparent) singularities of~$Y'=AY$
and such that $\det(M)$ is a non-zero polynomial whose roots are among the apparent singularities of~$Y'=AY$. (See also \cref{prop:algo-beukers} below.)

In our case, by Property {\bf (P1')} above, the entries of $B$ are in  $\overline{\Q}[z,1/z]$.

Define $ Z = (e_1, \ldots, e_n)^{\mathrm T} \coloneq  M^{-1} \cdot  (f_1, \ldots, f_n)^{\mathrm T}$, so that  
 $(f_1, \ldots, f_n)^{\mathrm T} = M \cdot (e_1, \ldots, e_n)^{\mathrm T}$. A simple computation shows that  $Z'=BZ$. 
 It remains to prove that all the $e_i$'s are $E$-functions.
 The proof relies on Property~{\bf (P2)} above.
 
 By definition, each $e_i$ is equal to $\frac{1}{\det (M)} \cdot \sum_{j=1}^n p_{i,j} f_j$ for polynomials $p_{i,j}$ in~$\overline{\Q}[z]$. Since $B$ has no non-zero pole, each $e_i$ is holomorphic 
 at every apparent singularity $\rho \neq 0$ of $Y'=AY$. Therefore, 
 $ \sum_{j=1}^n p_{i,j} f_j$ is an $E$-function which vanishes at any  root of $\det(M)$ at an order at least equal to the multiplicity of that root in $\det(M)$. By repeated application of Property~{\bf (P2)}, it follows that $e_i$ is an $E$-function.
\end{proof}
Beukers' proof~\cite[p.~378]{Beukers06} of Theorem~\ref{theo:beukersdesing} actually contains a general effective desingularization process, which deserves to be stated independently
of the context of $E$-functions.
It is given in Algorithm~\ref{algo:Beukers}, whose properties are summarized in the following.
\begin{proposition}\label{prop:algo-beukers}
Let $A$ be an $n\times n$ matrix with entries in ${\mathbb Q}(z)$ and let $\alpha\in\overline{\mathbb Q}$ be such that a fundamental solution~$\mathcal Y$ of $Y'=AY$ is holomorphic at~$\alpha$. Then \cref{algo:Beukers} computes a matrix of polynomials~$B\in(\mathbb Q(\alpha)[z])^{n\times n}$ such that 
$\mathcal Y=B\mathcal Z$ with $\mathcal Z$ a fundamental solution of $Z'=CZ$ also holomorphic at~$\alpha$ and $C\in(\mathbb Q(\alpha)(z))^{n\times n}$ only has poles where~$A$ does, except at $\alpha$, where it is holomorphic.
\end{proposition}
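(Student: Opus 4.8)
The plan is to follow the structure of Beukers' argument in~\cite[p.~378]{Beukers06}, isolating it from the $E$-function context. The key object is a local, one-singularity-at-a-time reduction: at a finite non-zero singularity~$\alpha$ of~$Y'=AY$ that is \emph{apparent} (i.e., where a fundamental solution~$\mathcal Y$ is holomorphic), one constructs a polynomial gauge transformation~$B$ that removes the pole at~$\alpha$ without introducing any new finite pole. First I would describe precisely what Algorithm~4 does: working at the point~$\alpha$, it computes the local expansion of a fundamental matrix and produces from it a polynomial matrix~$B\in(\mathbb Q(\alpha)[z])^{n\times n}$ — essentially a unimodular-up-to-a-power-of-$(z-\alpha)$ transformation reflecting the local exponent structure — such that $C\coloneq B[A]=B^{-1}(AB-B')$ is holomorphic at~$\alpha$. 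The fact that the transformation can be taken with entries in~$\mathbb Q(\alpha)[z]$ rather than $\mathbb Q(\alpha)(z)$ is exactly the polynomial-gauge statement of~\cite[\S6]{Cope36}, already invoked in the proof of \textbf{(P1')}, applied over the field~$\mathbb Q(\alpha)$.

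Next I would verify each assertion in the statement in turn. \emph{Relation $\mathcal Y=B\mathcal Z$ and $Z'=CZ$}: this is the definitional identity of a gauge transformation — setting $\mathcal Z\coloneq B^{-1}\mathcal Y$ one computes $\mathcal Z'=(B^{-1})'\mathcal Y+B^{-1}\mathcal Y'=-B^{-1}B'B^{-1}\mathcal Y+B^{-1}A\mathcal Y=B^{-1}(A-B'B^{-1})B\,\mathcal Z=C\mathcal Z$, which is the same routine computation as in the proof of \Cref{theo:beukersdesing}. \emph{$\mathcal Z$ holomorphic at~$\alpha$}: since $B$ is polynomial it is holomorphic at~$\alpha$, and the whole point of the construction is that $\det B$ vanishes at~$\alpha$ to exactly the order needed so that $B^{-1}\mathcal Y$ stays holomorphic; this is where one uses that $\alpha$ was an \emph{apparent} singularity, i.e., $\mathcal Y$ itself is holomorphic there. \emph{$C$ only has poles where~$A$ does, except at~$\alpha$}: away from~$\alpha$ the matrix~$B$ is invertible precisely where $\det B\neq0$, and $\det B$ is (a power of) $(z-\alpha)$ up to a nonzero constant, so $B^{-1}$ is holomorphic off~$\alpha$; hence the only possible new finite poles of $C=B^{-1}(AB-B')$ lie at~$\alpha$, and the construction guarantees holomorphy there. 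The coefficient field claim $C\in(\mathbb Q(\alpha)(z))^{n\times n}$ follows because every step — local expansion at the $\mathbb Q(\alpha)$-rational (or conjugate-stable) point, linear algebra over $\mathbb Q(\alpha)$, the Cope polynomial bound — stays within $\mathbb Q(\alpha)$.

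I would then note that the proposition is stated for a single~$\alpha$; the global desingularization used in \Cref{theo:beukersdesing} (the reference to~\cite[Theorem~2]{BaMa15}) is obtained by iterating Algorithm~4 over the finitely many apparent singularities, the ``except at~$\alpha$'' clause ensuring that clearing one apparent singularity never resurrects another, so after finitely many steps all apparent singularities are gone and the accumulated product of the $B$'s is the polynomial matrix~$M$ with $\det M\neq0$ whose roots lie among the apparent singularities. The main obstacle in writing this cleanly is the local-analysis step: making rigorous that one can read off from the holomorphic fundamental matrix~$\mathcal Y$ at~$\alpha$ a \emph{polynomial} gauge matrix~$B$ with the correct order of vanishing of $\det B$ — in effect a Smith-normal-form / Hermite-type normalization of the local solution lattice at~$\alpha$ — and checking that the resulting~$C$ is genuinely holomorphic at~$\alpha$ rather than merely having a smaller-order pole. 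This is precisely the content of the ``clever desingularization process'' attributed to Beukers, and it is the part that requires care with exponents and the structure of the local solution module; everything else is bookkeeping with the gauge-transformation formula.
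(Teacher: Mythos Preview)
Your proposal has a genuine gap: it treats the proposition as an \emph{existence} statement (``there is some polynomial gauge $B$ removing the pole at~$\alpha$'') and then defers the hard part to a vague appeal to Smith/Hermite normalization and to Cope's lemma. But the proposition is a \emph{correctness} claim about a specific iterative procedure (Algorithm~4), and the proof must verify the loop invariant of that procedure. The algorithm does not compute a local expansion of~$\mathcal Y$, nor does it use cyclic vectors; Cope's result from~\cite{Cope36} is irrelevant here.

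What actually has to be shown is this. Write $W=\det\mathcal Y$; since $W'=\operatorname{Trace}(A)W$, the integer $r=\val_{z=\alpha}W$ equals the residue of $\operatorname{Trace}(A)$ at~$\alpha$, which is what the algorithm computes first. Each pass through the loop picks the highest-order pole~$k$ of the current matrix~$C$ at~$\alpha$, selects a row~$i$ realizing it, lets $v$ be the coefficient vector of $(z-\alpha)^{-k}$ in that row, and sets $\tilde B=BM^{-1}D$ with $M$ constant invertible having $v$ as its $i$th row and $D=\operatorname{diag}(1,\dots,z-\alpha,\dots,1)$. The non-obvious point---the one you flag as ``the main obstacle'' but do not resolve---is why $\tilde{\mathcal Z}=D^{-1}M\mathcal Z$ remains holomorphic at~$\alpha$. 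The reason is concrete: the $i$th row of $C\mathcal Z=\mathcal Z'$ is holomorphic, while the $i$th row of~$C$ has leading singular part $v/(z-\alpha)^{k}$; hence $v\,\mathcal Z(\alpha)=0$, so the $i$th row of $M\mathcal Z$ is divisible by $(z-\alpha)$ and $D^{-1}M\mathcal Z$ is holomorphic. Since $\det(M^{-1}D)$ is a nonzero constant times $(z-\alpha)$, one gets $\val_{z=\alpha}\det\tilde{\mathcal Z}=\val_{z=\alpha}\det\mathcal Z-1$, so after at most~$r$ iterations $\det\mathcal Z(\alpha)\neq0$ and $C=\mathcal Z'\mathcal Z^{-1}$ is holomorphic at~$\alpha$. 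That $\tilde C$ introduces no new finite poles is immediate from the explicit formula $\tilde C=D^{-1}MCM^{-1}D-D^{-1}D'$, and the claim $\det B\in\Qbar^\times\cdot(z-\alpha)^{\mathbb N}$ that you assert without proof follows from the same product formula. None of this is captured by a Smith-form or Cope-type argument.
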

\begin{proof}
We reproduce Beukers' proof, with more details.

By hypothesis, the determinant $W=\det\mathcal Y$ is holomorphic in a neighborhood of~$z=\alpha$. If $W(\alpha)\neq0$, then $\mathcal Y^{-1}$ is holomorphic in the neighborhood of $z=\alpha$ and therefore so is $A=\mathcal Y'\mathcal Y^{-1}$. In that case, $C=A$ and $B=\Id_n$ gives the result. 

Otherwise, as $W\neq0$, there exists~$r\in\mathbb N_{>0}$ such that $W(z)\sim c(z-\alpha)^r$ for $z\rightarrow\alpha$ with $c\neq0$. Since $W$ satisfies $W'=\operatorname{Trace}(A)W$, it follows that~$r$ is the residue of $\operatorname{Trace}(A)$ at $z=\alpha$. Starting with $B=\Id_n$, $C=A$ and $\mathcal Z=\mathcal Y$, the algorithm repeats at most~$r$ times an operation that updates~$C$ and~$B$ so that $\mathcal Y=B\mathcal Z$ and
\begin{itemize}
    \item[---] $B$ is a matrix of polynomials;
    \item[---] $\mathcal Z$ is holomorphic at $\alpha$;
    \item[---] $C\coloneq B^{-1}(AB-B')$ has no pole outside those of $A$;
    \item[---] $\mathcal Z$ is a fundamental solution of $Z'=CZ$;
    \item[---] $\operatorname{val}_{z=\alpha}\det\mathcal Z=\operatorname{val}_{z=\alpha}\det\mathcal Y-1$.
\end{itemize}
By composing these steps, it is sufficient to prove that one iteration of the loop has these properties.

Each step is centered around the definition of a matrix~$M$ as follows.
Let $k>0$ be the order of the pole of the matrix~$C$ at~$\alpha$, 
let $i$ be the index of the first row of~$C$ with a pole of order~$k$, 
let $v$ be the constant vector of coefficients of~$(z-\alpha)^{-k}$ in that row, $D$ be the diagonal matrix $\operatorname{diag}(1,\dots,1,z-\alpha,1,\dots,1)$ with $z-\alpha$ in the $i$th position and $M$ be an invertible constant matrix with~$v$ in its $i$th row.
Then $\tilde{B}\coloneq BM^{-1}D$ possesses the desired properties:
\begin{itemize}
    \item[---] $\tilde B$ is the product of matrices of polynomials;
    \item[---] $\tilde{\mathcal Z}\coloneq D^{-1}M\mathcal Z$ is holomorphic at $\alpha$: since both $\mathcal Z$ and $\mathcal Z'$ are holomorphic at~$\alpha$, the product~$v\mathcal Z$ is~0 at~$\alpha$, making the $i$th row of $M\mathcal Z$ a multiple of $(z-\alpha)$;
    \item[---] $\tilde{C}\coloneq D^{-1}M(CM^{-1}D-M^{-1}D')$ has no pole outside those of $C$;
    \item[---] $\tilde{\mathcal Z}$ is a fundamental solution of $Z'=\tilde{C}Z$;
    \item[---] $\det\tilde{\mathcal Z}=\det M\det\mathcal Z/(z-\alpha).$\qedhere
\end{itemize}
\end{proof}

\begin{algorithm}
\caption[]{Removal of Singularities (\Call{BeukersAlgo}{$M,\alpha$})}\label{algo:Beukers}
\begin{algorithmic}[0]
\Require{$A$: matrix in~$\Q(z)^{n\times n}$;\\
\phantom{Input}$\alpha$: root of the denominator of an entry in~$A$}
\Ensure{$B$: matrix in $(\Q(\alpha)[z])^{n\times n}$ satisfying \cref{prop:algo-beukers}.}
\State $r\coloneq \operatorname{Res}_{z=\alpha}\operatorname{Trace}(A)$\Comment{Residue of the trace}
\If{$r\not\in\mathbb{N}_{\ge0}$}{ \textbf{error} singularity cannot be removed}
\EndIf
\State $C\coloneq A$; $B\coloneq \Id_n$
\For{$m=1,\dots,r$}
\State $k\coloneq $order of the pole of~$C$ at~$\alpha$
\If{$k=0$}{ \textbf{break}}\EndIf
\State $i\coloneq $index of the first row of $C$ with a pole of order~$k$ at~$\alpha$
\State $v\coloneq $vector of coefficients of $(z-\alpha)^{-k}$ in row~$i$ of~$C$
\State $M\coloneq $an invertible constant matrix with $v$ its $i$th row; \Comment{Complete $v$ into a basis}
\State $D\coloneq \operatorname{diag}(1,\dots,1,z-\alpha,1,\dots,1)$ with $z-\alpha$ in the $i$th position
\State $B\coloneq BM^{-1}D$
\State $C\coloneq D^{-1}MCM^{-1}D-D^{-1}D'$
\EndFor\\
\Return $B$
\end{algorithmic}
\end{algorithm}

When this algorithm is applied in \cref{algo:alg-values} and 
the value of~$r$ computed in the first step of \cref{algo:Beukers}
is~1, which occurs frequently in practice, then the kernel computed in
\cref{algo:alg-values} has dimension~1 and its entries are nothing but
the values of the coefficients of the differential equation
at~$z=\alpha$. In other words, in this situation, there is only one
algebraic relation, which is obtained by evaluating the differential
equation at~$z=\alpha$. It is always possible to obtain an algebraic
relation that way; the strength of \cref{algo:alg-values} is that it
returns a \emph{basis} of all these relations, even when~$r>1$.

\subsection{Effective decomposition of \texorpdfstring{$E$}{E}-functions} \label{ssec:decompositionfpqg}

For an $E$-function $f\in\Q[[z]]$ (or more generally in $\Qbar[[z]]$), we call \emph{exceptional values} those
(finitely many) non-zero algebraic numbers $\alpha$ such that $f(\alpha) \in
\Qbar$. The set of exceptional values of $f$ is denoted by $\textrm{Exc}(f)$.
We call the $E$-function $f$ \emph{purely transcendental} if it has no
exceptional values, i.e. if $\textrm{Exc}(f) =\emptyset$.

This subsection deals with the fact that every $E$-function is equal to the
sum of a polynomial and of a polynomial multiple of a purely transcendental
$E$-function. 
The existence of such a decomposition was proved in~\cite{rivoalnote2016} for
general $E$-functions. We give an alternative proof in the
special case of $E$-functions with coefficients in~$\mathbb Q$
in \cref{theo:decomp}. Before that, in \cref{prop:unicitydecomp}, we
define a canonical
decomposition in the
general case. Decompositions are not unique in general.
Indeed, if we
have $f=p+qg$ where $p,q$ are in $\Qbar[z]$ and $g$ is a purely transcendental
$E$-function, then for any $u\in \Qbar[z]$, the identity $f=p-qu+q(g+u)$ is
another admissible decomposition because $g+u$ is still a purely
transcendental $E$-function. In particular, the Taylor coefficients of
$f$ and
$g$ may lie in two different number fields. However, we have the following:

\begin{proposition}\label{prop:unicitydecomp} Every transcendental $E$-function (with coefficients in $\Qbar$) can be written in a unique way as $f=p+qg$ with $p,q\in\Qbar[z]$, $q$ monic and $q(0)\neq 0$, $\deg(p)<\deg(q)$ and $g$ a purely transcendental $E$-function.
\end{proposition}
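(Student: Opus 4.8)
The plan is to split the statement into an \emph{existence} part --- producing some decomposition in the required normal form --- and a \emph{uniqueness} part, which is where the real content lies; the crux will be to show that the triple $(p,q,g)$ is forced by $f$.

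\emph{Existence.} I would start from the decomposition theorem of Rivoal~\cite{rivoalnote2016} (reproved for coefficients in $\Q$ as \cref{theo:decomp} below): there are $\tilde p,\tilde q\in\Qbar[z]$ and a purely transcendental $E$-function $\tilde g$ with $f=\tilde p+\tilde q\,\tilde g$, and $\tilde q\neq 0$ since $f$ is not a polynomial. I would then normalize in three steps, each preserving pure transcendence of the $E$-function factor --- because if $g$ is a purely transcendental $E$-function and $\alpha\in\Qbar^{*}$, then $g(\alpha)\notin\Qbar$, and the same stays true after replacing $g$ by $g+u$ for $u\in\Qbar[z]$, by $cg$ for $c\in\Qbar^{*}$, or by $z^{k}g$ for $k\in\mathbb N$, all of which remain transcendental $E$-functions. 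Concretely: (i) write $\tilde q=z^{k}q_{1}$ with $q_{1}(0)\neq0$ and replace $(\tilde q,\tilde g)$ by $(q_{1},z^{k}\tilde g)$; (ii) divide $q_{1}$ by its leading coefficient to make it monic, transferring the scalar to the $E$-function factor; (iii) perform the Euclidean division $\tilde p=qs+p$ with $\deg p<\deg q$ and absorb $s$ into the $E$-function factor. The outcome is $f=p+qg$ with $q$ monic, $q(0)\neq0$, $\deg p<\deg q$, and $g$ purely transcendental (when $\deg q=0$ this forces $p=0$).

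\emph{Uniqueness.} Suppose $f=p_{1}+q_{1}g_{1}=p_{2}+q_{2}g_{2}$ are two decompositions as in the statement. The key point is that the order of vanishing of $q_{i}$ at any $\alpha\in\Qbar^{*}$ depends only on $f$ and $\alpha$. Fix such an $\alpha$ and write $q_{i}=(z-\alpha)^{m_{i}}q_{i}^{*}$ with $q_{i}^{*}(\alpha)\neq0$. From $f=p_{i}+(z-\alpha)^{m_{i}}q_{i}^{*}g_{i}$ and Leibniz's rule one gets $f^{(k)}(\alpha)=p_{i}^{(k)}(\alpha)\in\Qbar$ for $0\le k\le m_{i}-1$, whereas $f^{(m_{i})}(\alpha)=p_{i}^{(m_{i})}(\alpha)+m_{i}!\,q_{i}^{*}(\alpha)\,g_{i}(\alpha)$; since $g_{i}$ is purely transcendental and $\alpha\neq0$, the value $g_{i}(\alpha)$ is transcendental, so $f^{(m_{i})}(\alpha)\notin\Qbar$. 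Hence $m_{i}=\min\{k\ge0:\ f^{(k)}(\alpha)\notin\Qbar\}$, independently of $i$. As $q_{1}$ and $q_{2}$ are monic, satisfy $q_{i}(0)\neq0$, and have the same multiplicity at every point of $\Qbar^{*}$, we conclude $q_{1}=q_{2}=:q$. Then $p_{1}-p_{2}=q(g_{2}-g_{1})$ is a polynomial multiple of $q$ of degree less than $\deg q$, so $p_{1}=p_{2}$, and then $g_{1}=g_{2}$ since $q\neq0$.

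The step needing the most care --- and, to my mind, the crux --- is the intrinsic characterization of the multiplicities $m_{i}$: one must rule out that a decomposition conceals additional vanishing of $f-p_{i}$ at an exceptional point, hidden either inside $q_{i}$ or inside $g_{i}$. The Leibniz computation above does exactly that, using essentially that a purely transcendental $E$-function takes a transcendental (in particular nonzero) value at every nonzero algebraic point, which is immediate from the definition of $\textrm{Exc}$. Note that Beukers' Property~{\bf (P2)} enters only indirectly, through the cited existence statement.
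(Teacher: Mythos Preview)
Your proof is correct and follows essentially the same route as the paper: normalize Rivoal's decomposition for existence, then for uniqueness show that the root multiplicities of $q$ are forced by $f$ via differentiation and pure transcendence of $g$, and finish with a degree comparison. Your intrinsic formulation $m_i=\min\{k\ge0:f^{(k)}(\alpha)\notin\Qbar\}$ is a pleasant rephrasing of the paper's argument (which instead differentiates the identity $p+qg=\tilde p+\tilde q\tilde g$ directly), but the underlying computation is identical. One small point: in your last step, the assertion that $p_1-p_2=q(g_2-g_1)$ is ``a polynomial multiple of $q$'' tacitly uses that $g_2-g_1$ is a polynomial; the paper makes this explicit by observing that $g_2-g_1$ is an $E$-function equal to the rational function $(p_1-p_2)/q$, hence a polynomial, after which your degree argument goes through.
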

We shall call this decomposition the {\em canonical decomposition of $f$}.
\begin{proof} 
As said above, it is proved in~\cite{rivoalnote2016} that any transcendental $E$-function $f$ can be written $P+QF$ where $P,Q\in\Qbar[z]$ and $F$ is a purely transcendental $E$-function. Note that since $f$ is transcendental, $Q$ is not identically zero: we can then write $Q(z)=z^m\sum_{k=0}^d q_k z^k$ with $q_0q_d\neq 0$ and $d,m\in \mathbb N$. The function $G(z)\coloneq q_dz^mF(z)$ is still a purely transcendental $E$-function and $QF=qG$ where $q(z)\coloneq \sum_{k=0}^d (q_k/q_d) z^k\in \Qbar[z]$  is monic and such that $q(0)\neq 0$. We now perform the Euclidean division of $P$ by $q$: we have $P=rq+p$ for some $p,r\in \Qbar[z]$ with $\deg(p)<\deg(q)$. Defining $g\coloneq G+r$, which is again a purely transcendental $E$-function, we observe that the decomposition $f=p+qg$ is of the form in the proposition.

\medskip

We now prove uniqueness of such a decomposition.
Consider two
decompositions $p+qg=\widetilde{p}+\widetilde{q}
\widetilde{g}$ of an $E$-function $f$ with
polynomials~$p,q,\widetilde{p},\widetilde{q}$ and $g,\widetilde{g}$
purely transcendental functions as in the statement of the Proposition.

We first prove that~$q=\widetilde{q}$. 
Obviously, these polynomials share the same
set of roots,
namely $\textrm{Exc}(f)$. Moreover $g$ and $\widetilde{g}$ being
purely transcendental, we claim that any root of $q$ and $\widetilde{q}$ has the same multiplicity in $q$ and in $\widetilde{q}$, so that $q$ and $\widetilde{q}$ are equal up to a non-zero constant factor, hence equal because they are both monic. To prove the claim, let $\rho$ be a root of $q$ of multiplicity $m$ and of multiplicity $\widetilde{m}$ for $\widetilde{q}$: if $m<\widetilde{m}$ then differentiating $m$ times both sides of $p+qg=\widetilde{p}+\widetilde{q}\widetilde{g}$  and evaluating at $z=\rho$, we obtain that $g(\rho)\in \Qbar$ which is not possible because $\rho\neq 0$, hence by symmetry of the situation we have $m=\widetilde{m}$.

Finally, since in the decompositions $p+qg=\widetilde{p}+q\widetilde{g}$,
we have $\deg(p)<\deg(q)$ and $\deg(\widetilde{p})<\deg(q)$, this forces  $p=\widetilde{p}$ and $g=\widetilde{g}$. Indeed,  $g-\widetilde{g}=(\widetilde{p}-p)/q\coloneq u$ is a rational $E$-function, i.e., a
polynomial in $\Qbar[z]$. Hence, consideration of the degree on both sides of $\widetilde{p}-p=uq$  forces $u=0$.
\end{proof}

Our main contribution in this section is to prove that when $f$ has
coefficients in $\mathbb
Q$, then we can find polynomials and a purely transcendental $E$-function
involved in the canonical decomposition that also have coefficients in~$\mathbb Q$, and
moreover that one can compute this decomposition algorithmically. More
precisely, we prove:

\begin{theorem} \label{theo:decomp} 
Any transcendental $E$-function $f\in\Q[[z]]$ admits a canonical decomposition $f=p+qg$, where
$p$ and $q$ are polynomials in $\Q[z]$ and $g\in\Q[[z]]$ is a purely
transcendental $E$-function. 
Moreover, if $f$ is given by a linear differential equation together with
sufficiently many initial terms, then one can effectively determine $p$ and
$q$.
\end{theorem}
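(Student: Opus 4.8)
The plan is to take the decomposition $f=P+QG$ over $\Qbar$ furnished by \cref{prop:unicitydecomp} and to descend all of it to $\Q$ using the coefficientwise action of $\operatorname{Gal}(\Qbar/\Q)$ on power series with algebraic coefficients. The one arithmetic ingredient needed is the following \emph{Galois-equivariance of algebraic values}: if $h\in\Q[[z]]$ is an $E$-function, $\alpha\in\Qbar$, and $\sigma\in\operatorname{Gal}(\Qbar/\Q)$, then $h(\alpha)\in\Qbar$ if and only if $h(\sigma\alpha)\in\Qbar$, and in that case $h(\sigma\alpha)=\sigma(h(\alpha))$. To prove it, set $\beta\coloneq h(\alpha)\in\Qbar$; by property~{\bf (P2)} the series $h_1\coloneq(h(z)-\beta)/(z-\alpha)$ is an $E$-function (with algebraic coefficients). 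Comparing coefficients in $h(z)-\beta=(z-\alpha)h_1(z)$ and applying $\sigma$ to them --- using that the coefficients of $h$ lie in $\Q$ and are therefore $\sigma$-fixed --- yields the formal identity $h(z)-\sigma\beta=(z-\sigma\alpha)\,\sigma(h_1)(z)$. Since a Galois conjugate of an $E$-function is again an $E$-function, $\sigma(h_1)$ is entire, and specialising the identity at $z=\sigma\alpha$ (legitimate, both sides being entire) gives $h(\sigma\alpha)=\sigma\beta$. Applying this to $h=f,f',f'',\dots$, which are all $E$-functions in $\Q[[z]]$, shows in particular that $\textrm{Exc}(f)$ is stable under $\operatorname{Gal}(\Qbar/\Q)$.

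Next I build $q$ and $p$. For $\gamma\in\textrm{Exc}(f)$ set $m_\gamma\coloneq\min\{\,j\ge0: f^{(j)}(\gamma)\notin\Qbar\,\}\ge1$. Writing $f=P+QG$ as above, letting $M$ be the multiplicity of $\gamma$ in $Q$ and $R\coloneq Q/(z-\gamma)^M$ (so $R(\gamma)\neq0$), the factorisation $f-P=(z-\gamma)^M R\,G$ shows $f^{(j)}(\gamma)=P^{(j)}(\gamma)\in\Qbar$ for $j<M$ and $f^{(M)}(\gamma)=P^{(M)}(\gamma)+M!\,R(\gamma)\,G(\gamma)\notin\Qbar$, because $G$ purely transcendental and $\gamma\neq0$ force $G(\gamma)\notin\Qbar$; hence $m_\gamma=M<\infty$. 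By the equivariance just proved, $m_{\sigma\gamma}=m_\gamma$ for all $\sigma$, so the family $(m_\gamma)_{\gamma\in\textrm{Exc}(f)}$ is $\operatorname{Gal}(\Qbar/\Q)$-invariant and $q\coloneq\prod_{\gamma\in\textrm{Exc}(f)}(z-\gamma)^{m_\gamma}$ lies in $\Q[z]$; it is monic with $q(0)\neq0$. Likewise the Hermite interpolation problem ``find $p$ with $\deg p<\deg q$ and $p^{(j)}(\gamma)=f^{(j)}(\gamma)$ for all $\gamma\in\textrm{Exc}(f)$ and $0\le j<m_\gamma$'' has a unique solution, whose nodes and prescribed jets are $\operatorname{Gal}(\Qbar/\Q)$-equivariant, so $\sigma(p)$ solves the same problem and $p=\sigma(p)\in\Q[z]$.

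Now put $g\coloneq(f-p)/q$. By the choice of $p$, the $E$-function $f-p$ vanishes at each $\gamma\in\textrm{Exc}(f)$ to order at least $m_\gamma$, so $q\mid f-p$ and $g\in\Q[[z]]$; dividing the roots of $q$ out of $f-p$ one at a time and invoking {\bf (P2)} at each step (the value divided out being $0$) shows $g$ is an $E$-function. It is purely transcendental: for a nonzero $\gamma\notin\textrm{Exc}(f)$ one has $q(\gamma)\neq0$ and $g(\gamma)\in\Qbar\iff f(\gamma)\in\Qbar$, which is false; and for $\gamma\in\textrm{Exc}(f)$, writing $q=(z-\gamma)^{m_\gamma}\widetilde q$ with $\widetilde q(\gamma)\neq0$ and differentiating $f=p+(z-\gamma)^{m_\gamma}\widetilde q\,g$ exactly $m_\gamma$ times at $z=\gamma$ gives $f^{(m_\gamma)}(\gamma)=p^{(m_\gamma)}(\gamma)+m_\gamma!\,\widetilde q(\gamma)\,g(\gamma)$, whence $g(\gamma)\notin\Qbar$ since $f^{(m_\gamma)}(\gamma)\notin\Qbar$ by definition of $m_\gamma$. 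Thus $f=p+qg$ with $p,q\in\Q[z]$, $q$ monic, $q(0)\neq0$, $\deg p<\deg q$, and $g\in\Q[[z]]$ a purely transcendental $E$-function; by the uniqueness in \cref{prop:unicitydecomp} this is the canonical decomposition of $f$.

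Finally, for the effective statement, given $\mathcal L$ and enough initial terms of $f$ one computes the minimal homogeneous and then the minimal inhomogeneous operator by \cref{algo:min-hom,algo:min-inhom}; since the nonzero roots of the leading coefficient of the latter contain $\textrm{Exc}(f)$, running \cref{algo:alg-values} on $f$ returns $\textrm{Exc}(f)$ --- each element through its minimal polynomial --- and the values $f(\gamma)$. Running \cref{algo:alg-values} in turn on $f',f'',\dots$ (each presented by a differential equation derived from that of $f$) determines, for each $\gamma$, the first index $m_\gamma$ at which $\gamma$ exits the exceptional set --- this search terminates because $m_\gamma<\infty$ --- together with the jets $f^{(j)}(\gamma)$ for $j<m_\gamma$; from these one forms $q$, solves the interpolation over $\Q$ to get $p$, and returns $g=(f-p)/q$ with a differential equation and initial terms for it. The part demanding the most care is the Galois-equivariance lemma: short as its proof is, it is what turns the a priori $\Qbar$-rational data of \cref{prop:unicitydecomp} into $\Q$-rational data, and it is the only place where a genuinely arithmetic property of $E$-functions --- Beukers' {\bf (P2)} --- is invoked.
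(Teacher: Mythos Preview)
Your proof is correct, and it follows precisely the Galois-descent route that the paper itself mentions (just before its proof of \cref{theo:decomp}) but chooses not to detail. The two arguments are genuinely different in structure. The paper works entirely over~$\Q$: it groups $\mathrm{Exc}(f)$ into Galois blocks, performs an $E$-adic expansion $f=p_0+p_1E+\cdots+p_mE^m+E^{m+1}g_m$ with respect to each minimal polynomial~$E$, proves rationality of the $p_i$ via a Vandermonde/power-sums trick together with \cref{prop:decomp1}, and establishes termination (some $g_m$ is purely transcendental) by invoking Andr\'e's theorem~{\bf (P1)} and the Siegel--Shidlovskii theorem. You instead start from the $\Qbar$-decomposition guaranteed by \cref{prop:unicitydecomp}, identify the multiplicity of each $\gamma\in\mathrm{Exc}(f)$ in~$q$ as the first index $m_\gamma$ with $f^{(m_\gamma)}(\gamma)\notin\Qbar$, and descend $q$ and the Hermite interpolant~$p$ to~$\Q$ via your equivariance lemma. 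That lemma is exactly the content of the paper's \cref{prop:decomp1}, but you prove it directly from~{\bf (P2)} rather than reading it off \cref{algo:alg-values}.

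What each approach buys: your argument is shorter and more conceptual, and it makes transparent why the canonical $q$ lies in~$\Q[z]$ --- it is a product of $\Q$-irreducible factors with Galois-invariant multiplicities. It does, however, lean on \cref{prop:unicitydecomp}, whose existence half cites~\cite{rivoalnote2016}, so the deep input (ultimately equivalent to Siegel--Shidlovskii) is still present, just packaged upstream; the paper's $E$-adic proof is more self-contained in that it reproves existence over~$\Q$ from scratch. On the effective side, the two procedures also differ: the paper iterates on~$m$ and calls \cref{algo:alg-values} on each $g_m$ until $\mathrm{Exc}(g_m)=\emptyset$, while you iterate on derivatives $f,f',f'',\dots$ and call \cref{algo:alg-values} on each to detect when a given $\gamma$ leaves the exceptional set. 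Both terminate, for the same underlying reason.
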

Of course, once $p$ and $q$ are determined,  $g$ is determined by a linear differential equation together with sufficiently many initial terms, simply  because $g=(f-p)/q$. 
Before proceeding to the proof of~\cref{theo:decomp}, we state a very useful
fact.

\begin{proposition} \label{prop:decomp1}
Let $f\in\Q[[z]]$ be an $E$-function and let $\alpha \in \mathrm{Exc}(f)$. Then,
\begin{itemize}
	\item[$(i)$]  $f(\alpha)\in \mathbb Q(\alpha)$; 
	\item[$(ii)$]  all Galoisian conjugates of $\alpha$ belong to  $\mathrm{Exc}(f)$;
    \item[$(iii)$] for any Galoisian conjugate $\alpha'$ of $\alpha$,  
	the value $f(\alpha')$ is a Galoisian conjugate of~$f(\alpha)$. 
\end{itemize}	 
\end{proposition}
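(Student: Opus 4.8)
The plan is to deduce all three items from Property~\textbf{(P2)} by a conjugation argument, using only the elementary fact that a non-polynomial rational function is never an $E$-function (as $E$-functions are entire). The mechanism is that applying a field automorphism $\sigma\in\mathrm{Gal}(\Qbar/\Q)$ coefficientwise to an $E$-function with coefficients in a number field again yields an $E$-function (the arithmetic conditions are stable under Galois conjugation; see~\cref{ssec:extensions1}), and that this formal action intertwines with evaluation at algebraic points. For~$(i)$, I would set $\beta\coloneq f(\alpha)\in\Qbar$; by~\textbf{(P2)}, $g(z)\coloneq(f(z)-\beta)/(z-\alpha)$ is an $E$-function with Taylor coefficients in $\Q(\alpha,\beta)$, and, since $f$ has coefficients in $\Q$, expanding $1/(z-\alpha)$ as a geometric series shows that the coefficientwise conjugate satisfies $\tau(g)(z)=(f(z)-\tau(\beta))/(z-\tau(\alpha))$ for every $\tau\in\mathrm{Gal}(\Qbar/\Q)$. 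If $\beta\notin\Q(\alpha)$, choose $\tau\in\mathrm{Gal}(\Qbar/\Q(\alpha))$ with $\beta'\coloneq\tau(\beta)\neq\beta$; then $\tau$ fixes $\alpha$, so $g$ and $\tau(g)=(f(z)-\beta')/(z-\alpha)$ are $E$-functions, hence so is their difference $(\beta'-\beta)/(z-\alpha)$, a rational function with a genuine pole (the numerator being a nonzero constant) --- a contradiction. Hence $f(\alpha)=\beta\in\Q(\alpha)$.

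For~$(ii)$ and~$(iii)$ I would use~$(i)$ to write $f(\alpha)=R(\alpha)$ with $R\in\Q[z]$ (possible since $\Q(\alpha)=\Q[\alpha]$). Then $h\coloneq f-R\in\Q[[z]]$ is an $E$-function with $h(\alpha)=0$, so by~\textbf{(P2)} the series $h_1\coloneq h/(z-\alpha)$ is an $E$-function with coefficients in $\Q(\alpha)$. For any $\sigma\in\mathrm{Gal}(\Qbar/\Q)$, writing $\alpha'\coloneq\sigma(\alpha)$, the same computation as above shows that the coefficientwise conjugate $\sigma(h_1)$ equals $h/(z-\alpha')$; being an $E$-function (a Galois conjugate of $h_1$), it is entire, which forces $h(\alpha')=0$, i.e. $f(\alpha')=R(\alpha')=R(\sigma(\alpha))=\sigma(R(\alpha))=\sigma(f(\alpha))$. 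Thus $f(\alpha')\in\Qbar$, so $\alpha'\in\mathrm{Exc}(f)$, proving~$(ii)$; and $f(\alpha')=\sigma(f(\alpha))$ is a Galoisian conjugate of $f(\alpha)$, proving~$(iii)$, since every conjugate of $\alpha$ is of the form $\sigma(\alpha)$.

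I expect the main obstacle to be the ``transport of structure'' underlying both steps: verifying carefully that coefficientwise Galois conjugation of $(f(z)-f(\alpha))/(z-\alpha)$ produces exactly $(f(z)-f(\sigma\alpha))/(z-\sigma\alpha)$, and that the result is still an $E$-function --- which relies on the theory of $E$-functions with coefficients in a number field (the house-based arithmetic bounds being merely permuted, not changed, by conjugation). Everything else is formal bookkeeping. An alternative route to~$(i)$ is provided by the analysis in~\cref{ssec:AR}: the space of $\Qbar$-linear relations among the values $1,f(\alpha),f'(\alpha),\dots$ of the vector attached to the minimal inhomogeneous equation is the left kernel of the matrix returned by~\cref{algo:Beukers} at $\alpha$, whose entries lie in $\Q(\alpha)[z]$; since the constant $1$ is among those values, the relation $(f(\alpha),-1,0,\dots,0)$ cannot differ from its $\mathrm{Gal}(\Qbar/\Q(\alpha))$-conjugates (a difference would give the impossible relation $1=0$), whence $f(\alpha)\in\Q(\alpha)$. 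I would nevertheless keep the conjugation argument, as it delivers~$(ii)$ and~$(iii)$ as well.
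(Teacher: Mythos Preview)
Your proof is correct and takes a genuinely different route from the paper's. The paper deduces all three items from the structure of \cref{algo:alg-values}: the condition $f(\alpha)=\beta\in\Qbar$ is equivalent to the existence of a vector $(\beta,-1,0,\dots,0)$ in the left kernel of Beukers' matrix $\mathcal{M}(\alpha)$, whose entries lie in $\Q(\alpha)$; this immediately gives $\beta\in\Q(\alpha)$, and since the matrix at a Galois conjugate $\alpha'$ is obtained by conjugating the entries of $\mathcal{M}(\alpha)$, the kernel condition transports, yielding~$(ii)$ and~$(iii)$. You instead work directly from Property~\textbf{(P2)} and the Galois stability of the class of $E$-functions over number fields: conjugating $(f-\beta)/(z-\alpha)$ and subtracting produces a rational $E$-function with a pole unless $\beta\in\Q(\alpha)$, and then the polynomial trick $f-R$ with $R\in\Q[z]$ reduces~$(ii)$ and~$(iii)$ to the entireness of the conjugated quotient $h/(z-\sigma(\alpha))$. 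Your argument is more self-contained---it avoids the full algorithmic machinery (Beukers' desingularization, Andr\'e's theorem via~\textbf{(P1')}, the minimal inhomogeneous equation)---and would port to any Galois-stable class of entire functions satisfying~\textbf{(P2)}. The paper's approach, on the other hand, makes the link with the algorithm transparent and explains \emph{how} the values $f(\alpha)$ are actually computed. You correctly identify this alternative at the end of your proposal; it is in fact the route the paper takes.
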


\begin{proof} The three statements are consequences of
Algorithm~\ref{algo:alg-values}. For any given root $\alpha$ of any
irreducible factor $\mu\in \mathbb Q[z]$ of $v_0$, the algorithm
determines if there exists a vector in~$\Qbar^{s+1}$ of the form $(\beta, -1,
0,\ldots, 0)$ which is in the left kernel of the matrix $\mathcal{M}(\alpha)$,
whose entries are in $\mathbb Q(\alpha)$. The existence of this vector is
equivalent to $f(\alpha)=\beta\in\Qbar$. When it exists, this proves that
$f(\alpha)=\beta\in \mathbb Q(\alpha)$, a fact proved in
\cite{FiRiCommentarii, FiRiEcoleX} in the general case (with $\mathbb Q(\alpha)$ replaced by $\mathbb K(\alpha)$ when $f$ has coefficients in a number field $\mathbb K$).
Now, such a vector exists if and only there exists a vector of the same form
in the left kernel of the matrix $\mathcal{M}(\alpha')$, where $\alpha'$ is
any Galoisian conjugate of $\alpha$ (i.e. any other root of $\mu$ in this
case). It follows that for any conjugate $\alpha'$ of $\alpha$, $f(\alpha')$
is a conjugate of $f(\alpha)$.
\end{proof}   

The existence of a decomposition as in \cref{theo:decomp} can be deduced
from \cref{prop:unicitydecomp}, by letting $\textup{Gal}(\Qbar/\mathbb Q)$
act on the decomposition delivered by \cref{prop:unicitydecomp} and by using
its uniqueness.
In order to obtain the polynomials~$p$ and~$q$ effectively, we propose a
``rational version'' of the proof in~\cite{rivoalnote2016}, which avoids
working in algebraic extensions. 

\begin{proof}[Proof of~\cref{theo:decomp}]   
The set $\textrm{Exc}(f)$  can be computed using \cref{algo:alg-values}. If $\textrm{Exc}(f)=\emptyset$, then the canonical decomposition of $f$ is $f=p+qg$ with $p=0$, $q=1$ and $g=f$. From now on, we assume that $\textrm{Exc}(f)= \{ \alpha_1, \ldots, \alpha_k \}\neq \emptyset$; 
by~\cref{prop:decomp1},  $\textrm{Exc}(f)$ can be partitioned into blocks of 
Galois conjugated values.

Let us first assume that there is only one such block, i.e. that  $\{\alpha_1, \ldots, \alpha_k\}$ is the set of roots of a monic irreducible polynomial $E \in \Q[z]$. 
For any $m\geq 0$, we consider the $E$-adic expansion of~$f$ to order~$m$:
\begin{equation} \label{eq:Eadic}
  f = p_0 + p_1 E + \cdots + p_m E^m + E^{m+1} g_m,
\end{equation}	                                                  
with $p_0(z), \ldots, p_m(z)\in \mathbb{C}[z]$ each of degree less than 
$k=\deg(E)$, and $g_m\in\mathbb{C}[[z]]$.

\medskip We will prove the following claims:

\smallskip  {\bf Claim 1.} $p_0, \ldots, p_m\in \Q[z]$
and $g_m\in\Q[[z]]$.

\smallskip  {\bf Claim 2.}  $g_m$ is an $E$-function.

\smallskip {\bf Claim 3.}  There exists an $m\geq 0$ such that $g_m$ is purely transcendental.

\smallskip From these claims, the proof of the first part of the theorem
follows by taking $p \coloneq  p_0 + p_1 E + \cdots + p_m E^m$, $q\coloneq 
E^{m+1}$ and $g\coloneq g_m$.

\medskip

 {\bf Proof of Claim 1.} It is enough to prove it for $m=0$, and then iterate.
We have $f = p_0 + E g_0$ with $p_0\in \mathbb{C}[z]$ of degree
less than $k$ and $g_0\in\mathbb{C}[[z]]$ and we need to prove that the
coefficients of $p_0$ and $g_0$ are actually in~$\Q$. First, we observe that
$p_0(z) = c_0 + c_1 z + \cdots + c_{k-1} z^{k-1}$ is the unique polynomial in
$\mathbb{C}[z]$ such that $p_0(\alpha_i) = f(\alpha_i)$ for $1\leq i \leq k$.
In matrix terms, this rewrites as
\[
\begin{pmatrix}
 1 & \alpha_1 & \ldots & \alpha_1^{k-1} \\
 \vdots & & & \vdots \\
 1 & \alpha_k & \ldots & \alpha_k^{k-1} \\
\end{pmatrix} \cdot 
\begin{pmatrix}
 c_0 \\
 \vdots\\
 c_{k-1}  \\
\end{pmatrix} 
=
\begin{pmatrix}
 f(\alpha_1) \\
 \vdots\\
 f(\alpha_{k-1})  \\
\end{pmatrix}
\]  
and by multiplying this equality on the left by the transpose of the Vandermonde matrix, we get the equivalent identity
\begin{equation}\label{eq:powersums}
\begin{pmatrix}
 k & \sum_i \alpha_i & \ldots & \sum_i\alpha_i^{k-1} \\
 \sum_i \alpha_i & \sum_i \alpha_i^2 & \ldots & \sum_i\alpha_i^{k} \\
 \vdots & & & \vdots \\
 \sum_i\alpha_i^{k-1} & \sum_i\alpha_i^{k} & \ldots & \sum_i\alpha_i^{2k-2} \\
\end{pmatrix} \cdot 
\begin{pmatrix}
 c_0 \\
 c_1 \\
 \vdots\\
 c_{k-1}  \\
\end{pmatrix} 
=
\begin{pmatrix}
 \sum_i f(\alpha_i) \\
 \sum_i \alpha_i f(\alpha_i) \\
 \vdots\\
 \sum_i \alpha_i^{k-1} f(\alpha_i) \\
\end{pmatrix}.
\end{equation}            
Now, the matrix on left-hand side of~\cref{eq:powersums} is invertible and with coefficients in~$\Q$, since it contains the power sums of the roots of the polynomial $E\in\Q[z]$.
On the other hand, for any $E$-function $g\in\Q[[z]]$, we have that $\sum_i g(\alpha_i) \in \Q$, by~\cref{prop:decomp1}. Applying this to the $E$-functions $f(z)$, $z f(z), \ldots, z^{k-1} f(z)$, we deduce that   the right-hand side of~\cref{eq:powersums} is a vector in~$\Q^k$. This implies that the $c_i$'s are all in~$\Q$, hence $p_0\in\Q[z]$. From there it directly follows that $g_0\in \Q[[z]]$.

\medskip {\bf Proof of Claim 2.}   It is again enough to prove the claim for $m=0$, and then iterate. 
Indeed, from~\cref{eq:Eadic} it follows that the $E$-adic expansion of
$g_{m-1}$ to order~1 is $g_{m-1} = p_m + E g_m$, and since $f = p_0 + E
g_0$ is an $E$-function one deduces iteratively that $g_0, g_1, \ldots, g_m$
are $E$-functions.

It remains to prove that if we have $f = p + E g$ with 
$E\in \Qbar[z]$ and $p\in\Qbar[z]$ of 
degree less than~$k=\deg(E)$ 
and $g\in\Qbar[[z]]$,
then $g$ is an $E$-function. This is done by induction on~$k\ge 1$.  For $k=1$,
this is precisely Property~{\bf (P2)}. Assume the property is proved for any $E$ of degree $k-1\ge 1$ and any $p$ of degree less than $k-1$. Assume we have $f = p + E g$ with 
$E\in \Qbar[z]$ of degree $k$, $p\in
\Qbar[z]$ of degree less than~$k$ and $g\in\Qbar[[z]]$. 
Let $\beta$ be one of the roots of $E$ and write $p(z)=\sum_{j=0}^{k-1} p_j (z-\beta)^j$. Then $f(\beta)=p_0$ and by Property~{\bf (P2)},  
\[
\frac{f(z)-f(\beta)}{z-\beta} = \sum_{j=0}^{k-2} p_{j+1} (z-\beta)^j + \frac{E(z)}{z-\beta} g(z)
\]
is an $E$-function. Since $E(z)/(z-\beta)$ is a polynomial of degree $k-1$ and $\sum_{j=0}^{k-2} p_{j+1} (z-\beta)^j$ is of degree less that $k-1$, we deduce that  $g$ is an $E$-function by the induction hypothesis.

\medskip {\bf Proof of Claim 3.}  From~\cref{eq:Eadic} it follows that the only exceptional values of the $g_m$'s are necessarily contained in the set $\textrm{Exc}(f) = \{\alpha_1, \ldots, \alpha_k\}$.     

We will show that there exists an~$m$ such that $g_m$ does not have any of the $\alpha_j$'s as an exceptional value, and therefore $g_m$ is purely transcendental. 
By~\cref{prop:decomp1}, this is equivalent to proving that the $g_m$'s cannot all share $\alpha\coloneq \alpha_1$ as an exceptional value.

Setting $g_{-1}=f$, it follows from~\cref{eq:Eadic} (with $m$ replaced by $m-1$, and then by differentiating $m$ times) that $1, f^{(m)}(\alpha)$
and $g_{m-1}(\alpha)$ are linearly dependent over~$\Qbar$ for all~$m\geq 0$ by
a relation of the form $f^{(m)}(\alpha)=u_m+v_m g_{m-1}(\alpha)$ with
$u_m,v_m\in\Qbar$ and $v_m\neq 0$. Hence, 
\[
\text{trdeg}_{\Qbar}\big(f(\alpha), \ldots, f^{(m)}(\alpha)\big)
=\text{trdeg}_{\Qbar}\big(g_{-1}(\alpha), \ldots, g_{m-1}(\alpha)\big)
\quad \text{for all} \; m\geq 0.
\] 
By contradiction, let us now assume that $g_m(\alpha) \in \Qbar$ for all~$m\geq -1$. Then we have 
$$
\text{trdeg}_{\Qbar}(f(\alpha), \ldots, f^{(m)}(\alpha))
=0
$$ for all $m\geq 0$. 
Now by Property {\bf (P1)} it follows that $f$ satisfies an LDE,
of some order~$\mu\geq 1$, having only $z=0$ as finite singularity. By considering 
the corresponding companion system $Y'= A Y$ where $f$ is the first element 
of the column vector $Y$, the matrix $A$ has Laurent polynomial entries 
in~$z$, hence the Siegel-Shidlovskii theorem ensures that 
\[
0 = \text{trdeg}_{\Qbar}\big(f(\alpha), \ldots, f^{(\mu-1)}(\alpha)\big)
=
\text{trdeg}_{\Qbar(z)}\big(f(z), \ldots, f^{(\mu-1)}(z)\big) \geq 1,
\]                                                           
a contradiction.    

On the effective side, note that one can compute the $E$-adic
expansion~\eqref{eq:Eadic} of $f$ to any order~$m$, for instance using
linear algebra. Then, to compute the needed decomposition, one may, for
increasing values $m=0,1,\ldots$, compute a linear differential equation for
$g_m$ as in~\eqref{eq:Eadic} together with sufficiently many initial terms,
and test using~\cref{algo:alg-values} whether $\mathrm{Exc}(g_m)$ is empty or
not. This procedure will eventually terminate.

We now treat the general case, where $\mathrm{Exc}(f)$ contains several blocks
$B_1,\ldots, B_p$, each block containing conjugated exceptional values. Denote
by $E_j(z)$ the minimal polynomial $\prod_{\alpha \in B_j} (z-\alpha)$ of the
elements in~$B_j$. By the reasoning used in the case of a single block, one
first finds a decomposition $f = p_1 + q_1 g_1$ with $p_1, q_1$ in $\Q[z]$ and
$g_1 \in\Q[[z]]$ an $E$-function such that $\mathrm{Exc}(g_1) =
\textrm{Exc}(f) \setminus B_1$.
Then, one applies the same to the $E$-function $g_1$, and writes it as $g_1 =
p_2 + q_2 g_2$, and thus $f = (p_1 + q_1 p_2) + (q_1 q_2) g_2$, with $p_2,
q_2$ in $\Q[z]$ and $g_2 \in\Q[[z]]$ an $E$-function such that
$\mathrm{Exc}(g_2) = \textrm{Exc}(f) \setminus (B_1 \cup B_2)$.
Continuing the same way~$p$ times, we end up with a decomposition $f = p + q
g$, with $p, q$ in $\Q[z]$ and $g \in\Q[[z]]$ an $E$-function such that
$\mathrm{Exc}(g) = \textrm{Exc}(f) \setminus (B_1 \cup \cdots \cup B_p) =
\emptyset$.
Moreover, by construction we have that 
$q$ monic, 
$q(0)\neq 0$ and $\deg(p)<\deg(q)$.
This concludes the proof.
\end{proof}

\subsection{\texorpdfstring{$E$}{E}-functions with coefficients in a  number field} \label{ssec:extensions1}
In general, an $E$-function is a power series
\[f(z)\coloneq \sum_{n=0}^{\infty} \frac{a_n}{n!} z^n\quad{\text{in}}\quad\Qbar[[z]]\]
such that
\begin{enumerate}
\item[$(i)$] $f(z)$ satisfies a homogeneous linear differential equation with 
coefficients in~$\Qbar(z)$;
\end{enumerate}
there exists $C>0$ such that
\begin{enumerate}
\item[$(ii)$] for any $\sigma \in \textup{Gal}(\Qbar/\mathbb Q)$ and any $n\ge 0$, $\vert \sigma(a_n)\vert \le C^{n+1}$;
\item[$(iii)$] for any $n\ge 0$, there exists $d_n \in \mathbb N\setminus \{ 0\} $ such that $d_n  \leq C^{n+1}$ and 
$d_na_m\in \mathcal{O}_{\Qbar}$ for all~$0\le m\le n$.
\end{enumerate}
In particular $(ii)$ with $\sigma=\textup{id}$ implies that $f(z)$ is an entire function. Moreover, $(i)$ implies that the coefficients $a_n$ all live in a certain number field, so that there are only finitely many Galoisian conjugates to consider in $(ii)$; if $a_n\in \mathbb Q$, this definition reduces to that of \cref{ssec:AR}.

The Adamczewski-Rivoal algorithm applies to these more general situations. The version stated in \cref{ssec:AR} also applies. Indeed, all the tools it uses work more generally. This is obviously the case for Beukers' desingularization, it is also the case for the algorithms used by minimization: greatest common right divisors, Hermite-Pad\'e approximants, series solutions and the computation of bounds on the degrees of factors (see \cite{BoRiSa21}).

\subsection{Siegel's original definition}  \label{ssec:extensions2}
$E$-functions with algebraic coefficients have been first defined by Siegel~\cite{siegel} in 1929 in a more general way: in $(ii)$ and $(iii)$ above, the upper bounds $(\cdots)\le C^{n+1}$ for all $n\ge 0$ are replaced by: for all $\varepsilon>0$, there exists $N(\varepsilon)$ such that $(\cdots)\le n!^{\varepsilon}$ for all $n\ge N(\varepsilon)$. $E$-functions considered above are sometimes denoted $E^*$-functions or called ``$E$-functions in the strict sense'': since Andr\'e's work~\cite{Andre00, Andre2}, it has become standard (though improper) to call them simply ``$E$-functions'' as well. 

The Siegel-Shidlovskii and Nesterenko-Shidlovskii theorems both hold in that setting.
The latter was refined by Beukers for $E$-functions in the strict sense only. Then, Andr\'e generalized Beukers' lifting theorem to $E$-functions in Siegel's sense by a completely different method \cite{Andre3}; another proof was later given by Lepetit \cite{lepetit} by a (non-trivial) adaptation of Beukers' original method. 

 We note here that Lepetit~\cite{lepetit} also generalized the Adamczewski-Rivoal algorithm to the case of $E$-functions in Siegel's original sense: he showed that all the steps in this algorithm work exactly the same {\em mutatis mutandis}, so that in fact our more efficient algorithm described here applies as well if the input is an $E$-function in Siegel's sense with rational coefficients. Moreover, the decomposition $f=p+qg$ studied in \cref{ssec:decompositionfpqg},  holds in Siegel's sense, in particular Theorem~\ref{theo:decomp}. However, it is conjectured that the classes of $E$-functions  in Siegel's sense and of $E$-functions in the strict sense are the same (see \cite[p.~715]{Andre00}). This implies that the distinction is in practice illusory because all known examples of $E$-functions satisfy all the conditions to be $E$-functions in the strict sense.

\section{Examples} \label{sec:examples}

\subsection{The Lorch-Muldoon example} \label{ssec:lorchmuldoon}
In a special case of a result due to Lorch and Muldoon~\cite{LoMu95}, the starting point is the following equation satisfied by the fourth derivative of  Bessel's~$J_0$ function:
\[
z(z^2-3)^2y''(z)
+ (z^2-15)(z^2-3)y'(z) 
+ z(z^4 - 10z^2 + 45)y(z) 
= 0,
\]
with initial conditions $y(0)=3/8,y'(0)=0$.
With this input, \cref{algo:alg-values} returns
\[y(\pm\sqrt3)=0,\]
showing that the only non-zero algebraic points where the $E$-function ~$J_0^{(4)}$ is algebraic are~$\pm\sqrt{3}$, where it vanishes.
Moreover, the algorithm described in~\cref{ssec:decompositionfpqg} provides the canonical decomposition $J_0^{(4)} = p + q g$, where
\begin{equation}\label{LMdecomp}
p(z) = 0, \; 
q(z) = z^2-3\quad
\text{and} \quad
g(z) = -J_2(z)/z^2.
\end{equation}
Here, the purely transcendental\footnote{Siegel first proved that Bessel's function $J_2$ is purely transcendental in \cite [pp.~31-32, \S 4]{siegel}.} 
$E$-function $g$ is given by the differential equation
\[
y''(z)
+ 5 y'(z) 
+ zy(z) 
= 0,\]
with initial conditions $y(0)=-1/8,y'(0)=0$. The decomposition~\eqref{LMdecomp} explains the {\em a priori} unexpected fact that $\mathrm{Exc}(J_0^{(4)}) = \{ \pm \sqrt{3} \}$.

\medskip It is easy to construct $E$-functions that take algebraic values at certain chosen  algebraic points: consider $p+qg$ where $p, q\in \Q[z]$ and $g$ is any $E$-function in $\Q[[z]]$. Conversely, as we have seen in~\cref{theo:decomp},  any $E$-function $f\in\Q[[z]]$ can be written $f=p+qg$ where $p, q\in \Q[z]$ and $g$ is a purely transcendental $E$-function.

It turns out to be difficult to find an $E$-function which takes an algebraic value at a non-zero algebraic point and which is not obviously of the form $p+qg$ as above. The goal of the next two sections is to provide two infinite families of $E$-functions for which we believe it is difficult to guess {\em a priori} \cref{prop:newfam1F1,prop:lomu} below. Their proof is inspired in part by that of the evaluation $J_0^{(4)}(\pm \sqrt{3})=0$ above. 
Besides their theoretical interest, we used these propositions to check the correctness of various routines of our algorithms.

\subsection{A first family of \texorpdfstring{$E$}{E}-functions}\label{sec:newfam1}

We start with a result on the exceptional values of an infinite family of ${}_1F_1$ functions.

\begin{proposition} \label{prop:newfam1F1}
Let $a\in \mathbb Q\setminus \mathbb Z_{\le 0}$
and $d\in\mathbb{N}$.
Then:
\begin{enumerate}
\item[$(i)$] $R(z)\coloneq \sum_{k=0}^d \binom{d}{k}(a)_k z^{d-k}$ has $d$ simple roots;
\item[$(ii)$]  $\mathrm{Exc}({}_1F_1[d+1;a+d+1; -z])$ coincides with the set of roots of $R$;
\item[$(iii)$] for any root $\rho$ of $R$, the following identity holds:
\begin{equation}\label{eq:remark:s}
{}_1F_1[d+1;a+d+1; -\rho]
=
-\frac{(a)_{d+1}}{\rho R'(\rho)} \cdot
\end{equation}
\end{enumerate}
\end{proposition}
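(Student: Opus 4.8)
The plan is to first establish a closed-form identity expressing the $E$-function $h_a(z) \coloneq {}_1F_1[d+1;a+d+1;-z]$ in terms of a simpler $E$-function, namely (a constant multiple of) $z^{-a-d} \gamma$-type object, or more concretely in terms of $g_a(z) \coloneq {}_1F_1[1;a;-z]$, which is (up to normalization) the purely transcendental $E$-function $\int_0^z$-iterates of $e^{-t}$. Indeed, the key algebraic fact is that repeated application of the Euler-type contiguity/derivative relations for ${}_1F_1$ yields a relation of the shape
\[
R(z)\, g_a(z) = c \cdot {}_1F_1[d+1;a+d+1;-z] + (\text{polynomial}),
\]
for an explicit nonzero constant $c$, where $R$ is exactly the polynomial in part~$(i)$. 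The natural way to see this: $g_a$ satisfies the inhomogeneous first-order equation $z y' + (a-z) y = a$ (equivalently $y' = (1 - a/z)y + a/z$... to be normalized carefully), and one checks by induction on $d$ that $\theta$-shifts (where $\theta = z\,d/dz$) of $g_a$ relate to the higher ${}_1F_1$'s with shifted parameters through the binomial-weighted combination $\sum_k \binom{d}{k}(a)_k z^{d-k}$. I would verify the base case $d=0$ (trivial) and the inductive step using the ladder relation ${}_1F_1[b;c;x] \leftrightarrow {}_1F_1[b+1;c+1;x]$ together with $\frac{d}{dx}{}_1F_1[b;c;x] = \frac{b}{c}{}_1F_1[b+1;c+1;x]$.

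Once this identity is in hand, parts $(ii)$ and $(iii)$ follow from the results of \cref{ssec:decompositionfpqg}, in particular \cref{prop:unicitydecomp} and \cref{theo:decomp}. Specifically: $g_a$ is purely transcendental (this is classical — it is essentially the incomplete Gamma function, and Siegel-type arguments or \cref{algo:alg-values} confirm $\mathrm{Exc}(g_a)=\emptyset$; alternatively one cites that ${}_1F_1[1;a;-z]$ with $a\notin\mathbb Z_{\le 0}$ has no exceptional values). Then the identity $R(z) g_a(z) = c\, h_a(z) + P(z)$ rewrites as $h_a = \frac{1}{c}\bigl(-P + R\, g_a\bigr)$, which—after reducing $R$ and $P$ to a monic $q$ with $q(0)\neq 0$ and $\deg p < \deg q$—is precisely the canonical decomposition of $h_a$. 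By the description of $\mathrm{Exc}$ via the decomposition, $\mathrm{Exc}(h_a)$ is exactly the set of roots of $R$ (once we know $R(0)\neq 0$, which holds since $R(0) = (a)_d \neq 0$ as $a\notin\mathbb Z_{\le 0}$, and that $h_a$ is genuinely transcendental, not polynomial). For $(iii)$, evaluate the identity $R(\rho) g_a(\rho) = c\, h_a(\rho) + P(\rho)$ at a root $\rho$ of $R$: the left side vanishes, giving $h_a(\rho) = -P(\rho)/c$, and the claimed formula $-\frac{(a)_{d+1}}{\rho R'(\rho)}$ should emerge after identifying $P$ and $c$ explicitly, presumably via differentiating the identity once and using $R(\rho)=0$ again, or by a residue/partial-fraction computation.

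For part~$(i)$, that $R$ has $d$ simple roots: I would argue that $R(z) = z^d {}_2F_0[-d, ?;;\ldots]$-type truncated series, or more directly observe that $R(z)\, z^{a-1} e^{-z}$ is proportional to $\frac{d^d}{dz^d}\bigl(z^{a+d-1} e^{-z}\bigr)$ up to sign (a Rodrigues-type formula, since $(a)_k$ weights match the Leibniz expansion of $\partial^d(z^{a+d-1})\cdot$ powers from $\partial^d e^{-z}$). Then $R(z) = e^z z^{1-a} \frac{d^d}{dz^d}(z^{a+d-1}e^{-z})$, and the simplicity of roots follows from the fact that these are (generalized) Laguerre polynomials $L_d^{(a-1)}(z)$ up to normalization, whose roots are known to be real, positive, and simple for $a-1 > -1$, i.e. $a>0$; for the remaining rational $a \in (-d, 0)\setminus\mathbb Z$ one uses the classical result that $L_d^{(\alpha)}$ has simple roots for all $\alpha \notin \{-1,-2,\ldots,-d\}$ (e.g. via the discriminant of Laguerre polynomials, which is a product of nonzero factors in this parameter range).

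The main obstacle I anticipate is pinning down the exact constant $c$ and the exact polynomial $P$ in the ladder identity so that $(iii)$ comes out with the stated normalization $-(a)_{d+1}/(\rho R'(\rho))$ — the combinatorics of tracking $(a)_k$ Pochhammer factors through $d$ iterations of the contiguity relations is fiddly, and getting the Rodrigues/Laguerre identification in the right normalization (including the $a<0$ non-integer cases for part~$(i)$) requires care. Everything else is either a direct appeal to \cref{theo:decomp}/\cref{prop:unicitydecomp} or a standard special-functions computation.
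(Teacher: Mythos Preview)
Your approach is valid in outline but differs substantially from the paper's main proof, and it contains a few concrete slips you should fix.

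\medskip

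\textbf{The paper's route.} The paper does \emph{not} build a decomposition $h_a = p + Rg$. Instead, it applies \cref{algo:min-inhom} directly: the adjoint of the second-order ODE $zy''+(a+z+d+1)y'+(d+1)y=0$ admits $R$ as a polynomial solution, and this yields the first-order inhomogeneous equation
\[
b_1(z)\,y'(z)+b_0(z)\,y(z)=(a)_{d+1},\qquad b_1=zR,\quad b_0=(a+z+d+1)R-(zR)'.
\]
From this single equation all three parts fall out with no combinatorics: $(i)$ because a double root $\rho$ of $R$ would make $b_1(\rho)=b_0(\rho)=0$, contradicting $(a)_{d+1}\neq 0$; $(iii)$ because at a simple root $\rho$ one has $b_1(\rho)=0$ and $b_0(\rho)=-\rho R'(\rho)$; and $(ii)$ because the Siegel--Shidlovskii theorem forces $f'(\alpha)\notin\Qbar$ whenever $f(\alpha)\in\Qbar$ ($\alpha\neq 0$), so the inhomogeneous equation gives $b_1(\alpha)=0$, i.e.\ $R(\alpha)=0$.

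\medskip

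\textbf{Your route, and what to correct.} Your plan is essentially the alternative argument that the paper sketches in \cref{rem:1F1}, items~(b) and~(i). It works, but two details are off:
\begin{itemize}
\item The identity you want is not $R\,g_a = c\,h_a + P$. Differentiating $f(z)=\frac{1}{a}\,{}_1F_1[1;a+1;-z]$ $d$ times yields (see \cref{eq:id-1f1}) a relation of the shape $R_d\,f + Q_d = c\,(-z)^d h_a$, with an unavoidable factor $z^d$ on $h_a$. This does not break your argument for $(ii)$, since $\mathrm{Exc}(z^d h_a)=\mathrm{Exc}(h_a)$ and $z^d h_a = \tfrac{(a)_d}{d!}Q_d + R_d\cdot\bigl(\tfrac{(a)_d}{d!}f\bigr)$ is a genuine canonical decomposition (one checks $\deg Q_d=d-1$ and $R_d(0)=(a)_d\neq 0$), with $f$ purely transcendental by Shidlovskii. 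But you should state the identity correctly.
\item Your Rodrigues formula and Laguerre identification are wrong: the correct one is $R(z)=(-1)^d d!\,L_d^{(-a-d)}(z)$ (cf.\ \cref{rem:1F1}(b)), coming from $\partial^d(z^{-a}e^{-z})$, not $\partial^d(z^{a+d-1}e^{-z})$. Schur's discriminant formula then gives $(i)$, since $a\notin\Z_{\le 0}$ avoids the vanishing factors.
\end{itemize}

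\medskip

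\textbf{Comparison.} The paper's inhomogeneous-ODE approach is cleaner: it sidesteps the ``fiddly'' Pochhammer tracking you anticipate, gives $(i)$ and the exact constant in $(iii)$ for free, and illustrates the paper's own \cref{algo:min-inhom}. Your approach is more explicit and connects directly to the canonical decomposition of \cref{prop:unicitydecomp}, but requires establishing the ladder identity by induction and invoking the Laguerre discriminant separately.
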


\begin{proof}
Let us first treat the case $d=0$. Then $(i)$ and $(iii)$ trivially hold since $R = 1$.
On the other hand, by \cite[p.~185, Theorem 1]{shidl}, we have that 
	$\mathrm{Exc}({}_1F_1[1;a+1; -z])=\emptyset$ for all $a\in \mathbb Q\setminus \mathbb Z_{\le 0}$, and this proves $(ii)$. 
	
We assume in the rest of the proof that $d\in\mathbb{N}\setminus\{ 0 \}$.
From the differential equation 
$zy''(z) + (a+z+d+1)y'(z)+(d+1)y(z)=0$
satisfied by 
${}_1F_1[d+1; a+d+1; -z ]$,
\cref{algo:min-inhom} computes its adjoint  
$zy''(z) - (z+a+d-1)y'(z)+dy(z)=0$
and discovers that it admits $R$ as a  non-zero polynomial solution.
From there, it computes
$b_1 = z R(z)$ and $b_0 = (a+z+d+1) R(z) - b_1'(z)
$, with the property that 
${}_1F_1[d+1; a+d+1; -z ]$
is a solution of the inhomogeneous differential equation 
\begin{equation}\label{eq:b1b0a}
b_1(z) y'(z) + b_0(z) y(z) = a(a+1)\cdots (a+d).
\end{equation}
It follows from~\eqref{eq:b1b0a} that $R$ has only simple roots,  since if $R(\rho)=R'(\rho)=0$, then $b_1(\rho)=b_0(\rho)=0$, hence $a\in \{0,-1,\ldots, -d\}$, which is impossible. This proves $(i)$.

Next, \cref{algo:alg-values} evaluates~\eqref{eq:b1b0a} at the roots $\rho$ of $R$.
Since $b_1(\rho)=0$, it follows that 
$
{}_1F_1[d+1;a+d+1; -\rho]
= a(a+1)\cdots (a+d)/b_0(\rho)$,
and hence $(iii)$ holds.

To prove $(ii)$, note that $f(z)\coloneq {}_1F_1[d+1;a+d+1; -z]$ is a transcendental $E$-function such that $1,f,f'$ are linearly dependent over $\mathbb Q(z)$ (as Eq.~\eqref{eq:b1b0a} shows), and $(f,f')^{\mathrm T}$ is solution of a differential system with only 0 as singularity. In particular, by the Siegel-Shidlovskii theorem,   for any $\alpha \in \Qbar^*$ such that $f(\alpha)\in \Qbar$ we have $f'(\alpha) \notin \Qbar$, and consequently the differential equation \eqref{eq:b1b0a} shows that $\mathrm{Exc}({}_1F_1[d+1;a+d+1; -z])$ coincides with the set of roots of $R$, proving $(ii)$.
\end{proof}

\medskip 
\begin{remark} \label{rem:1F1}
	\emph{Let us now make several remarks on \cref{prop:newfam1F1}.}
\end{remark} 

\begin{enumerate}[label=(\alph*),left=0pt]
\smallskip 
\item We do not know if the evaluation~\eqref{eq:remark:s} is available in the (very rich) literature on special functions. 
It is remarkable that it was discovered (and proved) using our algorithms. Note that Eq.~\eqref{eq:remark:s} holds more generally for all $a\in \mathbb C\setminus \mathbb Z_{\le 0} $.

\smallskip 
\item The polynomial 
$R(z)$ in \cref{prop:newfam1F1}
is equal to 
$(a)_d \cdot {}_1F_1[ -d; 1-a-d; z ]$, thus
it can be expressed in terms of generalized Laguerre polynomials as
\[R(z) = (-1)^d d! \cdot L_{d}^{(-a-d)}(z) .\]
As proved by Schur~\cite{Schur1931}, the discriminant of $R$ is equal to
$\prod_{j=2}^d j^j (j-a-d)^{j-1}$, which is non-zero since $a\notin \Z_{\le 0}$; this yields a different proof that $R$ has only single roots.

\smallskip 
\item If $a\in \mathbb Z_{\le 0}$, then the situation is simpler and well understood.
Indeed, formula
\href{https://functions.wolfram.com/HypergeometricFunctions/Hypergeometric1F1/03/01/02/0007/}{07.20.03.0007.01} on Wolfram's mathematical functions site implies
\[
{}_1F_1[d+1;a+d+1; -z]
 = 
\frac{\left(-1\right)^{-a} \left(-a \right)! }{\left(-d \right)_{-a}}
\cdot e^{-z} \cdot L_{-a}^{\left(d +a \right)}\! \left(z \right)
\]
and in particular
\[
{}_1F_1[d+1; d; -z] = e^{-z} \cdot \frac{d-z}{d} \quad \text{for any} \; d \neq 0
\]
and
\[
{}_1F_1[d+1; d-1; -z] = 
e^{-z} \cdot \frac{ (d - z)^2 - d}{d \left(d -1\right)}
\quad \text{for any} \; d \notin \{ 0, 1 \}.
\]
In turn, these functional identities induce \emph{infinite} families of numerical identities, such as 
\begin{equation} \label{id-simple-1}
{}_1F_1[d+1; d; -d] = 0 \quad \text{for any} \; d \in \mathbb{N} \setminus \{ 0 \}
\end{equation}
and
\begin{equation} \label{id-simple-2}
{}_1F_1[d^2+1; d^2-1; d-d^2] = 
{}_1F_1[d^2+1; d^2-1; -d-d^2] =
0 \quad \text{for any} \; d \in \mathbb{N} \setminus \{ 0, 1 \}.
\end{equation}

More generally,
$\mathrm{Exc}({}_1F_1[d+1;a+d+1; -z])$
coincides with the set of roots of $L_{-a}^{\left(d +a \right)}\! \left(z \right)$.

\smallskip 
\item  When $d=1$, the rational canonical decomposition of $f(z)\coloneq {}_1F_1[2;a+2; -z]$ given by \cref{theo:decomp}  is
$f=p+qg$ with $p=a+1$, $q=z+a$ and $g = -{}_1F_1[1;a+2; -z]$ (note that 
$g$ is purely transcendental by remark (2) above).

Decompositions of $f(z)\coloneq {}_1F_1[d+1;a+d+1; -z]$ can easily be written down when $d\ge 2$ but they are neither as explicit nor necessarily canonical. Since $b_0$ and $b_1$ (in the proof of \cref{prop:newfam1F1}) are coprime, there exist $u,v\in \mathbb Q[z]$ such that $b_1u+b_0v=1$. Then we have the decomposition  $f=(a)_{d+1}v+Rg$, where $g(z)\coloneq z(u(z)f(z)-v(z)f'(z))$ is purely transcendental. Indeed, the decomposition is immediate to check and let $\alpha \in \Qbar^*$ be such that $g(\alpha)\in \Qbar$. Then $f(\alpha)\in \Qbar$ as well, hence $b_1(\alpha)=0$ by $(ii)$ in Proposition \ref{prop:newfam1F1}, so that $v(\alpha)\neq 0$ by the relation  $b_1u+b_0v=1$. Therefore $g(\alpha)\notin \Qbar$ because $f'(\alpha)\notin \Qbar$. This contradiction proves that there is no such $\alpha$.

When $d=1$, this procedure provides an alternative way to obtain the above canonical decomposition of $f(z)\coloneq {}_1F_1[2;a+2; -z]$,
with $g$ represented as $g(z)=z/(a(a+1)) \cdot ((z+a-1) f(z)+(z-1) f'(z))-1$.

When $d=2$, it provides a decomposition of $f(z)\coloneq {}_1F_1[3;a+3; -z]$ as $f = p+qg$, where $p(z)=z^2/2 + (a-2)z/2+1$, $q(z)=z^2+2az+a(a+1)$ and $g(z)=
-z/(2a(a+1)(a+2)) \cdot ((z^2+2(a-1)z+a^2+2)f+(z^2+(a-2)z+2)f')$. 
The canonical decomposition of $f$ is then readily obtained as $f = \tilde{p}+q \tilde{g}$, where
$\tilde{p}\coloneq p-q/2$ and $\tilde{g}\coloneq g+1/2$.

\smallskip 
\item  When $d=2$, \cref{prop:newfam1F1} implies the following evaluation:
\begin{equation}\label{eq:remark:s=2}
e^{a\mp i\sqrt{a}} {}_1F_1[a;a+3; -a \pm i\sqrt{a} ] =
{}_1F_1[3; a+3; a \mp i\sqrt{a} ] =
(a+2)(1\mp i\sqrt{a})/2.
\end{equation}
The left-hand side  is a special case of Kummer's identity 
$e^{-z} {}_1F_1[a;b;z]={}_1F_1[b-a;b;-z]$.
The right-hand side follows from the fact that the roots of $R(z) = z^2+2az+a(a+1)$ are $\{ -a \pm i \sqrt{a}\}$, since then  \cref{prop:newfam1F1} implies
$
 {}_1F_1[3; a+3; \rho ] = 
-a(a+1)(a+2)/(\rho R'(\rho))
=(a+1)(a+2)/(2(\rho+a+1))$.

\smallskip 
\item  Generalized Laguerre polynomials are most of the time irreducible in $\Q[z]$, but not always. Filaseta and Lam~\cite[Thm.~1]{FiLa02} proved that if $\alpha\notin \Z_{< 0}$,
then $L_d^{(\alpha)}(z)$ is irreducible in $\Q[z]$ for sufficiently large~$d$. 
However, for some values $d,\alpha$, the polynomial $L_d^{(\alpha)}(z)$ can be reducible.
This is so  e.g. for 
$d=5, \alpha=7/5$, 
for which $L_d^{(\alpha)}(z)$ admits the linear factor 
$z-12/5$. 
This observation leads to simple particular cases of~\eqref{eq:remark:s} such as:
\begin{align} 
\label{eq:eval_rat_3_46}
{}_1F_1[4;703/725; -312/725] &= -20999/525625, \\
 \label{eq:eval_rat_4_1}
{}_1F_1[5;-113/3; -140/3] &= -30073/27,\\
  \label{eq:eval_rat_5_1}
{}_1F_1[6;-2/5; -12/5] &= 1309/625,\\
 \label{eq:eval_rat_5_2}
% {}_1F_1[6;-34/5; -84/5]& = 8437/625.
{}_1F_1[6; 314/63; 20/63]& = 365707/250047,\\
 \label{eq:eval_rat_7_1}
{}_1F_1[8; 48/7; 6/7]& = 45305/16807.
\end{align}
Classifying all pairs $(d,a) \in \mathbb{N} \times \Q$ such that the $E$-function ${}_1F_1[d+1;a+d+1; -z] $ takes algebraic values at \emph{rational} points~$z$ is a non-trivial task, since by \cref{prop:newfam1F1} this is equivalent to finding  $(d,a) \in \mathbb{N} \times \Q$ such that $L_{d}^{(-a-d)}(z)$ admits a rational root.

For $2 \leq d \leq 10$,
we  systematically searched for such ``rational evaluations'' arising from 
\emph{reducible} Laguerre polynomials $L_d^{(-a-d)}$.
In particular, for all $2 \leq d \leq 10$, we looked for $z_0\in\Q$ 
with numerator and denominator between $-1000$ and $1000$, and 
such that $L_d^{(-a-d)}(z_0) \in \Q[a]$ has a root $a_0\in\Q$.
Each pair $(a_0,z_0)$ then yields a rational ${}_1F_1$-evaluation as above.

With $d\in \{ 2, 3, 4 \}$, we could find many such identities, for instance~\eqref{eq:eval_rat_3_46} and \eqref{eq:eval_rat_4_1}. 
By contrast, for $d \geq 5$, these rational evaluations become quite rare.
Of course, there are still infinite families of \emph{trivial} identities such as 
${}_1F_1[d+1; a; 0] = 1$ or of \emph{simple} identities such as \eqref{id-simple-1} and~\eqref{id-simple-2}.
But
other rational evaluations are rare with $d\geq 5$.
For instance, with $d=5$ we only found seven non-trivial  identities, of which~\eqref{eq:eval_rat_5_1} is the simplest and \eqref{eq:eval_rat_5_2} the most complex.
For $6 \leq d \leq 10$, the only non-trivial rational evaluation we found is~\eqref{eq:eval_rat_7_1}.

For $d=2$, all rational identities that we have found belong to the following infinite family:
\begin{equation} \label{eq:eval_rat_2_family}
{}_1F_1
\left[ 3; \frac{11}{4}-q^2-q; -\frac{(2q+3)(2q+1)}{4} \right] = 
\frac{(2q-1)(4q^2+4q-7)}{16}, \quad q \in \mathbb{Q}.
\end{equation}
Note that this identity is a particular case of~\eqref{eq:remark:s=2}. 

For $d=3$, all rational identities that we have found belong to the following parametric identity, which
specializes 
to~\eqref{eq:eval_rat_3_46} for $q=13/5$:
\begin{equation} \label{eq:eval_rat_3_family}
{}_1F_1 \left[ 4; \frac{q^3-12q+8}{2-3q}; \frac{q(1-q)(2-q)}{2-3q} \right] = -\frac{(q+2)(q^2+2q-2)(q^3-9q+6)}{6(2-3q)^2}.
\end{equation}

% eq4:=
% hypergeom([4],[(z^3-12*z+8)/(2-3*z)], z*(1-z)*(2-z)/(2-3*z))
% =
% -(z+2)*(z^2+2*z-2)*(z^3-9*z+6)/(3*z-2)^2/6:
% subs(z=13/5, eq4);
% 	# hypergeom([4],[703/725],-312/725) = -20999/525625
% subs(z=3, eq4)	;
% 	# hypergeom([4],[1/7],-6/7) = -65/49

A unified way to prove Eqs.~\eqref{eq:eval_rat_2_family}--\eqref{eq:eval_rat_3_family} 
is by using that for $d \in \{2, 3 \}$, the curve (in $a$ and~$z$) defined by the generalized Laguerre polynomial $L_d^{(a)}(z)$ has genus $0$, and by using the connection of the $L_d^{(a)}$'s to our $_1F_1$’s. E.g., for $d=3$ we have the parametrization 
\[ \big\{ a = (q^3-9q+6)/(3q-2), \; z = q(q-1)(q-2)/(3q-2) \big\} 
\quad \text{for} \quad L_d^{(a)}(z)=0
 \] 
from which the above evaluation \eqref{eq:eval_rat_3_family} follows.
% simplify(subs({a=(s^3-9*s+6)/(3*s-2), z=s*(s-1)*(s-2)/(3*s-2)}, orthopoly:-L(3,a,z)));

\smallskip 
\item When $d=4$, there is a nice connection between rational evaluations and elliptic curves.
The polynomial $R(z)$ in \cref{prop:newfam1F1} is equal to
$R(z) =  4! \cdot L_{4}^{(-a-4)}(z).$
Thus, $R(-z-a) = z^4 + 6 a z^2 - 8 a z+3 a (a+2)$ defines an elliptic curve $(\mathcal{E})$ whose 
Weierstrass form is $(\mathcal{W})$ $z^2 = a^3 - 76/3 a + 3440/27$. 
The Mordell–Weil group of $(\mathcal{W})$ is isomorphic to $\mathbb{Z}/2\mathbb{Z} \times \mathbb{Z}$, with generators 
	\[ \left\langle P_0 = (-20/3, 0), P_1 = (34/3, 36) \right\rangle 
	=  \left\langle P_0, P_2 = (-2/3, 12) \right\rangle,
	\]
where $P_0$ is a torsion point of order 2, and $P_1 + P_2 = P_0$.	
Each rational point on the elliptic curve $(\mathcal{W})$ 
gives rise to a non-trivial evaluation such as~\eqref{eq:eval_rat_4_1}.	
The points $P_0$ and $P_1$ themselves yield the trivial evaluations
$_1F_1[5; 3; -2] = 0$ and 
$_1F_1[5; 2; 0] = 1$,
while the point $P_2$ yields the undefined evaluation $_1F_1[5; -4; -6].$ 
However, their multiples yield interesting evaluations.
For instance, the rational point 
$-4.P_1 = (-53/12, 99/8)$ on  $(\mathcal{W})$ 
yields the rational point 
$(a,z) = (-128/3, -4)$ on $(\mathcal{E})$, 
which in turn provides identity~\eqref{eq:eval_rat_4_1}.

Similarly, the points $2.P_1 = (7/3, 9), -P_1 = (34/3, -36), -3.P_1 =  (-14/3, -12)$ and
$2.P_2 =  (7/3, -9)$ on $(\mathcal{W})$ 
yield the points
$(-8/3, 2), (-1/3, 1), (-27/25, 3/5)$ and $(-24/25, 6/5)$
on~$(\mathcal{E})$,  which in turn yield the rational evaluations
\begin{align} \label{eq:eval_rat_4_more}
{}_1F_1[5; 7/3; -2/3] &= 5/27,\\\nonumber
{}_1F_1[5; 14/3; 2/3] &= 55/27,\\\nonumber
{}_1F_1[5; 98/25; -12/25] &= 1679/3125,\\\nonumber
{}_1F_1[5; 101/25; 6/25] &= 4199/3125.\nonumber
\end{align}
Mordell's theorem~\cite{Mordell1922} (see also~\cite[Part C]{HiSi00}) implies that there are infinitely many (non-trivial) rational evaluations of the form ${}_1F_1[5; \alpha ; \beta ] = \gamma$ with $\alpha, \beta, \gamma\in\Q$, such as the five evaluations in~\eqref{eq:eval_rat_4_1} and \eqref{eq:eval_rat_4_more}.

On the other hand, for any $d\geq 0$, the genus of the curve (in $a$ and $z$) defined by the generalized Laguerre polynomial $L_{d}^{(-a-d)}(z)$ is equal to  $\lfloor (d/2-1)^2\rfloor$~\cite{Wong05} and hence at least~$2$ for $d\geq 5$ (see also~\cite[Prop.~4]{HaWo06}). It follows from \cref{prop:newfam1F1}, from \cref{rem:1F1}(b) and from Faltings' theorem~\cite{Faltings83} (see also~\cite[\S E.1]{HiSi00}) that, for any $d\geq 5$, there are finitely many evaluations of the form ${}_1F_1[d+1;  \alpha ; \beta ] = \gamma$ with $\alpha, \beta, \gamma\in\Q$, such that $\beta \neq 0$ and $\alpha \in \mathbb Q\setminus \mathbb Z_{\le d+1}$.
Similarly, from \cref{rem:1F1}(c) it follows that there are finitely many evaluations of the form ${}_1F_1[d+1;  d-\alpha ; \beta ] = \gamma$ with $\beta, \gamma\in\Q$, such that $\beta \neq 0$ and $\alpha \in \mathbb Z_{\geq 4}$.

\smallskip  
\item   With other choices such as $d=4$, $\alpha=12/5$, the polynomial $L_d^{(\alpha)}$ has non-linear factors. In particular, we obtain the quadratic irrational evaluation
\begin{equation} \label{eq:eval_alg_4_2}
{}_1F_1[5;-7/5; (6 \sqrt{15}-42)/5]
=
{11}/{5}+{66 \sqrt{15}}/{125},
\end{equation}
which is not a particular case of~\eqref{eq:remark:s=2}.

A similarly looking identity, and perhaps even more striking, is the quartic evaluation
\begin{equation} \label{eq:eval_alg_5}
{}_1F_1 \left[6; \frac{23-\sqrt{725-20 \sqrt{985}}}{10}; -\frac{6}{5}\right]
=
\frac{111 \sqrt{985} - 3 \sqrt{3353450-106670 \sqrt{985}}-3533 }{2500},
\end{equation}
although in~\eqref{eq:eval_alg_5} the ${}_1F_1$ function on the left is not an $E$-function anymore.
% hypergeom([6], [(23-(725-20*985^(1/2))^(1/2))/10], -6/5) = (111*985^(1/2)-3*(3353450-106670*985^(1/2))^(1/2)-3533)/2500;

Another quartic evaluation, this time involving an $E$-function again, is
\begin{equation} \label{eq:eval_alg_5_2}
{}_1F_1 \left[6; \frac{22}{5}; -\frac{2}{5}\alpha\right]
=
\frac{17}{6255}\beta,
\end{equation}
where $\alpha \approx 5.15$ satisfies
$\alpha^4-21 \alpha^3+81 \alpha^2-21 \alpha+126=0$ and
$\beta \approx 1.59$ satisfies
$\beta^4-62 \beta^3+1584 \beta^2-2338 \beta-49=0$.

% hypergeom([6],[22/5],-2/5*RootOf(_Z^4-21*_Z^3+81*_Z^2-21*_Z+126,index = 1)) = 17/625*RootOf(_Z^4-62*_Z^3+1584*_Z^2-2338*_Z-49,index = 1)

We were unable to locate in previous works any of the identities
\eqref{eq:remark:s=2}--\eqref{eq:eval_alg_5_2}, including in online encyclopedias such as \href{https://functions.wolfram.com/HypergeometricFunctions/Hypergeometric1F1/03/}{Wolfram's mathematical functions site} and the
\href{https://dlmf.nist.gov/13}{Digital Library of Mathematical Functions}.
Given how vast the literature on special functions is, we would not be surprised that some of these identities were already tabulated.

\smallskip 
\item 
\Cref{eq:remark:s} can also be proved directly starting from the relation between these ${}_1F_1$ and the incomplete gamma function~\cite[13.6.5]{dlmf}, \cite[13.6.10]{abraste}:
\[f(z)\coloneq \frac1a{}_1F_1[1; a+1; -z]=(-z)^{-a}e^{-z}\gamma(a,-z).\]
Successive differentiation of the hypergeometric series shows that
\[f^{(d)}(z)=\frac{(-1)^dd!}{(a)_d}{}_1F_1[d+1; a+d+1; -z].\]
On the other hand, we have 
$\left(\gamma(a,-z)\right)'=(-z)^ae^z/z$ by~\cite[8.1]{dlmf}, \cite[6.5.2]{abraste}.
Thus, by induction, there are two families of polynomials $(R_d)$ and~$(Q_d)$ such that
\begin{equation}\label{eq:id-1f1}
\frac{(-1)^dd!}{(a)_d}{}_1F_1[d+1; a+d+1; -z]
=
R_d(z)e^{-z}(-z)^{-a-d}\gamma(a,-z)+(-z)^{-d}Q_d(z)
\end{equation}
with 
\[R_d(z)=e^z(-z)^{a+d}\left((-z)^{-a}e^{-z}\right)^{(d)},\quad Q_{d+1}=dQ_d-zQ_d'+R_d,\quad Q_0=0.\]
The polynomial~$R_d$ is exactly the Laguerre polynomial $R$ from before. 

Thus, \cref{eq:remark:s} boils down to an evaluation of this more general formula at a root~$\rho$ of~$R_d$, giving
\[{}_1F_1[d+1; a+d+1; -\rho] = \frac{(a)_dQ_d(\rho)}{d!\rho^d}.\]
This is not exactly the same formula as above. The proof is completed by proving by induction that both~$Q_d$ and~$R_d$ satisfy the same recurrence
$u_{d+1}=(z+d+a)u_d-zdu_{d-1}$,
giving an explicit evaluation of the determinant
\[\left|\begin{matrix}Q_{d+1}&R_{d+1}\\ Q_d&R_d\end{matrix}\right|=\left|\begin{matrix}z+d+a&-zd\\ 1&0\end{matrix}\right|\cdot \left|\begin{matrix}Q_{d}&R_{d}\\ Q_{d-1}&R_{d-1}\end{matrix}\right|=\dots=d!z^d.\]
Evaluating at~$z=\rho$ gives $Q_d(\rho)/(d!\rho^d)=-1/R_{d+1}(\rho)$. 
This gives another simple expression for the right-hand side of~\cref{eq:remark:s}, which follows from $R_{d+1}(z)=(z+d+a)R_d(z)-zR_d'(z)$.
\end{enumerate}

\subsection{A second family of \texorpdfstring{$E$}{E}-functions}\label{sec:newfam2}

% \medskip 
The next result considers exceptional values of second derivatives of products of ${}_1F_1$ with the exponential function. We recall that $J_0(-iz/2)=e^{-z/2}{}_1F_1[1/2;1;z]$ is such a product.
\begin{proposition} \label{prop:lomu}
Let $c\in \Qbar^*$  and $a,b\in \mathbb Q\setminus \mathbb Z_{\le 0}$ with  $a-b\notin \mathbb N$. Let $F(z)\coloneq e^{-cz} {}_1F_1[a;b;z]$. 
Then $\mathrm{Exc}(F'')=\emptyset$, except in the following (disjoint) cases:
\begin{enumerate}
	\item[{\bf 1.}] if $b=a(2c-1)/c^2$, then $\mathrm{Exc}(F'')=\{-a/c^2\}$ and $F''(-a/c^2)=0$;
	\item[{\bf 2.}] if $c=1$ and $b=a+1$, then $\mathrm{Exc}(F'')=\{-a\pm i\sqrt{a} \}$ and $F''(-a\pm i\sqrt{a}) = 1/(1\pm i\sqrt{a})$.
\end{enumerate}	
\end{proposition}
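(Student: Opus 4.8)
The plan is to run, by hand, the steps of Algorithms~\ref{algo:min-hom}, \ref{algo:min-inhom} and~\ref{algo:alg-values} on $F$, mimicking the proof of \cref{prop:newfam1F1} and the Lorch--Muldoon computation. First I would produce the governing linear differential equation: substituting $g={}_1F_1[a;b;z]=e^{cz}F$ into Kummer's equation $zg''+(b-z)g'-ag=0$ yields
\[
  zF''+Q(z)\,F'+P(z)\,F=0,\qquad P(z)=(cb-a)+c(c-1)z,\quad Q(z)=b+(2c-1)z .
\]
Using Kummer's transformation ${}_1F_1[a;b;z]=e^{z}{}_1F_1[b-a;b;-z]$ one checks that ${}_1F_1[a;b;z]$ is a polynomial iff $a\in\mathbb Z_{\le0}$ and a polynomial times $e^{z}$ iff $b-a\in\mathbb Z_{\le0}$; both are excluded by the hypotheses, so $F$ is transcendental, the operator above is its minimal operator, and $V:=\operatorname{Span}_{\Qbar(z)}\{F,F'\}$ has dimension $2$. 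Writing $F''=u(z)F+v(z)F'$ with $u=-P/z$ and $v=-Q/z$, and differentiating once modulo the equation to express $F'''$ in the basis $(F,F')$, a $2\times2$ determinant shows that $F'',F'''$ are $\Qbar(z)$-dependent iff $P^2+PQ'-QP'\equiv0$, i.e.\ iff $P(z)^2$ equals the constant $a(2c-1)-c^2b$; under the hypotheses this cannot happen, so $F''$ too has minimal operator of order exactly $2$.

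The core of the argument is then a dichotomy. Suppose first that $1,F,F'$ are linearly independent over $\Qbar(z)$: the vector $(1,F,F')$ satisfies a system $Y'=AY$ whose only finite singularity is $z=0$, so by Beukers' linear-independence theorem (item~(1) of \cref{ssec:AR}) the numbers $1,F(\alpha),F'(\alpha)$ are $\Qbar$-linearly independent for every nonzero algebraic $\alpha$. Since $F''(\alpha)=u(\alpha)F(\alpha)+v(\alpha)F'(\alpha)$, this forces $F''(\alpha)\notin\Qbar$ unless $u(\alpha)=v(\alpha)=0$, i.e.\ unless $\alpha$ is a common zero of $P$ and $Q$. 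A direct calculation shows that $P$ and $Q$ have a common zero precisely when $c^2b=a(2c-1)$, that this zero is then $-a/c^2$, and that in that case the equation $zF''+QF'+PF=0$ evaluated at $z=-a/c^2$ gives $F''(-a/c^2)=0$ outright. Hence $\mathrm{Exc}(F'')=\emptyset$ unless $c^2b=a(2c-1)$, where $\mathrm{Exc}(F'')=\{-a/c^2\}$ with $F''(-a/c^2)=0$ --- this is case~{\bf 1}.

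The complementary regime is $1\in V$, i.e.\ $F$ satisfies an inhomogeneous first-order LDE. Here the method of \cref{prop:newfam1F1} takes over: Kummer's transformation turns $F$ into ${}_1F_1[b-a;b;-z]$, and when $b-a$ is a positive integer the differentiation identity recalled in \cref{rem:1F1}(i) makes $F''$ a nonzero scalar multiple of ${}_1F_1[\,b-a+2;\,b+2;\,-z\,]$, a function of the type treated in \cref{prop:newfam1F1}, which pins down both $\mathrm{Exc}(F'')$ and the values of $F''$ on it. For $b=a+1$ one gets $F={}_1F_1[1;a+1;-z]$ and $F''=\tfrac{2}{(a+1)(a+2)}{}_1F_1[3;a+3;-z]$, and \cref{prop:newfam1F1} with $d=2$ (for which $R(z)=z^2+2az+a(a+1)$ has roots $-a\pm i\sqrt a$) together with formula~\eqref{eq:remark:s} gives exactly $\mathrm{Exc}(F'')=\{-a\pm i\sqrt a\}$ and $F''(-a\pm i\sqrt a)=1/(1\pm i\sqrt a)$ --- case~{\bf 2}.

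The step I expect to be the main obstacle is the clean separation of the two regimes: proving that $1\in V$ happens exactly for $c=1$ together with $b-a\in\mathbb N_{\ge1}$. The natural input is Gorelov's classification of $E$-functions annihilated by an inhomogeneous first-order operator (cited in \cref{ssec:AR}), or else an explicit elimination on the equation of the first step (passing through the Kummer-reduced form to see that a nonconstant exponential factor spoils any rational right-hand side). A secondary nuisance is the handling of the degenerate parameter values where $P$ or $Q$ is constant --- for instance $c=\tfrac12$, which makes $Q\equiv b\neq0$ so that $P$ and $Q$ cannot share a zero --- which must be checked separately but only confirm that case~{\bf 1} does not arise there.
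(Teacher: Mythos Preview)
Your plan is the paper's: derive $zF''+((2c-1)z+b)F'+(c(c-1)z+cb-a)F=0$, split on whether $1,F,F'$ are $\Qbar(z)$-independent, and in the independent case apply Beukers' Corollary~1.4 to $(1,F,F')$ so that $F''(\alpha)\in\Qbar$ forces $P(\alpha)=Q(\alpha)=0$, which gives exactly case~{\bf 1}. The paper resolves your ``main obstacle'' by the very adjoint route you sketch: $1,F,F'$ are dependent iff the adjoint $zy''+((1-2c)z-b+2)y'+(c(c-1)z+bc-a-2c+1)y=0$ has a nonzero rational solution, which (since $b\notin\mathbb Z_{\le0}$) must be a polynomial; for $c\neq1$ the indicial polynomial at $\infty$ is the nonzero constant $c(c-1)$, while for $c=1$ the only admissible degree is $b-a-1$. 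Two side remarks: your minimality argument for the order-$2$ operator is incomplete (an order-$1$ equation for $F$ only yields ${}_1F_1[a;b;z]=p(z)e^{\lambda z}$ for \emph{some} $\lambda\in\Qbar$, and you have not forced $\lambda\in\{0,1\}$; the paper handles this via the asymptotics of ${}_1F_1$ at $\pm\infty$), and your digression on the minimal operator of $F''$ is not needed.

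There is, however, a genuine obstruction that your own analysis exposes. The adjoint computation gives $1\in V$ precisely when $c=1$ and $b-a-1\in\mathbb N$, i.e.\ $b-a\in\{1,2,3,\dots\}$ --- your conjecture --- and not only for $b=a+1$. For $c=1$ and $b=a+k$ with $k\ge2$ (which the hypothesis $a-b\notin\mathbb N$ allows), Kummer gives $F={}_1F_1[k;a+k;-z]$ and hence $F''$ is a nonzero multiple of ${}_1F_1[k+2;a+k+2;-z]$; by \cref{prop:newfam1F1} its exceptional set is the (nonempty) root set of a degree-$(k{+}1)$ polynomial. Concretely, for $a=1$, $b=3$, $c=1$ one finds $F''(z)=z^{-4}\bigl(12-2e^{-z}(z^3+3z^2+6z+6)\bigr)$, so $F''(\rho)=12/\rho^4\in\Qbar$ at every root $\rho$ of $z^3+3z^2+6z+6$. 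Thus your route, carried through, does not prove the proposition as stated; it proves a corrected version (either add the hypothesis $b-a\notin\mathbb Z_{\ge2}$, or enlarge case~{\bf 2} to all $b-a\in\mathbb N_{\ge1}$ via \cref{prop:newfam1F1} with $d=b-a+1$).
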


\begin{proof}

With the assumptions on $a,b,c$, we prove below that:

\smallskip
{\bf Fact 1.} $F$ and $F'$ are linearly independent over $\Qbar(z)$, i.e. $F$ does not satisfy any homogeneous LDE of order less than~$2$.

\smallskip
{\bf Fact 2.} $F$ satisfies the second-order LDE
\begin{equation}\label{eq:ODE2}
zF''(z)=(z-2cz-b)F'(z)+(cz+a-c^2z-cb)F(z),
\end{equation}
and it does not satisfy any inhomogeneous LDE of order 1, unless $c=1$ and $b=a+1$.

\medskip 
Postponing for a moment the proof of these facts, we distinguish two cases.

\medskip 
{\bf Case 1.} We first assume that either $c\neq 1$ or $b\neq a+1$.
By {\bf Fact 1}, the function $F$ is a non-polynomial $E$-function (hence a  transcendental one).
By {\bf Fact 2} and by Beukers' Corollary 1.4 of \cite{Beukers06}, it follows that the numbers $1$, $F(\xi)$ and $F'(\xi)$ are linearly independent over $\Qbar$ for any $\xi\in \Qbar^*$.

Assume now that $b\neq (2ac-a)/c^2$.
Then, for any $\xi\in \mathbb C$, the numbers $\xi-2c\xi-b$ and $c\xi+a-\xi c^2-cb$ 
cannot be simultaneously equal to 0. 
Since the numbers $1$, $F(\xi)$ and $F'(\xi)$ are linearly independent over $\Qbar$ 
for any $\xi\in \Qbar^*$, it follows that 
\[
F''(\xi)=\frac{1}{\xi}\big((\xi-2c\xi-b)F'(\xi)+(c\xi+a-\xi c^2-cb)F(\xi)\big)\notin \Qbar.
\]
Hence $\mathrm{Exc}(F'')=\emptyset$ when $b\neq (2ac-a)/c^2$. 

It remains to treat the sub-case $b\coloneq (2ac-a)/c^2$. With $z\coloneq -a/c^2$, we see that 
\[
z-2cz-b = cz+a-c^2z-cb =0, 
\]
so that $F''(-a/c^2)=0$. 
Since $z-2cz-b$ and $cz+a-c^2z-cb$ vanish  simultaneously for no other value of $z$ 
and since the numbers $1$, $F(\xi)$ and $F'(\xi)$ are linearly independent over $\Qbar$ for any $\xi\in \Qbar\setminus \{0,-a/c^2\}$, we deduce that $F''(\xi)\notin \Qbar$ for such $\xi$. Hence $\mathrm{Exc}(F'')=\{-a/c^2\}$ when $b=(2ac-a)/c^2$. The proposition is thus proved in {\bf Case 1}.

\medskip 
{\bf Case 2.} We assume that $c=1$ and $b = a+1$.
Then, $F''(z) = \frac{2}{(a+1)(a+2)} e^{-z} {}_1F_1[a;a+3;z]$,
so $\mathrm{Exc}(F'') = \mathrm{Exc}(e^{-z} {}_1F_1[a;a+3;z])=\{-a\pm i\sqrt{a} \}$.
The last equality is a consequence of 
\cref{prop:newfam1F1} with $d=2$.
The equality
$F''(-a\pm i\sqrt{a}) = 1/(1\pm i\sqrt{a})$
follows from~\eqref{eq:remark:s=2}.

\medskip {\bf Proof of Fact 1.} When $a,b\in\mathbb Q\setminus \mathbb Z_{\le 0}$ and $a-b\notin \mathbb N$, the asymptotic behavior of ${}_1F_1[a;b;z]$ 
as $z\to \infty$ 
(with $-\pi/2<\arg(z)< 3\pi/2$)  
is given  in \cite[p.~508, Eq.  13.5.1]{abraste}, \cite[13.7.2]{dlmf}. In the particular cases $z\to \pm \infty$, it reads
\[
{}_1F_1[a;b;z] \sim_{z\to +\infty} \frac{\Gamma(b)}{\Gamma(a)}e^{z}z^{a-b}\quad\text{and}\quad
{}_1F_1[a;b;z] \sim_{z\to -\infty} \frac{\Gamma(b)}{\Gamma(b-a)}e^{i\pi a} z^a.
\]
This rules out the possibility that $e^{-cz}{}_1F_1[a;b;z]$ satisfies a differential equation of order 1 over $\Qbar(z)$. 

\smallskip Note that a different (purely algebraic) proof is possible, based on a reasoning similar to the one in the statement and proof of~\cite[Lemma 4.2]{BBH88}.

\smallskip  {\bf Proof of Fact 2.}
Since ${}_1F_1[a;b;z]$ satisfies
$zy''(z) + (b-z)y'(z)-ay(z)=0$ and $e^{-cz}$ satisfies $y'(z)+ cy(z)=0$,
it follows by a simple computation that $F$ satisfies~\eqref{eq:ODE2}.
By {\bf Fact 1}, \eqref{eq:ODE2} is the minimal-order homogeneous LDE satisfied by~$F$.

Assume now that $F$ satisfies an inhomogeneous LDE of order~$1$.
We will follow the reasoning in~\cref{algo:min-inhom}, and show that the adjoint of~\eqref{eq:ODE2} does not possess any non-zero rational solutions in $\Qbar(z)$, unless $c=1$ and $b=a+1$.

The adjoint equation of~\eqref{eq:ODE2} writes
\begin{equation}\label{eq:ODE2adj}
z y''(z) + ((1-2 c) z - b + 2) y'(z) + (c (c - 1) z + b c - a - 2 c + 1) y(z) = 0.
\end{equation}
If it admits a rational solution $R(z)\in\Qbar(z)$, then the only potential pole of $R$ can be located at $z=0$.
The indicial equation of~\eqref{eq:ODE2adj} at $z=0$ is $s (s - b + 1)$. Hence, the possible valuations at $z=0$ of $R$ are $0$ and $b-1$. 
Since $b\in \mathbb Q\setminus \mathbb Z_{\le 0}$, this implies that $R$ is actually a polynomial solution in $\Qbar[z]$ of~\eqref{eq:ODE2adj}.
If $c\neq 1$, then the indicial polynomial at infinity of~\eqref{eq:ODE2adj} is a non-zero constant, equal to $c^2-c$; therefore in that case, $R$ cannot be a polynomial solution. It follows that $c=1$. Now, the indicial polynomial at infinity of~\eqref{eq:ODE2adj} is $s - a + b - 1$, hence the only possible degree of $R$ is
$a - b + 1$. Since $a-b\notin \mathbb N$, this implies that $b=a+1$ and that $R$ is a constant in~$\Qbar$. 
In this case, \eqref{eq:ODE2adj} admits the rational solution $y(z)=1$,
and $F$ satisfies $z F'(z)+(z+a) F(z) = a$.
\end{proof}

Note that, in the spirit of Proposition \ref{prop:lomu}, the following
examples that can be treated along the same lines: the third derivatives of
\[
 e^{-z/9} {}_2F_2[1/144,1/144;-7/16,-7/16;z] \quad \textup{and} \quad e^{-z/3} {}_2F_2[1/4,3/4;5/4,-9/4;z]
\]
vanish at $z=3/16$ and $z=-9/4$, respectively. Our minimization algorithm
finds their minimal differential equations, which are too big to be written
here: their (order, degree) are $(3,8)$ and $(2,7)$, respectively.
\cref{algo:alg-values} then shows that they are transcendental functions and
that $\{ 3/16 \}$ and $\{ -9/4 \}$ are the exceptional values sets in each
case. A complete classification as in \cref{prop:lomu} seems to be currently
out of reach though.
Identities such as 
\[e^{-z} \, {}_2F_2[1/2,1/3;-1/2,-2/3;z] = 1-z/2+3z^2\]
(implying that the third derivative of the left-hand side is identically
zero) show that an assumption corresponding to the assump\-tion $a-b\notin
\mathbb N$ in \cref{prop:lomu} is obviously necessary on the rational
parameters of the ${}_2F_2[a,b;c,d;z]$: to avoid trivial situations, besides
the fact that the parameters $a,b,c,d$ must all be in $\mathbb Q
\setminus\mathbb Z_{\le 0}$, we must not have $a-c\in \mathbb N$ and $b-d\in
\mathbb N$, or $a-d\in \mathbb N$ and $b-c\in \mathbb N$ (note that in the
second example above, we have $3/4-(-9/4)\in \mathbb N$ but $1/4-5/4\notin
\mathbb N$, while $1/4-(-9/4)\notin \mathbb N$ and $3/4-5/4\notin \mathbb
N$).

\subsection{An example with Gauss' hypergeometric function}

The approach leading to special evaluations is very general and not restricted to $E$-functions. For instance, Gauss' hypergeometric function ${}_2F_1[a,b;c;z]$ satisfies a differential equation whose adjoint is solved by $R(z)\coloneq {}_2F_1[1-a,1-b;2-c;z]$. The approach from \cref{ssec:nhm} then deduces that the hypergeometric function satisfies a first order \emph{inhomogeneous} equation, with coefficients that are not polynomials in general, namely
\[z(z-1)R(z)y'(z)+(z(1-z)R'(z)+((a+b-1)z+1-c)R(z))y(z)+c-1=0.\]
It follows that if $\rho$ is a simple zero of~$R(z)$ different from 0,1, one gets the special evaluation
\[{}_2F_1[a,b;c;\rho]=\frac{1-c}{\rho(1-\rho)R'(\rho)}.\]
The special case  $c=a+k+1$ ($k\in\mathbb N$) gives a nice analogue of \cref{prop:newfam1F1}. To state it, recall that the $k$th \emph{Jacobi polynomial} $P_k^{(\alpha,\beta)}$ with parameters $\alpha,\beta\in\mathbb{C}$ is defined by
\[
P_k^{(\alpha, \beta)}(z)\coloneq  2^{-k} \cdot \sum_{j=0}^k
\binom{k+\alpha}{k-j}
\binom{k+\beta}{j}
(z-1)^j (z+1)^{k-j}.
\]
It is classical~\cite[\S6.72]{Szego75} that $P_k^{(\alpha, \beta)}$ has only simple roots (which are even real and in the interval $(-1,1)$ if $\alpha$ and $\beta$ are both real and greater than $-1$), with the notable exception of $\pm 1$ which is a multiple root of $P_k^{(\alpha, \beta)}$ if one of the parameters $\alpha, \beta$ is in $\{-1,\ldots,-k \}$.

\begin{proposition}\label{prop:newfam2F1}
Let $a\in\mathbb Q\setminus\mathbb Z_{\le0}$, $b\in\mathbb Q$ and $k\in\mathbb N$.
If $\rho\in\Qbar\setminus\{ 0,1 \}$ is a root of the polynomial $P_{k}^{(-k-a,b-k-1)}(1-2z)$, then
\begin{equation} \label{eq:special2F1}
{}_2F_1[a,b;a+k+1;\rho]=
\frac{(-1)^ka\binom{a+k}{k}(1 - \rho)^{k - b}}{(k + a - b)P_k^{(-k - a, b - k)}(1-2\rho)}
.
\end{equation}
\end{proposition}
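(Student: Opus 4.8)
The plan is to specialize the general ${}_2F_1$ evaluation derived just before the statement to the case $c = a+k+1$, and then to translate the condition "$\rho$ is a simple zero of $R$" and the formula ${}_2F_1[a,b;c;\rho] = (1-c)/(\rho(1-\rho)R'(\rho))$ into the language of Jacobi polynomials. First I would recall that $R(z) = {}_2F_1[1-a,1-b;2-c;z]$ solves the adjoint equation; with $c = a+k+1$ this becomes $R(z) = {}_2F_1[1-a,1-b;1-a-k;z]$, a terminating series because $1-a-k$ is a non-positive-integer-style upper bound relative to $1-a$ — more precisely the series $\sum_n \frac{(1-a)_n(1-b)_n}{(1-a-k)_n n!}z^n$ truncates at $n=k$ since $(1-a)_n/(1-a-k)_n$ develops a pole for $n>k$ unless... — here one must be a little careful, but in fact the standard identity ${}_2F_1[-k,\beta;\gamma;z]$-type truncation applies after using Euler's reflection ${}_2F_1[A,B;C;z] = (1-z)^{C-A-B}{}_2F_1[C-A,C-B;C;z]$ to rewrite $R$. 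The key bookkeeping step is the classical dictionary between terminating Gauss hypergeometric functions and Jacobi polynomials: $P_k^{(\alpha,\beta)}(1-2z) = \binom{k+\alpha}{k}\,{}_2F_1[-k,k+\alpha+\beta+1;\alpha+1;z]$. Matching parameters, one finds that $R(z)$ is, up to an explicit nonzero constant and an explicit power of $(1-z)$, equal to $P_k^{(-k-a,\,b-k-1)}(1-2z)$; this is exactly the polynomial appearing in the statement.

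Next I would compute the derivative $R'(\rho)$ at a root $\rho$. Since $R(z) = \kappa\,(1-z)^{\lambda} P_k^{(-k-a,b-k-1)}(1-2z)$ for the explicit constant $\kappa$ and exponent $\lambda = (c-a-b) = k+1-b$ (or whatever the reflection bookkeeping yields), and since $\rho$ is a root of the Jacobi factor and $\rho \neq 1$, the product rule gives $R'(\rho) = \kappa\,(1-\rho)^{\lambda}\cdot(-2)\,\big(P_k^{(-k-a,b-k-1)}\big)'(1-2\rho)$. To turn the derivative of $P_k^{(\alpha,\beta)}$ into another Jacobi polynomial I would invoke the standard differentiation formula $\frac{d}{dx}P_k^{(\alpha,\beta)}(x) = \tfrac12(k+\alpha+\beta+1)P_{k-1}^{(\alpha+1,\beta+1)}(x)$. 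With $(\alpha,\beta) = (-k-a,\,b-k-1)$ this produces $P_{k-1}^{(-k-a+1,\,b-k)}(1-2\rho) = P_{k-1}^{(1-k-a,\,b-k)}(1-2\rho)$, and a final index/parameter normalization (e.g. a contiguous relation, or re-expressing $P_{k-1}^{(1-k-a,b-k)}$ as $P_k^{(-k-a,b-k)}$ up to a factor via the three-term or the "lowering $\alpha$" relation $P_k^{(\alpha-1,\beta)}(x)$ vs $P_{k-1}^{(\alpha,\beta)}(x)$) should deliver exactly the denominator $(k+a-b)P_k^{(-k-a,b-k)}(1-2\rho)$ that appears on the right-hand side of \eqref{eq:special2F1}. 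Assembling $1-c = -k-a$, the factor $\rho(1-\rho)$, and the computed $R'(\rho)$, and absorbing all constants, yields the claimed closed form; the power $(1-\rho)^{k-b}$ is precisely $(1-\rho)^{-\lambda+1}$ or similar, accounted for by the Euler-transformation exponent together with one factor of $(1-\rho)$ from the prefactor $\rho(1-\rho)$.

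I would also need to confirm that the hypotheses make the derivation legitimate: that $\rho \neq 0,1$ guarantees we are not at the boundary singularities of the ${}_2F_1$ equation, that $a \notin \mathbb Z_{\le 0}$ keeps $\binom{a+k}{k}$ and the Pochhammer factors well-defined, and — crucially — that $\rho$ is automatically a \emph{simple} root of $P_k^{(-k-a,b-k-1)}(1-2z)$, which is where I would cite the stated classical fact that $P_k^{(\alpha,\beta)}$ has only simple roots except possibly at $\pm 1$ when a parameter is a negative integer in $\{-1,\dots,-k\}$; since we excluded $\rho = 1$ (and $\rho = 0$ corresponds to $1-2\rho = 1$ as well, also excluded), simplicity holds, so $R'(\rho) \neq 0$ and division is valid. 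The main obstacle I anticipate is purely the hypergeometric-to-Jacobi normalization bookkeeping: there are several equivalent Jacobi-polynomial parametrizations floating around (shifts between $P_k^{(\alpha,\beta)}$, $P_{k-1}^{(\alpha+1,\beta+1)}$, and $P_k^{(\alpha\pm1,\beta)}$), and getting the constant prefactor $(-1)^k a\binom{a+k}{k}/(k+a-b)$ and the exponent $k-b$ exactly right, with no sign or factorial slip, is the delicate part — but this is routine manipulation with Euler's transformation, the $P_k^{(\alpha,\beta)} \leftrightarrow {}_2F_1$ identity, and the Jacobi differentiation formula, rather than anything conceptually hard.
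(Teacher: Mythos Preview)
Your proposal is correct and follows essentially the same route as the paper: specialize $c=a+k+1$, rewrite $R(z)={}_2F_1[1-a,1-b;1-a-k;z]$ via Euler's transformation as a constant times $(1-z)^{b-k-1}P_k^{(-k-a,b-k-1)}(1-2z)$, and evaluate $(1-c)/(\rho(1-\rho)R'(\rho))$ at a root $\rho$ of the Jacobi factor. The only difference is bookkeeping for $R'(\rho)$: the paper writes down an explicit expression for $R'(z)$ in which $P_k^{(-k-a,b-k)}(1-2z)$ (same degree, $\beta$ shifted by one) already appears, so the target denominator drops out immediately at a root of $P_k^{(-k-a,b-k-1)}$, whereas your route via $\tfrac{d}{dx}P_k^{(\alpha,\beta)}=\tfrac12(k+\alpha+\beta+1)P_{k-1}^{(\alpha+1,\beta+1)}$ lands on $P_{k-1}^{(1-k-a,b-k)}(1-2\rho)$ and needs one more contiguous relation (combining the standard identities $P_n^{(\alpha,\beta-1)}-P_n^{(\alpha-1,\beta)}=P_{n-1}^{(\alpha,\beta)}$ and $(1-x)P_n^{(\alpha+1,\beta)}+(1+x)P_n^{(\alpha,\beta+1)}=2P_n^{(\alpha,\beta)}$ does it) to reach the stated form---precisely the ``delicate part'' you anticipated.
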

\begin{proof}
With $c=a+k+1$, the hypergeometric function $R(z)$ becomes
\[R(z)={}_2F_1[1-a,1-b;1-a-k;z]=\frac{(-1)^{k} k !}{(a)_{k}}{(1-z)^{b -k -1} P_{k}^{(-k -a , b -k -1)} (1-2 z)}.\]
Its roots different from 0,1 are the roots of $P_{k}^{(-k -a , b -k -1)} (z)$ different from $-1,1$, which are all simple. The formula for the denominator comes from the derivative
\begin{multline*}
R'(z)=-\frac{(-1)^k}{z\binom{a+k-1}k}\left(
 (k+a-b)(1-z)^{b-k-1}P_k^{(-k-a,b-k)}(1-2z)\right.\\
\left. +
 ((a-1)z+b-a-k)(1-z)^{b-k-2}P_k^{(-k-a,b-k-1)}(1-2z)
 \right).\qedhere
\end{multline*}
\end{proof}

\medskip 
\begin{remark} \label{rem:2F1}
	\emph{Let us conclude with a few remarks on \cref{prop:newfam2F1}.}
\end{remark} 

\begin{enumerate}[left=0pt]

\smallskip 
\item 
If none of $a,b,a-b$ is an integer, then $f(z) = {}_2F_1[a,b;a+k+1;z]$ is a transcendental function~\cite{Vidunas07}. Therefore, the evaluation in \cref{eq:special2F1} provides very simple particular cases of algebraic values taken by transcendental $G$-functions at algebraic points. Note that \eqref{eq:special2F1} holds more generally for $a\in\mathbb C\setminus\mathbb Z_{\le0}$ and $b\in\mathbb C$.

\smallskip 
\item As in the case of \cref{prop:newfam1F1}, another proof of \cref{prop:newfam2F1} relies on a relation analogous to \cref{eq:id-1f1} between ${}_2F_1[a,b;a+k+1;z]$,
${}_2F_1[a,b;a;z]=(1-z)^{-b}$ and 
\[
{}_2F_1[a,b;a+1;z]=az^{-a}B_z(a,1-b)=az^{-a}\int_0^z{z^{a-1}(1-z)^b\,dz},
\]
an incomplete beta function. The relation is obtained from those two by repeated use of a contiguity relation.

\smallskip 
\item As in \cref{rem:1F1}, nice special cases of \eqref{eq:special2F1} can be obtained by studying triples 
$(a,b,k)$ for which the Jacobi polynomial $P_k^{(-k - a, b - k-1)}$ factors non-trivially. For instance, the triples 
$(2/5, 3/5, 5)$ and $(2/3,7/3,3)$
yield the evaluations
\begin{align*}
% hypergeom([2/5, 3/5],[32/5], 1/2) = 1683/2500*8^(1/5)
{}_2F_1 \left[ \frac25, \frac35; \frac{32}{5}; \frac12 \right]& = \frac{1683}{2500} \,  \sqrt[5]{8}
\\
\intertext{and}
% hypergeom([2/3, 7/3], [14/3], (3*sqrt(5)-5)/2) = 44/27/(28-12*5^(1/2))^(1/3)
{}_2F_1\left[ \frac23, \frac73; \frac{14}{3}; \frac{3 \sqrt{5}-5}{2}\right]
&
= \frac{44}{27 \sqrt[3]{28-12 \sqrt{5}} } .
\end{align*}
(Here, by item (1) above, both hypergeometric functions are transcendental.)

\end{enumerate}

\section{Implementation}\label{sec:implem}

\subsection{Minimization is simpler than factorization}\label{sec:ex-factor}
The following simple example illustrates the difference between minimization and factorization.  Take
\[A=z^2\partial_z+3,\quad B=(z-10)\partial_z+z^5\]
and their product
\[C=AB=z^2(z-10)\partial_z^2+(z^7+z^2+3z-30)\partial_z+z^5(5z+3).\]
The computation of a bound on the degree of the coefficients of a
factor of order~1 of~$C$ gives~$10^5+2$. (More generally, changing
$10$ into a large~$N$ leads to a bound~$N^5+2$.) This leads to a large
computation when trying to
factor $C$ without further information. With the extra knowledge that we are looking for a solution of $C$ with initial condition~$y(0)=1$, we easily compute the first~20 coefficients of the unique series solution~$S$ of~$C$ with~$S(0)=1$ and then compute an approximant basis (a Hermite-Padé approximant) of~$(S,S')$ at order~20. This recovers the operator~$B$. It is easily checked that~$B$ is a right divisor of~$C$ by Euclidean division. It follows that all solutions of~$B(y)=0$ such that~$y(0)=1$ are solutions of~$C(y)=0$ and by uniqueness this proves that~$B(y)=0$ is the minimal homogeneous differential equation for~$S$. This takes less than a second with our implementation.
On this example, Maple's factorization routine \textsf{DEtools[DFactor]} has to be killed after running for 1~hour, a further indication that factorization is more complex than minimization.

\subsection{Implementation aspects}

The main difficulty is to avoid the computation of high-order
expansions of power series with rational coefficients. A first gain is achieved by performing most of the computation modulo a sufficiently large prime number (we take a 31-bit long prime number). When a factor is found with modular coefficients, then the actual degree bounds from that factor are used to determine how many rational coefficients of the power series have to be computed and then obtain the differential operator with rational coefficients. 

Another place where time can be saved is in the optimization problems. The computation of an approximant basis returns a linear differential operator of small order if one exists with the given degree bounds. Thus the computation of a tight bound on the number of apparent singularities is only useful if it leads to a bound on the degrees that is smaller than a previously known one. One can therefore add an extra inequality to the optimization problem so that the solver does not waste time in computing an optimum which is larger than what is already known.

In the computation of algebraic values of $E$-functions by 
\cref{algo:alg-values}, the matrix returned by Beukers' 
\cref{algo:Beukers} is not needed in full, as only its value
at~$z=\alpha$ is of interest. Thus, one should instead execute 
\cref{algo:Beukers} over Laurent expansions in powers of~$(z-\alpha)$,
increasing the precision of intermediate computations until the result
is found.

\subsection{Timings}

\begin{table}[t]
    \centering
    \begin{tabular}{cccccccc}
        $(m,p)$ & (ord,deg) & (ord,deg) & (ord,deg)& number& number&
        total&time\\
                & rec&
                original&minimal&modular&rational&time&cert.\\
                &    & diff.eq.&diff.eq. &terms&terms&(s.)&\\
                \hline \\[-4mm]
     % (1,1)&(2,2)&(2,1)&(2,1)&--&--&0.1 &\\
     % (1,2)&(3,5)&(5,3)&(4,3)&47&30&0.1 &\\
     % (1,3)&(4,10)&(10,8)&(6,8)&118&75&0.4 &\\
     % (1,4)&(5,19)&(19,16)&(9,16)&706&185&2.4 &\\
     (1,5)&(6,32)&(32,29)&(12,29)&2461&408&13.&26\%\\
                \hline \\[-4mm]
%     (2,1)&(2,3)&(3,2)&(3,2)&--&--&0.1 &\\
%     (2,2)&(2,4)&(4,3)&(4,3)&--&--&0.1 &\\
%     (2,3)&(4,14)&(14,12)&(8,12)&326&131&1.4 &16\% \\
     (2,4)&(5,26)&(26,24)&(11,24)&1501&317&7.2 &26\%\\
     (2,5)&(6,44)&(44,41)&(15,41)&5992&693&52.&21\%\\
                \hline \\[-4mm]
%     (3,1)&(4,10)&(10,8)&(6,8)&118&75&0.4 &20\% \\
%     (3,2)&(5,19)&(19,16)&(9,16)&706&185&2.6 &28\% \\
     (3,3)&(6,28)&(28,25)&(12,29)&2461&408&13. &25\%\\
     (3,4)&(7,51)&(51,47)&(16,47)&8288&838&92. &22\%\\
     (3,5)&(8,76)&(76,72)&(20,72)&20702&1559&358. &30\%\\
                \hline \\[-4mm]
%     (4,1)&(4,16)&(16,13)&(9,13)&484&155&2.7 &28\%\\
     (4,2)&(5,27)&(27,24)&(12,24)&1927&343&10. &27\%\\
     (4,3)&(6,41)&(41,38)&(15,38)&5362&645&35. &29\%\\
     (4,4)&(6,46)&(46,43)&(18,47)&9634&936&109.&24\%\\
     (4,5)&(8,92)&(92,88)&(24,88)&37228&2255&983.&21\%\\
                \hline \\[-4mm]
     (5,1)&(6,32)&(32,29)&(12,29)&2461	&408	&14.&22\% \\
     (5,2)&(7,51)&(51,47)&(16,47)&2288&838&92.&24\%\\
     (5,3)&(8,76)&(76,72)&(20,72)&20702&1559&477.&27\%\\
     (5,4)&(9,109)&(109,104)&(25,104)&50064&2761&1534.&23\%\\
     (5,5)&(10,134)&(134,129)&(30,145)&103024&4562&5216.&23\% \\
                \hline \\[-4mm]
    \end{tabular}
    \caption{Experimental results (experiments with smaller values of
    $(m,p)$ that complete under 3 sec. are not
    listed)}
    \label{tab:experiments}
\end{table}

Experimental results\footnote{The timings were obtained with Maple2021 on a 2018 Mac~mini.} on the family of power series
\[
f_{m,p} (z) = \sum_{n=0}^\infty \left( \sum_{k=0}^n \binom{n}{k}^m \binom{n+k}{k}^p \right) \frac{z^n}{n!}
\]
are reported in \cref{tab:experiments}. 
These power series are exponential generating functions of Apéry-like sequences, hence they are $E$-functions by design.
The case $(m,p)=(2,1)$ was considered by Adamczewski and Rivoal in~\cite[p.~706]{AdamczewskiRivoal2018}, who proved that $f_{2,1}$ is a purely transcendental $E$-function.
We used our algorithms to reprove this result and to extend it to other values of $m$
 and $p$, see \cref{tab:experiments}.
 
For each $(m,p)$, we indicate the order and the degree of the 
(minimal) recurrence computed by Zeilberger's algorithm, the order and
the degree of the differential equation deduced from this recurrence,
and those of the minimal-order (homogeneous) differential equation
obtained by our implementation when run on this differential equation.
We also give the number of coefficients of the sequence that were
computed modulo a prime number and the number of rational coefficients
of the sequence that were computed. 
The next column contains the time
in seconds spent in the whole computation (homogeneous and
inhomogeneous minimization and proof that there are no exceptional
values). Finally,
the last column gives the proportion of that time spent in
certifying the minimality of the differential equation that has
been computed. 

In more detail, during the computations, the most time-consuming
operation is that of the approximant basis over rational coefficients,
i.e., the reconstruction of the minimal differential equation itself.
This takes between~$1/3$ and~$1/2$ of the time, depending on the
examples.
Next comes the computation of approximant bases modulo a prime
number: even though the order of the power series is much larger in
the certification phase, modular computation makes it faster. This
takes between~$1/5$ and~$1/3$ of the time.
Finally,
the third most expensive part is the
computation of the rational coefficients needed to reconstruct the
operator. In the last case $(m,p)=(5,5)$, this is even more expensive
than the other two steps. The other operations needed in these tests, such as
computing gcrds, finding a minimal inhomogeneous equation,  computing
Beukers' matrix and its kernel, are all negligible compared to
these three.

The time spent in the certification of the
minimality that is displayed in the last column is consistently
between~$20\%$ and~$30\%$ of the total time. This overlaps with the
proportions given above since this time contains the computation of a
large number of modular coefficients and the computation of an
approximant basis.

\medskip 
On the basis of these experimental results, we ask the following questions on the family $f_{m,p}(z)$ and leave them for further research. The data in \cref{tab:experiments}, plus a few more experiments (not included in \cref{tab:experiments}), are in favor of positive answers to all these questions.

\begin{question}
Is the $E$-function $f_{m,p}$ purely transcendental for any $m\geq 1$ and $p \geq 1$?
\end{question}

\begin{question}
Is it true that for any odd~$m$, the minimal-order linear differential equation $L^{\textrm{min}}_{m,p}(y)=0$ 
satisfied by $f_{m,p}$ has order
$\lfloor (N+1)^2/4 \rfloor$ and degree
$\lfloor N (2N^2-3N + 4) /12 \rfloor$, where $N=m+p$?
In particular, when both $m$ and $p$ are odd, is it true that 
$\ord(L^{\textrm{min}}_{m,p}) = \ord(L^{\textrm{min}}_{p,m})$
and 
$\deg(L^{\textrm{min}}_{m,p}) = \deg(L^{\textrm{min}}_{p,m})$?
\end{question}

\section*{Acknowledgements.}
This work has been supported by the French ANR grant
\textcolor{magenta}{\href{https://specfun.inria.fr/chyzak/DeRerumNatura/}{DeRerumNatura}},
ANR-19-CE40-0018.
We thank Frits Beukers, Lucia Di Vizio, Marc Mezzarobba, Gaël Rémond and Michael Singer for many interesting discussions.
We are also grateful to the anonymous reviewers for their
useful and constructive comments.

% \begin{thebibliography}{10}

\newcommand{\etalchar}[1]{$^{#1}$}
\providecommand{\bysame}{\leavevmode\hbox to3em{\hrulefill}\thinspace}
\providecommand{\MR}{\relax\ifhmode\unskip\space\fi MR }
% \MRhref is called by the amsart/book/proc definition of \MR.
\providecommand{\MRhref}[2]{%
  \href{http://www.ams.org/mathscinet-getitem?mr=#1}{#2}
}
\providecommand{\href}[2]{#2}


\begin{thebibliography}{BBMKM16}

\bibitem[Abr89]{Abramov89}
S.~A. Abramov, \emph{Rational solutions of linear differential and difference
  equations with polynomial coefficients}, Zh. Vychisl. Mat. i Mat. Fiz.
  \textbf{29} (1989), no.~11, 1611--1620, 1757.

\bibitem[ABvH06]{AbBaHo06}
S.~A. Abramov, M.~A. Barkatou, and M.~van Hoeij, \emph{Apparent singularities
  of linear difference equations with polynomial coefficients}, Appl. Algebra
  Engrg. Comm. Comput. \textbf{17} (2006), no.~2, 117--133.

\bibitem[AF17]{AdFaPLMS}
Boris Adamczewski and Colin Faverjon, \emph{M\'{e}thode de {M}ahler: relations
  lin\'{e}aires, transcendance et applications aux nombres automatiques}, Proc.
  Lond. Math. Soc. (3) \textbf{115} (2017), no.~1, 55--90.

\bibitem[AF18]{AdFa18}
\bysame, \emph{M\'{e}thode de {M}ahler, transcendance et relations
  lin\'{e}aires: aspects effectifs}, J. Th\'{e}or. Nombres Bordeaux \textbf{30}
  (2018), no.~2, 557--573.

\bibitem[AK91]{AbKv91}
S.~A. Abramov and K.~Yu. Kvansenko, \emph{Fast algorithms for the search of the
  rational solutions of linear differential equations with polynomial
  coefficients}, I{SSAC}'91---{P}roceedings of the 16th {I}nternational
  {S}ymposium on {S}ymbolic and {A}lgebraic {C}omputation, ACM Press, 1991,
  pp.~267--270.

\bibitem[And00a]{Andre00}
Yves Andr\'{e}, \emph{S\'{e}ries {G}evrey de type arithm\'{e}tique. {I}.
  {T}h\'{e}or\`emes de puret\'{e} et de dualit\'{e}}, Ann. of Math. (2)
  \textbf{151} (2000), no.~2, 705--740.

\bibitem[And00b]{Andre2}
\bysame, \emph{S\'{e}ries {G}evrey de type arithm\'{e}tique. {II}.
  {T}ranscendance sans transcendance}, Ann. of Math. (2) \textbf{151} (2000),
  no.~2, 741--756.

\bibitem[And14]{Andre3}
\bysame, \emph{Solution algebras of differential equations and
  quasi-homogeneous varieties: a new differential {G}alois correspondence},
  Ann. Sci. \'{E}c. Norm. Sup\'{e}r. (4) \textbf{47} (2014), no.~2, 449--467.

\bibitem[AR18]{AdamczewskiRivoal2018}
Boris Adamczewski and Tanguy Rivoal, \emph{Exceptional values of
  {$E$}-functions at algebraic points}, Bull. Lond. Math. Soc. \textbf{50}
  (2018), no.~4, 697--708.

\bibitem[AS64]{abraste}
Milton Abramowitz and Irene~A. Stegun, \emph{Handbook of mathematical functions
  with formulas, graphs, and mathematical tables}, National Bureau of Standards
  Applied Mathematics Series, No. 55, U. S. Government Printing Office,
  Washington, D.C., 1964.

\bibitem[Bar95]{Barkatou95}
A.~Barkatou, \emph{A rational version of {M}oser's algorithm},
  I{SSAC}'95---{P}roceedings of the 20th {I}nternational {S}ymposium on
  {S}ymbolic and {A}lgebraic {C}omputation, ACM Press, 1995, p.~297–302.

\bibitem[BB85]{BertrandBeukers1985}
Daniel Bertrand and Frits Beukers, \emph{\'{E}quations diff\'erentielles
  lin\'eaires et majorations de multiplicit\'es}, Ann. Sci. \'{E}c. Norm.
  Sup\'{e}r. (4) \textbf{18} (1985), no.~1, 181--192.

\bibitem[BBH88]{BBH88}
Frits Beukers, W.~Dale Brownawell, and Gert Heckman, \emph{Siegel normality},
  Ann. of Math. (2) \textbf{127} (1988), no.~2, 279--308.

\bibitem[BBMKM16]{BoBoKaMe16}
Alin Bostan, Mireille Bousquet-M\'{e}lou, Manuel Kauers, and Stephen Melczer,
  \emph{On 3-dimensional lattice walks confined to the positive octant}, Ann.
  Comb. \textbf{20} (2016), no.~4, 661--704.

\bibitem[BCCL10]{BoCjCjLi10}
Alin Bostan, Shaoshi Chen, Fr\'{e}d\'{e}ric Chyzak, and Ziming Li,
  \emph{Complexity of creative telescoping for bivariate rational functions},
  I{SSAC}'10---{P}roceedings of the 2010 {I}nternational {S}ymposium on
  {S}ymbolic and {A}lgebraic {C}omputation, ACM, New York, 2010, pp.~203--210.

\bibitem[BCL{\etalchar{+}}07]{BoChLeSaSc07}
Alin Bostan, Fr\'{e}d\'{e}ric Chyzak, Gr\'{e}goire Lecerf, Bruno Salvy, and
  \'{E}ric Schost, \emph{Differential equations for algebraic functions},
  I{SSAC}'07---{P}roceedings of the 32nd {I}nternational {S}ymposium on
  {S}ymbolic and {A}lgebraic {C}omputation, ACM Press, 2007, pp.~25--32.

\bibitem[BCY04]{BeChYe04}
Daniel Bertrand, Vladimir Chirskii, and Johan Yebbou, \emph{Effective estimates
  for global relations on {E}uler-type series}, Ann. Fac. Sci. Toulouse Math.
  (6) \textbf{13} (2004), no.~2, 241--260.

\bibitem[Bek94]{Beke1894}
Emanuel Beke, \emph{Die {I}rreducibilit{\"a}t der homogenen linearen
  {D}ifferentialgleichungen}, Math. Ann. \textbf{45} (1894), no.~2, 278--294.

\bibitem[Ben92]{Bendixson1892}
Ivar Bendixson, \emph{Sur les {\'e}quations diff{\'e}rentielles lin{\'e}aires
  homog{\`e}nes}, {\"O}fversigt af Kongl. Vetenskaps-Akademiens
  F{\"o}rhandlingar \textbf{49} (1892), 91--105.

\bibitem[Ber88]{Bertrand88}
Daniel Bertrand, \emph{Exposants des syst\`emes diff\'{e}rentiels, vecteurs
  cycliques et majorations de multiplicit\'{e}s}, \'{E}quations
  diff\'{e}rentielles dans le champ complexe, {V}ol. {I} ({S}trasbourg, 1985),
  Publ. Inst. Rech. Math. Av., Univ. Louis Pasteur, Strasbourg, 1988,
  pp.~61--85.

\bibitem[Ber99]{Bertrand1999}
\bysame, \emph{On {A}ndr\'{e}'s proof of the {S}iegel-{S}hidlovsky theorem},
  Colloque {F}ranco-{J}aponais: {T}h\'{e}orie des {N}ombres {T}ranscendants
  ({T}okyo, 1998), Sem. Math. Sci., vol.~27, Keio Univ., Yokohama, 1999,
  pp.~51--63.

\bibitem[Beu06]{Beukers06}
F.~Beukers, \emph{A refined version of the {S}iegel-{S}hidlovskii theorem},
  Ann. of Math. (2) \textbf{163} (2006), no.~1, 369--379.

\bibitem[BLS17]{BoLaSa17}
Alin Bostan, Pierre Lairez, and Bruno Salvy, \emph{Multiple binomial sums}, J.
  Symbolic Comput. \textbf{80} (2017), no.~part 2, 351--386.

\bibitem[BLV99]{BeckermannLabahnVillard1999}
Bernhard Beckermann, George Labahn, and Gilles Villard, \emph{Shifted normal
  forms of polynomial matrices}, I{SSAC}'99---{P}roceedings of the 24th
  {I}nternational {S}ymposium on {S}ymbolic and {A}lgebraic {C}omputation, ACM
  Press, 1999, pp.~189--196.

\bibitem[BM15]{BaMa15}
Moulay~A. Barkatou and Suzy~S. Maddah, \emph{Removing apparent singularities of
  systems of linear differential equations with rational function
  coefficients}, I{SSAC}'15---{P}roceedings of the 40th {I}nternational
  {S}ymposium on {S}ymbolic and {A}lgebraic {C}omputation, ACM Press, 2015,
  pp.~53--60.

\bibitem[Bro92]{Bronstein92}
Manuel Bronstein, \emph{Linear ordinary differential equations: Breaking
  through the order 2 barrier}, I{SSAC}'92---{P}roceedings of the 17th
  {I}nternational {S}ymposium on {S}ymbolic and {A}lgebraic {C}omputation, ACM
  Press, {1992}, pp.~42–--48.

\bibitem[Bro94]{Bronstein94}
\bysame, \emph{An improved algorithm for factoring linear ordinary differential
  operators}, I{SSAC}'94---{P}roceedings of the 19th {I}nternational
  {S}ymposium on {S}ymbolic and {A}lgebraic {C}omputation, ACM Press, {1994},
  pp.~336--340.

\bibitem[BRS21]{BoRiSa21}
Alin Bostan, Tanguy Rivoal, and Bruno Salvy, \emph{Explicit degree bounds for
  right factors of linear differential operators}, Bull. Lond. Math. Soc.
  \textbf{53} (2021), no.~1, 53--62.

\bibitem[CGM22]{ChGoMe22}
Fr\'{e}d\'{e}ric Chyzak, Alexandre Goyer, and Marc Mezzarobba,
  \emph{Symbolic-numeric factorization of differential operators}, I{SSAC}
  '22---{P}roceedings of the 2022 {I}nternational {S}ymposium on {S}ymbolic and
  {A}lgebraic {C}omputation, ACM, New York, [2022] \copyright 2022, pp.~73--82.

\bibitem[CJKS13]{ChJaKaSi13}
Shaoshi Chen, Maximilian Jaroschek, Manuel Kauers, and Michael~F. Singer,
  \emph{Desingularization explains order-degree curves for {O}re operators},
  I{SSAC}'13---{P}roceedings of the 38th {I}nternational {S}ymposium on
  {S}ymbolic and {A}lgebraic {C}omputation, ACM, New York, 2013, pp.~157--164.

\bibitem[CK02]{ChKo02}
R.~C. Churchill and Jerald~J. Kovacic, \emph{Cyclic vectors}, Differential
  algebra and related topics ({N}ewark, {NJ}, 2000), World Sci. Publ., River
  Edge, NJ, 2002, pp.~191--218.

\bibitem[CK12a]{ChKa12b}
Shaoshi Chen and Manuel Kauers, \emph{Order-degree curves for hypergeometric
  creative telescoping}, I{SSAC}'12---{P}roceedings of the 37th {I}nternational
  {S}ymposium on {S}ymbolic and {A}lgebraic {C}omputation, ACM, New York, 2012,
  pp.~122--129.

\bibitem[CK12b]{ChKa12a}
\bysame, \emph{Trading order for degree in creative telescoping}, J. Symbolic
  Comput. \textbf{47} (2012), no.~8, 968--995.

\bibitem[CKS16]{ChKaSi16}
Shaoshi Chen, Manuel Kauers, and Michael~F. Singer, \emph{Desingularization of
  {O}re operators}, J. Symbolic Comput. \textbf{74} (2016), 617--626.

\bibitem[Cop36]{Cope36}
Frances~Thorndike Cope, \emph{Formal {S}olutions of {I}rregular {L}inear
  {D}ifferential {E}quations. {P}art {II}}, Amer. J. Math. \textbf{58} (1936),
  no.~1, 130--140.

\bibitem[CS98]{ChSa98}
Fr\'{e}d\'{e}ric Chyzak and Bruno Salvy, \emph{Non-commutative elimination in
  {O}re algebras proves multivariate identities}, J. Symbolic Comput.
  \textbf{26} (1998), no.~2, 187--227.

\bibitem[CvH04]{ClHo04}
Thomas Cluzeau and Mark van Hoeij, \emph{A modular algorithm for computing the
  exponential solutions of a linear differential operator}, J. Symbolic Comput.
  \textbf{38} (2004), no.~3, 1043--1076.

\bibitem[DAS15]{DehghanAssariShah2015}
A.~Dehghan, S.~M. Assari, and M.~Shah, \emph{Gmmcp tracker: Globally optimal
  generalized maximum multi clique problem for multiple object tracking}, 2015
  IEEE Conference on Computer Vision and Pattern Recognition (CVPR), 2015,
  pp.~4091--4099.

\bibitem[DDDT82]{DeDiTo92}
J.~Della~Dora, Cl. Dicrescenzo, and E.~Tournier, \emph{An algorithm to obtain
  formal solutions of a linear homogeneous differential equation at an
  irregular singular point}, Computer algebra ({M}arseille, 1982), Lecture
  Notes in Comput. Sci., vol. 144, Springer, Berlin-New York, 1982,
  pp.~273--280.

\bibitem[Fab85]{Fabry1885}
E.~Fabry, \emph{Sur les int{\'e}grales des {\'e}quations diff{\'e}rentielles
  lin{\'e}aires {\`a} coefficients rationnels}, Th{\`e}se de doctorat {\`e}s
  sciences math{\'e}matiques, Facult{\'e} des {S}ciences de {P}aris, July 1885.

\bibitem[Fab88a]{Fabry1888a}
Fabry, \emph{R{\'e}ductibilit{\'e} des {\'e}quations diff{\'e}rentielles
  lin{\'e}aires}, Bull. Soc. Math. France \textbf{16} (1888), 135--142.

\bibitem[Fab88b]{Fabry1888}
E.~Fabry, \emph{R{\'e}ducibilit{\'e} des {\'e}quations diff{\'e}rentielles
  lin{\'e}aires}, C. R. Acad. Sci., Paris \textbf{106} (1888), 732--734.

\bibitem[Fal83]{Faltings83}
G.~Faltings, \emph{Endlichkeitss\"{a}tze f\"{u}r abelsche {V}ariet\"{a}ten
  \"{u}ber {Z}ahlk\"{o}rpern}, Invent. Math. \textbf{73} (1983), no.~3,
  349--366.

\bibitem[FJ21]{FrJo21}
Javier Fres\'{a}n and Peter Jossen, \emph{A non-hypergeometric {$E$}-function},
  Ann. of Math. (2) \textbf{194} (2021), no.~3, 903--942.

\bibitem[FL02]{FiLa02}
M.~Filaseta and T.-Y. Lam, \emph{On the irreducibility of the generalized
  {L}aguerre polynomials}, Acta Arith. \textbf{105} (2002), no.~2, 177--182.

\bibitem[FR14]{FiRiCommentarii}
St{\'e}phane Fischler and Tanguy Rivoal, \emph{On the values of
  {{\(G\)}}-functions}, Comment. Math. Helv. \textbf{89} (2014), no.~2,
  313--341 (English).

\bibitem[FR16]{FiRiEcoleX}
\bysame, \emph{Arithmetic theory of {{\(E\)}}-operators}, J. {\'E}c. Polytech.,
  Math. \textbf{3} (2016), 31--65 (English).

\bibitem[Gor00]{gorelov1}
V.~A. Gorelov, \emph{Algebraic independence of the values of
  {{\(E\)}}-functions at singular points and {Siegel}'s conjecture}, Math.
  Notes \textbf{67} (2000), no.~2, 138--151 (English).

\bibitem[Gor04]{gorelov2}
\bysame, \emph{On the {Siegel} conjecture for second-order homogeneous linear
  differential equations}, Math. Notes \textbf{75} (2004), no.~4, 513--529
  (English).

\bibitem[Gri90]{Grigoriev90}
D.~Yu. Grigoriev, \emph{Complexity of factoring and calculating the {GCD} of
  linear ordinary differential operators}, J. Symbolic Comput. \textbf{10}
  (1990), no.~1, 7--37.

\bibitem[Hil15]{Hilb1915}
E.~Hilb, \emph{{Lineare Differentialgleichungen im komplexen Gebiet}},
  pp.~471--562, 1915, Chapter II.B.5 of ``{B. Analysis der komplexen
  Gr{\"o}{\ss}en}'' in \emph{Encyklop{\"a}die der Mathematischen Wissenschaften
  mit Einschluss ihrer Anwendungen: Zweiter Band in Drei Teilen Analysis},
  Vieweg+Teubner Verlag, 1921.

\bibitem[HS00]{HiSi00}
Marc Hindry and Joseph~H. Silverman, \emph{Diophantine geometry}, Graduate
  Texts in Mathematics, vol. 201, Springer-Verlag, New York, 2000, An
  introduction.

\bibitem[HW06]{HaWo06}
Farshid Hajir and Siman Wong, \emph{Specializations of one-parameter families
  of polynomials}, Ann. Inst. Fourier (Grenoble) \textbf{56} (2006), no.~4,
  1127--1163.

\bibitem[Inc56]{Ince1956}
E.~L. Ince, \emph{Ordinary differential equations}, Dover Publications, New
  York, 1956, Reprint of the 1926 edition.

\bibitem[JKM13]{JoKaMe13}
Fredrik Johansson, Manuel Kauers, and Marc Mezzarobba, \emph{Finding
  hyperexponential solutions of linear {ODE}s by numerical evaluation},
  I{SSAC}'13---{P}roceedings of the 38th {I}nternational {S}ymposium on
  {S}ymbolic and {A}lgebraic {C}omputation, ACM, New York, 2013, pp.~211--218.

\bibitem[JNV20]{JeannerodNeigerVillard2020}
Claude-Pierre Jeannerod, Vincent Neiger, and Gilles Villard, \emph{Fast
  computation of approximant bases in canonical form}, J. Symbolic Comput.
  \textbf{98} (2020), 192--224.

\bibitem[Kau14]{Kauers2014}
Manuel Kauers, \emph{Bounds for {D}-finite closure properties},
  I{SSAC}'14---{P}roceedings of the 39th {I}nternational {S}ymposium on
  {S}ymbolic and {A}lgebraic {C}omputation, ACM, New York, 2014, pp.~288--295.

\bibitem[Lep21]{lepetit}
Gabriel Lepetit, \emph{{Le théorème d'André-Chudnovsky-Katz au sens large}},
  North-W. Eur. J. of Math. \textbf{7} (2021), 83--149.

\bibitem[Lio33]{Liouville1833}
Joseph Liouville, \emph{Second m{\'e}moire sur la d{\'e}termination des
  int{\'e}grales dont la valeur est alg{\'e}brique}, J. {\'E}c. Polytech.
  \textbf{14} (1833), 149--193.

\bibitem[LM95]{LoMu95}
Lee Lorch and Martin~E. Muldoon, \emph{Transcendentality of zeros of higher
  derivatives of functions involving {B}essel functions}, Internat. J. Math.
  Math. Sci. \textbf{18} (1995), no.~3, 551--560.

\bibitem[LVO00]{HuOs00}
Huishi Li and Freddy Van~Oystaeyen, \emph{Elimination of variables in linear
  solvable polynomial algebras and {$\partial$}-holonomicity}, J. Algebra
  \textbf{234} (2000), no.~1, 101--127.

\bibitem[Mal79]{Malgrange1979}
B.~Malgrange, \emph{Sur la r{\'e}duction formelle des {\'e}quations
  diff{\'e}rentielles {\`a} singularit{\'e}s irr{\'e}guli{\`e}res}, Preprint,
  1979.

\bibitem[Mar91a]{Markoff1891b}
Andr{\'e} Markoff, \emph{Sur la th{\'e}orie des {\'e}quations
  diff{\'e}rentielles lin{\'e}aires}, C. R. Acad. Sci., Paris \textbf{113}
  (1891), 790--791.

\bibitem[Mar91b]{Markoff1891c}
\bysame, \emph{Sur la th{\'e}orie des {\'e}quations diff{\'e}rentielles
  lin{\'e}aires}, C. R. Acad. Sci., Paris \textbf{113} (1891), 1024--1025.

\bibitem[Mar91c]{Markoff1891a}
\bysame, \emph{Sur les {\'e}quations diff{\'e}rentielles lin{\'e}aires}, C. R.
  Acad. Sci., Paris \textbf{113} (1891), 685--688.

\bibitem[Mor22]{Mordell1922}
L.~J. Mordell, \emph{On the rational solutions of the indeterminate equations
  of the third and fourth degrees}, Math. Proc. Cambridge Philos. Soc.
  \textbf{21} (1922), 179–--192.

\bibitem[Mos60]{Moser60}
J\"{u}rgen Moser, \emph{The order of a singularity in {F}uchs' theory}, Math. Z
  \textbf{72} (1959/1960), 379--398.

\bibitem[NS96]{NeS96}
Yu.~V. Nesterenko and A.~B. Shidlovski\u{\i}, \emph{On the linear independence
  of values of {$E$}-functions}, Mat. Sb. \textbf{187} (1996), no.~8, 93--108.

\bibitem[Ogi67]{Ogigova67}
H.~Ogigova, \emph{Les lettres de {C}h. {H}ermite \`a {A}. {M}arkoff,
  1885--1899}, Rev. Histoire Sci. Appl. \textbf{20} (1967), 2--32.

\bibitem[OOL{\etalchar{+}}22]{dlmf}
F.~W.~J. Olver, A.~B. {Olde Daalhuis}, D.~W. Lozier, B.~I. Schneider, R.~F.
  Boisvert, C.~W. Clark, B.~R. Miller, B.~V. Saunders, H.~S. Cohl, and eds.
  M.~A.~McClain, \emph{{NIST} digital library of mathematical functions},
  \url{http://dlmf.nist.gov/}, 2022.

\bibitem[Ore33]{Ore1933}
Oystein Ore, \emph{Theory of non-commutative polynomials}, Ann. of Math.
  \textbf{34} (1933), 480--508.

\bibitem[Pai91]{Painleve1891}
P.~Painlev{\'e}, \emph{Remarque sur une {Communication} de {M}. {Markoff},
  relative {\`a} des {\'e}quations diff{\'e}rentielles lin{\'e}aires}, C. R.
  Acad. Sci., Paris \textbf{113} (1891), 739--740 (French).

\bibitem[Poo60]{Poole1960}
E.~G.~C. Poole, \emph{Introduction to the theory of linear differential
  equations}, Dover Publications Inc., New York, 1960.

\bibitem[Pop72]{Popov1972}
V.~M. Popov, \emph{Invariant description of linear, time-invariant controllable
  systems}, SIAM J. Control \textbf{10} (1972), 252--264.

\bibitem[Riv16]{rivoalnote2016}
T.~Rivoal, \emph{Valeurs alg\'ebriques de {$E$}-fonctions aux points
  alg\'ebriques}, Unpublished note (2016), 4 pages.

\bibitem[Sch97]{Schlesinger1897}
Ludwig Schlesinger, \emph{Handbuch der {T}heorie der linearen
  {D}ifferentialgleichungen. {I}n zwei {B}\"{a}nden. {B}and {II}, {T}heil 1},
  Teubner, 1897.

\bibitem[Sch09]{Schlesinger1909}
L.~Schlesinger, \emph{{Bericht \"{u}ber die Entwickelung der Theorie der
  linearen Differentialgleichungen seit 1865}}, Jahresber. Deutsch.
  Math.-Verein. \textbf{18} (1909), 133--266.

\bibitem[Sch31]{Schur1931}
J.~Schur, \emph{Affektlose {G}leichungen in der {T}heorie der {L}aguerreschen
  und {H}ermiteschen {P}olynome}, J. Reine Angew. Math. \textbf{165} (1931),
  52--58.

\bibitem[Sch86]{Schrijver1986}
Alexander Schrijver, \emph{Theory of linear and integer programming},
  Wiley-Interscience Series in Discrete Mathematics, John Wiley \& Sons, Ltd.,
  Chichester, 1986, A Wiley-Interscience Publication.

\bibitem[Sch89]{Schwarz89}
F.~Schwarz, \emph{A factorization algorithm for linear ordinary differential
  equations}, I{SSAC}'89---{P}roceedings of the 14th {I}nternational
  {S}ymposium on {S}ymbolic and {A}lgebraic {C}omputation, ACM Press, {1989},
  pp.~17--–25.

\bibitem[Shi89]{shidl}
Andrei~Borisovich Shidlovskii, \emph{Transcendental numbers}, De Gruyter
  Studies in Mathematics, vol.~12, Walter de Gruyter \& Co., Berlin, 1989,
  Translated from the Russian by Neal Koblitz, with a foreword by W. Dale
  Brownawell.

\bibitem[Sie14]{siegel}
Carl~L. Siegel, \emph{\"{U}ber einige {A}nwendungen diophantischer
  {A}pproximationen [reprint of {A}bhandlungen der {P}reu\ss ischen {A}kademie
  der {W}issenschaften. {P}hysikalisch-mathematische {K}lasse 1929, {N}r. 1]},
  On some applications of {D}iophantine approximations, Quad./Monogr., vol.~2,
  Ed. Norm., Pisa, 2014, pp.~81--138.

\bibitem[Sin93]{Singer93}
Michael~F. Singer, \emph{Moduli of linear differential equations on the
  {R}iemann sphere with fixed {G}alois groups}, Pacific J. Math. \textbf{160}
  (1993), no.~2, 343--395.

\bibitem[Sin96]{Singer96}
\bysame, \emph{Testing reducibility of linear differential operators: a
  group-theoretic perspective}, Appl. Algebra Engrg. Comm. Comput. \textbf{7}
  (1996), no.~2, 77--104.

\bibitem[Sin09]{Singer09}
\bysame, \emph{Introduction to the {G}alois theory of linear differential
  equations}, Algebraic theory of differential equations, London Math. Soc.
  Lecture Note Ser., vol. 357, Cambridge Univ. Press, Cambridge, 2009,
  pp.~1--82.

\bibitem[Sze75]{Szego75}
G\'{a}bor Szeg\H{o}, \emph{Orthogonal polynomials}, fourth ed., American
  Mathematical Society Colloquium Publications, Vol. XXIII, American
  Mathematical Society, Providence, R.I., 1975.

\bibitem[Tsa94]{Tsarev94}
Sergey~P. Tsarev, \emph{Problems that appear during factorization of ordinary
  linear differential operators}, Program. Comput. Softw. \textbf{20} (1994),
  no.~1, 67--75.

\bibitem[Tsa00]{Tsai00}
Harrison Tsai, \emph{Weyl closure of a linear differential operator}, vol.~29,
  2000, Symbolic computation in algebra, analysis, and geometry (Berkeley, CA,
  1998), pp.~747--775.

\bibitem[Vaz01]{Vazirani2001}
Vijay~V. Vazirani, \emph{Approximation algorithms}, Springer-Verlag, Berlin,
  2001.

\bibitem[VBB92]{BarelBultheel1992}
M.~Van~Barel and A.~Bultheel, \emph{A general module-theoretic framework for
  vector {M}-{P}ad\'e and matrix rational interpolation}, Numer. Algorithms
  \textbf{3} (1992), no.~1-4, 451--461.

\bibitem[vdH07]{Hoeven07}
Joris van~der Hoeven, \emph{Around the numeric-symbolic computation of
  differential {G}alois groups}, J. Symbolic Comput. \textbf{42} (2007),
  no.~1-2, 236--264.

\bibitem[vdH16]{Hoeven16}
\bysame, \emph{On the complexity of skew arithmetic}, Appl. Algebra Engrg.
  Comm. Comput. \textbf{27} (2016), no.~2, 105--122.

\bibitem[vdPS03]{PuSi03}
Marius van~der Put and Michael~F. Singer, \emph{Galois theory of linear
  differential equations}, Grundlehren der Mathematischen Wissenschaften
  [Fundamental Principles of Mathematical Sciences], vol. 328, Springer-Verlag,
  Berlin, 2003.

\bibitem[vH96]{Hoeij96}
Mark van Hoeij, \emph{Rational solutions of the mixed differential equation and
  its application to factorization of differential operators},
  I{SSAC}'96---{P}roceedings of the 21st {I}nternational {S}ymposium on
  {S}ymbolic and {A}lgebraic {C}omputation, ACM Press, 1996, pp.~219--–225.

\bibitem[vH97a]{Hoeij1997a}
\bysame, \emph{Factorization of differential operators with rational functions
  coefficients}, J. Symbolic Comput. \textbf{24} (1997), no.~5, 537--561.

\bibitem[vH97b]{Hoeij1997b}
\bysame, \emph{Formal solutions and factorization of differential operators
  with power series coefficients}, J. Symbolic Comput. \textbf{24} (1997),
  no.~1, 1--30.

\bibitem[Vid07]{Vidunas07}
Raimundas Vid\={u}nas, \emph{Degenerate {G}auss hypergeometric functions},
  Kyushu J. Math. \textbf{61} (2007), no.~1, 109--135.

\bibitem[Won05]{Wong05}
Siman Wong, \emph{On the genus of generalized {L}aguerre polynomials}, J.
  Algebra \textbf{288} (2005), no.~2, 392--399.

\bibitem[Zei91]{Zeilberger91}
Doron Zeilberger, \emph{The method of creative telescoping}, J. Symbolic
  Comput. \textbf{11} (1991), no.~3, 195--204.

\end{thebibliography}
\end{document}